\tikzset{
	semithick,
	node distance = 2cm,
	dot/.style={circle,fill,inner sep=2pt}
}
\tikzset{
	side by side/.style 2 args={
		line width=2pt,
		#1,
		postaction={
			clip,postaction={draw,#2}
		}
	}
}
\tikzstyle{every state}=[draw = black,thick,fill = white,minimum size = 4mm]
\tikzstyle{selected edge} = [draw,line width=2pt,-,red!50]
\tikzset{
	%	vertex/.style={circle,draw,minimum size=1.5em},
	edge/.style={->,> = latex'}
}
\newcommand{\comment}[1]{}
\newcommand{\op}{{\mathbb{I}_{p,t}}}
\newcommand{\cA}{{\mathcal{A}}}
\newcommand{\cG}{{\mathcal{G}}}
\newcommand{\cF}{{\mathcal{F}}}
\newcommand{\cC}{{\mathcal{C}}}
\newcommand{\cP}{{\mathcal{P}}}
\newcommand{\cS}{{\mathcal{S}}}
\newcommand{\OPT}{\textnormal{OPT}}
\newcommand{\E}{{\mathbb{E}}}
\newcommand{\om}{{\Omega}}
\newcommand{\cct}{{\cC(I_t)}}
	\Crefname{obs}{Observation}{Observations}
		\Crefname{cor}{Corollary}{Corollaries}
\newcommand{\bbc}{{\bar{\cC}(I_t)}}
\begin{document}
\newtheorem{thm}{Theorem}[section]
\newtheorem{prop}[thm]{Proposition}
\newtheorem{assm}[thm]{Assumption}
\newtheorem{lem}[thm]{Lemma}
\newtheorem{obs}[thm]{Observation}
\newtheorem{cor}[thm]{Corollary}
 \newtheorem{lemma}[thm]{Lemma}
  \newtheorem{definition}[thm]{Definition}
 \newtheorem{theorem}[thm]{Theorem}
 \newtheorem{proposition}[thm]{Proposition}
 \newtheorem{claim}[thm]{Claim}
\newtheorem{defn}[thm]{Definition}
\newcommand{\ariel}[1]{{\color{red} (Ariel :#1)}}
\def \II   {{\mathcal I}}
\newcommand{\one}{\mathbbm{1}}
	\def\claimproof{\proof}
\def\endclaimproof{\hfill$\square$\\}
\renewcommand\qedsymbol{$\blacksquare$}

\title{ {\bf Non-Linear Paging}}

%\author{Anonymous Submission ICALP 24}

\author{Ilan Doron-Arad \and Joseph (Seffi) Naor} %\thanks{Computer Science Department, 
%	Technion, Haifa, Israel, 209007939.\texttt{idoron-arad@cs.technion.ac.il}}}

%\and	
%	Ariel Kulik\thanks{CISPA Helmholtz Center for Information Security, Germany. \texttt{ariel.kulik@cispa.de}} 
%\and 
%Hadas Shachnai\thanks{Computer Science Department, 
%Technion, Haifa, Israel. \texttt{hadas@cs.technion.ac.il}}

\maketitle

\begin{abstract}
We formulate and study {\em non-linear paging} - a broad model of online paging where the size of subsets of pages is determined by a monotone non-linear set function of the pages. This model captures the well-studied classic weighted paging and generalized paging problems, and also {\em submodular} and {\em supermodular} paging, studied here for the first time, that have a range of applications from virtual memory to machine learning. 

Unlike classic paging, the cache threshold parameter $k$ does not yield good competitive ratios for non-linear paging. Instead, we introduce a novel parameter $\ell$ that generalizes the notion of cache size to the non-linear setting. We obtain a tight deterministic $\ell$-competitive algorithm for general non-linear paging and a $o\left(\log^2 (\ell)\right)$-competitive lower bound for randomized algorithms. Our algorithm is based on a new generic LP for the problem that captures both submodular and supermodular paging, in contrast to LPs used for submodular cover settings. 
We finally focus on the supermodular paging problem, which is a variant of online set cover and online submodular cover, where sets are repeatedly requested to be removed from the cover. We obtain  polylogarithmic lower and upper bounds and an offline approximation algorithm.  

% with a {\em non-linear}  with an arbitrary {\em non-linear} function
\end{abstract}

\section{Introduction}
In the well studied {\em paging} problem, we are given a collection of $n$ pages and a cache that can contain up to $k$ pages simultaneously, where $k<n$. At each time step, one of the pages is requested. If the requested page is already in the cache, the request is immediately served. Otherwise, there is a {\em cache miss} and the requested page is fetched to the cache; to ensure that the cache contains at most $k$ pages, some other page is potentially evicted. In the most fundamental model, the goal is to minimize the number of cache misses (or equivalently, number of evictions).

\comment{
In more general models, pages may have different sizes and (eviction) costs (see, e.g., \cite{adamaszek2018log,bansal2012randomized}) and then the sum of the sizes of the pages in the cache cannot exceed the cache size. However, a linear function over the page sizes that defines cache feasibility fails to capture scenarios with more involved relations between subsets of pages that can reside together in the cache. For example, consider a system in which pages share parts of their memory; here, only the missing memory parts of the requested page contribute to the increment of the cache size. This setting can be modeled using a {\em submodular} function to define cache feasibility. 
In a second example, items stored in cache may have running-time dependencies inflicting additional overhead in their mutual storage demand. Thus, the size of a page may increase when certain other pages are in the cache. {\bf Seffi: do you have a reference?.} 
A {\em supermodular} size function can describe cache feasibility in this case. These are two out of many examples for cache systems that exhibit a {\em non-linear} behavior. We motivate our model in Section in detail \ref{sec:motivation}.  
%{\bf Seffi: Maybe we should add another example? Rule caching?.} (e.g., \cite{lilja1993cache, bennett1990adaptive, wilson1987hierarchical, suh2004dynamic})
}
In more general models, pages may have different sizes and (eviction) costs (see, e.g., \cite{adamaszek2018log,bansal2012randomized}) and then the sum of the sizes of the pages in cache cannot exceed its capacity. However, a linear function over page sizes that defines cache feasibility fails to capture scenarios with more involved relations between subsets of pages that can reside together in cache. %Two examples of cache systems that exhibit {\em non-linear} behavior are the following.
Consider a system in which pages share parts of their memory and then only the missing memory parts of a requested page can contribute to the increase in cache size. This setting can be modeled using a {\em submodular} function to define cache feasibility.
Another example is a setting in which items stored in cache have dependencies yielding additional overhead in their mutual storage demand. The {\em rule caching} problem is one such setting and it has been well studied in networking \cite{dong2015rule,yan2014cab,sheu2016wildcard,huang2015cost, li2019tale,stonebraker1990rules,li2015fdrc,gao2021ovs,cheng2018switch,rottenstreich2016optimal,li2020taming,gamage2012high,yan2018adaptive,rottenstreich2020cooperative,rastegar2020rule,yang2020pipecache,doron2024approximations}. %Thus, the size of a page may increase when certain other pages are in the cache. {\bf Seffi: do you have a reference?.} 
%A {\em supermodular} size function can describe cache feasibility in this case. 
 
 Our focus in this paper will be on settings where there is such non-linear behavior.
 We will further motivate our model in detail in \Cref{sec:motivation}.  

\subsection{Our Model}

%\subsubsection*{Preliminaries}
Before presenting our model, we give some required preliminary definitions. % In addition, for any $w \in \mathbb{N}$, let $[w] = \{1,2,\ldots,w\}$. 
Let $\cP$ be a set and let $f:2^{\cP} \rightarrow \mathbb{N}$ be a set function of $\cP$. The function $f$ is called {\em monotone} if for every $S \subseteq \cP$ and $S'\subseteq S$ it holds that $f(S') \leq S$. In addition, $f$ is called {\em submodular} if for every $S,S' \subseteq \cP$ it holds that $f(S)+f(S') \geq f\left(S \cup S'\right)+f\left(S \cap S'\right)$. Conversely, $f$ is called {\em supermodular} if for every $S,S' \subseteq \cP$ it holds that $f(S)+f(S') \leq f\left(S \cup S'\right)+f\left(S \cap S'\right)$.

In this work, we introduce a very general model of paging with an arbitrary function defining cache feasibility. In the {\em non-linear paging} problem, we are given a collection $\cP$ of $n$ pages where each page $p \in \cP$ has a fixed eviction cost $c(p)$. We are also given a monotone {\em feasibility} function $f:2^{\cP} \rightarrow \mathbb{N}$ that assigns a value to every subset of pages, indicating their size. Finally, we are given a cache threshold $k$. As in standard paging, in each time step $t$ there is a request $p_t$ for one of the pages. If $p_t$ is already in the cache, the request is immediately served. Otherwise, $p_t$ is fetched to the cache and possibly some subset of pages is evicted to ensure that the set of pages in the cache, denoted by $S_{t}$, is feasible, i.e., $f\left( S_{t}  \right) \leq k$. The goal is to minimize the total cost incurred from page evictions. 
%{\bf Seffi: We may have discussed this in the past. Doesn't it make more sense to denote by $S_t$ the cache contents after serving request $p_t$? Then $S_0$ is the initial contents, $S_1$ after serving $p_1$, etc.}%cache misses. 
%
%We refer to this model as {\em non-linear paging}, even though it encompasses linear feasibility functions as a special case. Namely, 
The classic paging problem is obtained by setting $f(S) = |S|$ for all $S \subseteq \cP$. Other interesting applications of our model are described below. 

\begin{itemize}
\item  {\em Generalized Paging} \cite{bansal2012randomized,adamaszek2018log}. Here the feasibility function is linear; that is, for every $S \subseteq \cP$ it holds that $f(S) = \sum_{p \in S} f(p)$, where $f(p)$ is the size of page $p \in \cP$.
 %$f(p) \in \{1,2,\ldots,k\}$ for all $p \in \cP$.
\item 	{\em Submodular Paging}. The feasibility function is {\em submodular}. A natural application of this variant is to settings where pages share memory items (see \Cref{sec:motivation}).

	%  Consider a cache that can store up to $k$ {\em atoms} from a larger set $A = \{a_1,\ldots, a_n\}$. Each page $p$ in the instance has a corresponding subset of atoms $a(p) \subseteq A$. The feasibility function ensures that a collection $S$ of pages contains jointly at most $k$ atoms: $f(S) = \left| \bigcup_{p \in S} a(p) \right|$, which is a submodular function.   

\item 	{\em Supermodular} Paging. The feasibility function is {\em supermodular}, implying a submodular cover function for pages remaining out of the cache. This setting effectively captures online submodular covering problems \cite{alon2003online,korman2004use,gupta2020online}. Supermodular paging will be a main focus of our paper.  
\end{itemize}

%   , making the supermodular paging problem a generalization of the well-known {\em set cover} and {\em submodular cover} \cite{wolsey1982analysis} (we give more details in \Cref{sec:hardness}). %\cite{alon2003online}. 
   
Supermodular paging is a variant of {\em online set cover} \cite{alon2003online} and {\em online submodular cover}~\cite{gupta2020online}. In online set cover, we are given  a ground set $X$ and a family $\cS$ of subsets of $X$. Requests for  elements of $X$ arrive online; if a requested element is not already covered by a previously chosen set,  a set $S \in \cS$ containing it is chosen, paying a cost $c(S)$. The goal is to minimize the cost of selected sets. Online submodular cover generalizes online set cover - the goal is to cover a general (monotone) submodular function with an increasing cover demand over time. %{\bf Seffi: the next sentences on supermodular paging are not clear. What are the "above problems"? (even more challenging than the above problems).} 
In supermodular paging (submodular cover), the cover demand does not change over time, as the same page (a set $S \in \cS$ in online set cover) may be requested (removed from the cover) multiple times. We later show (see \Cref{sec:hardness}) that supermodular paging is even more challenging than (online) submodular cover. %the above problems. 

   \subsection{Motivation}
   \label{sec:motivation}
   
   Non-linear paging generalizes several fundamental caching problems, capturing many real world applications. Besides known applications of the classic paging models with linear feasibility functions \cite{sleator1985amortized,fiat1991competitive,bansal2012primal,bansal2012randomized,adamaszek2018log}, there are many scenarios in which the interaction between pages is non-linear, requiring the more general non-linear paging model. 
   %As we show in \Cref{sec:hardness}, 
   As already indicated, supermodular paging roughly generalizes online set cover and therefore has both theoretical and practical importance \cite{gupta2020online,alon2003online}. We describe below several interesting applications of non-linear paging.

   \comment{
   A major motivation for the study of non-linear paging comes from {\em rule caching} in modern networking \cite{dong2015rule,yan2014cab,sheu2016wildcard,huang2015cost, li2019tale,stonebraker1990rules,li2015fdrc,gao2021ovs,cheng2018switch,rottenstreich2016optimal,li2020taming,gamage2012high,yan2018adaptive,rottenstreich2020cooperative,rastegar2020rule,yang2020pipecache}. In software defined networks (SDN), traffic flow is controlled via a logically centralized controller that uses packet processing {\em rules} to handle switches \cite{katta2016cacheflow}. In most practical settings, the number of rules can be very high, and at the same time most of the traffic considers only a negligible portion of the rules \cite{sarrar2012leveraging}. Therefore, caching a subset of up to $k$ rules is often used to accelerate processing times of packets. Since rules may match on overlapping sets of packets, standard caching algorithms cannot be used. Instead, if a certain rule is loaded to the cache, then all rules depending on it are required to be in the cache. This is typically modeled by a directed graph $G = (V,E)$ in which vertices correspond to rules, and a directed edge $(u,v)$ indicates that if $v$ is in the cache then $u$ is also in the cache. 
   }
   %for every dependence

  \comment{
  We show that the above is captured by non-linear paging. For each rule $r \in V$, let $P(r)$ be the set of all predecessors of $r$ in the graph. In a non-linear paging formulation, the requests are for sets $P(r)$ defined by rule $r \in V$. Define a function $f:2^{\cP} \rightarrow \mathbb{N}$, where $\cP = \{P(r)~|~r \in V\}$ describes the sets of predecessors of every rule, such that for every subset of rules $S \subseteq V$, $f(S) = \left| \bigcup_{r \in S} P(r) \right|$ is the minimum cardinality of a set of rules containing $S$ with no incoming edges from outside; this set satisfies all constraints. At every time step $t$ a rule $r \in V$ is requested along with all of its predecessors $P(r)$, and all missing predecessors are brought to the cache, potentially at the expense of other rules. It can be shown that $f$ is monotone. Clearly, $S \subseteq V$ is a subset of rules that can be assigned to the cache along with all of their predecessors if and only if $f(S) \leq k$. Thus, non-linear paging soundly describes the well-known rule caching problem. 
}

A major motivation for the study of non-linear paging %Another setting in which non-linear paging is valuable, in particular submodular paging, 
is caching in shared memory systems (e.g., \cite{lilja1993cache, bennett1990adaptive, wilson1987hierarchical, suh2004dynamic}).  Each process in a multi-process memory system is associated with a {\em virtual memory} \cite{denning1970virtual}, providing the illusion that it has a much larger memory.  In some shared memory systems with multiple processes, caching policies are defined over entire processes, that is, the entire virtual address space of a process can be taken to the cache. Since the virtual memory of two processes typically overlaps in physical memory, the increase in (physical) memory in cache is larger if the cache is empty, as the entire memory of a process is loaded to the cache. In contrast, if the cache is nearly full, and a process is loaded to the cache, the increase in total size is smaller, as most of the virtual memory addresses are already in cache. In a similar vein, when caching hot data at the network edge, to avoid serving requests from a remote cloud, space efficiency of similar files is achieved through {\em deduplication} (e.g. \cite{LiL20}), which is very similar to the way overlaps are handled in virtual memory.
%In this setting, space efficiency together with data popularity and future request rates need to be taken into account, see e.g. \cite{LiL20}.
   
Thus, caching in shared memory systems is a special case of submodular paging. More formally, consider a cache that can store up to $k$ {\em atoms} from a larger set $A = \{a_1,\ldots, a_n\}$ spanning the physical memory. Each page (process) $p$ contains a subset of atoms $a(p) \subseteq A$ corresponding to the physical memory to which process $p$ is mapped. The feasibility function $f$ ensures that a collection $S$ of pages contains jointly at most $k$ atoms. Thus, $f(S) = \left| \bigcup_{p \in S} a(p) \right|$, and it is a submodular function. See \Cref{fig:X} for an illustration. %of the above. 

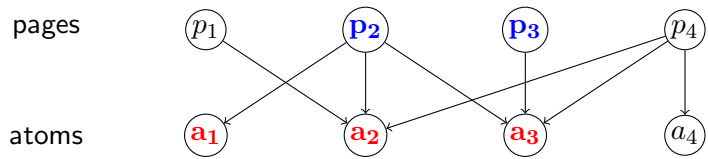
\begin{figure}
  	%	\hspace{4cm}{
  		\centering
  		\begin{tikzpicture}[scale=1.4, every node/.style={draw, circle, inner sep=1pt}]
  			% first bipartite graph
  			\node (p2) at (5.5,-0.5) {$\bf \textcolor{blue}{p_2}$};
  			\node (p1) at (4,-0.5) {$\textcolor{black}{p_1}$};
  			\node (p3) at (7,-0.5) {$\bf \textcolor{blue}{p_3}$};
  			\node (a1) at (4,-1.5) {$\bf \textcolor{red}{a_1}$};
  			
  			\node (a2) at (5.5,-1.5) {$\bf \textcolor{red}{a_2}$};
  			\node (a3) at (7,-1.5) {$\bf \textcolor{red}{a_3}$};
  				\node (a4) at (8.5,-1.5) {$a_4$};
  				\node (p4) at (8.5,-0.5) {$p_4$};
  			\draw[->] (p2) -- (a2);
  			%		\draw (p3) -- (a2);
  		%	\draw[line width=2pt, color=red] (p3) -- (a3);
  			
  			\draw[->] (p3) -- (a3);
  			\draw[->] (p2) -- (a1);
  			\draw[->] (p1) -- (a2);
  			\draw[->] (p2) -- (a3);
  			
  			\draw[->] (p4) -- (a2);
  			\draw[->] (p4) -- (a3);
  			\draw[->] (p4) -- (a4);

  			\node[draw=none] at (2.5, -0.5) {$\textsf{pages}$};
  			
  			\node[draw=none] at (2.5, -1.5) {$\textsf{atoms}$};
  			
  			%			\node[draw=none] at (10.5, -2) {$H = (L,R,\bar{E})$};
  			%			
  			%			\node[draw=none] at (10, -1.5) {$d(i_1) \leq d(i_{B_2})$};
  			%%			
  			%			\node[draw=none] at (10, -0.5) {$w(\ell_i) = 20$};
  			%%			
  			%			\node[draw=none] at (10, -1) {$B = 74$};
  		\end{tikzpicture}
  		%\vspace{-1.5cm} 
  		\caption{\label{fig:X} An illustration of a shared memory systems with four pages and four shared memory units (``atoms"). The cache is of size $k = 3$ and contains pages $p_2,p_3$ (in blue) whose shared memory is of size $3$ - the three red atoms. Observe that either fetching $p_1$ or evicting it will not change the number of atoms stored in cache, however, fetching $p_4$ requires evicting $p_2$.}
  \end{figure}

Paging with a supermodular feasibility function arises in common settings where the storage demand grows rapidly as a function of the number of ``pages" stored in cache. In these scenarios, there is a large set of $n$ {\em entities} (corresponding to vertices), and among subsets of entities certain interactions exist (represented by hyperedges). This data structure, known as a hypergraph, is ubiquitous in applications such as recommendation systems \cite{ghoshal2009random,tan2011using} (where vertices represent individuals and hyperedges represent communities), image retrieval \cite{liu2011hypergraph}(vertices represent images and hyperedges represent correlations), and bioinformatics \cite{patro2013predicting} (vertices represent substances and hyperedges stand for biochemical interactions). Other applications arise in machine learning \cite{tian2009hypergraph,gao2012visual,zhou2006learning} and databases \cite{beeri1983desirability}.

In various practical settings, the hypergraph is very large (e.g., \cite{ruggeri2023community,murakami2013efficient,huang2015scalable}). Therefore, a natural approach is to store frequently accessed vertices in a cache. However, caching is effective only if all interactions (i.e., hyperedges) among subsets of vertices in cache are also stored therein. As a set of $x$ vertices may have up to $2^x$ induced hyperedges, the storage demand for hyperedges tends to be significantly larger compared to the number of vertices. Caching hypergraphs in the non-linear paging framework can be formally defined as follows: consider a set $\mathcal{P}$ of vertices and define a feasibility function $f$ over $\mathcal{P}$, such that for a subset of vertices $S \subseteq \mathcal{P}$, the storage demand $f(S)$ is the number of induced hyperedges in $S$ plus the cardinality of $S$. Function $f$  is a supermodular function.

For example, a real-world problem related to supermodular paging is caching in device-to-
device (D2D) communication networks, with social ties among users and common interests that are used
as key factors in determining the caching policy and are modeled via a hypergraph \cite{bai2016caching}. Here, the number
of hyperedges (describing roughly interferences among users of the network, content, transmission rate,
etc.) grows in a supermodular manner w.r.t. the number of users placed in cache. In addition, our
reduction from online set cover (similarly, online submodular cover) to supermodular paging implies that
applications of online set cover are also applications of supermodular paging.

\comment{

\section{A Stronger LP}
\label{sec:strong}

%\Cref{lem:integralityGap} shows that our LP \eqref{eq:LP} cannot be used to obtain a randomized $\textnormal{polylog}(\mu)$-competitive algorithm. %We leave it as an interesting open question if one can design such an algorithm using the 

We describe the following stronger version of our LP \eqref{eq:LP}. %
Here, we require to remove from each infeasible set $S$ a set of pages $S'$. We require that the complement of $S'$ in $S$, that is $S \setminus S'$ will be feasible in cache (i.e., $ f\left(S \setminus S'\right) \leq k$).  As we do not know a priorly which such set $S'$ is evicted by the integral optimum, we only demand that evicting a minimum number of pages from $S$ whose complement size fits in cache. Formally, for a set $S \subseteq \cP$ let $\cC(S) = \left\{S' \subseteq S \text{ s.t. } f\left(S \setminus S'\right) \leq k\right\}$ be the set of subsets of $S$ of feasible complement and define
\begin{equation}
	\label{eq:q(S)}
	q(S) = \min_{S' \in \cC(S)} \left|S'\right|
\end{equation}
as the number of pages needed to be evicted from $S$ at any point in time. 
The LP is given as follows with similar notations as in \eqref{eq:LP}.

\begin{equation}
	\label{eq:LPS}
	\begin{aligned}
		\textsf{Stronger-LP}:~~~~~~~~	& \min \sum_{p \in \cP~} \sum_{j \in [n_p]}  x_p(j) \cdot c(p)\\
		& ~~~~~~~~\text{s.t. }\\
		& \sum_{p \in S-p_t} x_p(r(p,t)) \geq q(S), 	~~~~~~~~~~~~~~\forall t \in T~\forall S  \in \cS(t) \\ 
		& x_p(j) \geq 0 ~~~~~~~~~~~~~~~~~~~~~~~~~~~~~~~~~~~	\forall p \in \cP~\forall j \in [n_p]
	\end{aligned}
\end{equation} For example, in classic paging, for a subset of pages $S \subseteq \cP$ it holds that $q(S) = |S|-k$. Thus, the constraints of the LP \eqref{eq:LPS} require removing at least $|S|-k$ pages from cache within $S$. This effectively coincides with the LP constraints used to solve (weighted) paging \cite{bansal2012primal}, thus at least for classic paging, the integrality gap of the LP does not exceed $O(\log k)$. We note that in general this LP forms a relaxation of the problem, since in any integral solution and for any infeasible set $S$, the integral solution must evict at least $q(S)$ pages from $S$. %We leave it as an interesting open question if one can design a randomized $\textnormal{polylog}(\mu)$-competitive algorithm for non-linear paging based on the above LP. %such an algorithm using the 
Define 
\begin{equation}
	\label{eq:mu}
	\mu = \max_{S \subseteq \cP \text{ s.t. } f(S) \leq k} |S|
\end{equation} as the maximum cardinality of a feasible set. We show that this parameter accurately describes the hardness of non-linear paging. %(rather than $k$). 

\begin{theorem}
	\label{thm:MU}
	There is a $O\left(\log \mu\right)$-competitive algorithm for obtaining a fractional solution for \textnormal{\textsf{Stronger-LP}} \eqref{eq:LPS}. 
\end{theorem}

Paging with supermodular feasibility function arises in settings in which items stored in cache have dependencies yielding additional overhead in their mutual storage demand.
For instance, consider caching of software modules with runtime dependencies among subsets of modules. %, e.g., if module A is executed then module B cannot access resource R simultaneously; thus, a speceiffacly designed procedure is used to handle the order of execution of A and B. 
Apart from fetching the modules into cache, for each dependency arising among a subset of modules in cache, a specifically designed procedure needs to be brought to the cache as well to handle the dependency. We can define a feasibility function $f$ over the modules, where for a subset $S$ of modules, $f(S)$ describes the storage demand of the $S$ modules along with their required procedures to tackle the inter-dependencies within $S$. As storing $x$ modules in cache may requires handling up to $2^x$ dependencies, %this may lead to 
the storage demand of these procedures grows faster if there are more modules in cache, implying that $f$ is supermodular.   %A supermodular feasibility function soundly describes cache feasibility in the latter scenario. 
   % number of rules required to be taken along with $S$ to satisfy all constraints (i.e., with no incoming edges outside of the set).  
   %, aiming to minimize the cost of . 
   
   }
    
%   This scenario is naturally formalized as an {\sc rcp} instance, where the vertices of the graph are the rules, the weights indicate the processing times of the rules, and the bound is the size of the cache; in addition, a dependency between two rules creates a directed edge in the graph. Naturally, the goal is to maximize the weighted hit rate/minimize the miss rate.

   \comment{ Hence, supermodular paging can model the following online covering problem. Let $E = \{e_1,\ldots, e_n\}$ be a set of elements, where each of the given pages $p$ has a corresponding subset of elements $e(p) \subseteq E$. Our goal is to cover the entire set $E$ at all times, where a page contributes its elements to the cover only if it is outside of the cache, where the term "cache" simply refers to pages not chosen to the cover, and requested pages can be viewed as pages that temporarily cannot be used (e.g., a {\em fault} occurred). Then, a set $S$ is feasible in the cache if all elements are covered: $g(S) = \left| \bigcup_{p \in S} e(p) \right|$. 
   }

\comment{If a requested page $p$ is not in the cache, we consider this case as a cache miss, even if $f(S\cup \{p\}) \leq k$, where $S$ is the set of pages in the cache. %The first and practical reason for this setting, is that the metadata of the page can be considered as missing, requiring a costly access to the main memory. 
Removing this assumption results in an unbounded competitive ratio even for very restricted instances, as we show in  \Cref{sec:nonMonotone}. }

%%We note that if the function $f$ is supermodular, then 
%Paging problems are often considered in a {\em covering} point-of-view (e.g., \cite{bansal2012primal}), where in each time point at least $n-k = n-k$ pages are used for the {\em cover} (i.e., are not in the cache). To generalize this concept for non-linear paging, consider
%the {\em complement} function $g:2^{\cP} \rightarrow \mathbb{N}$ of $f$ defined as $g(S) = f(\cP \setminus S)$ for every $S \subseteq \cP$. Observe that for classic paging it holds that $g(S) = n-|S|$ and we require that $g(S) \geq n-k$ at all times to ensure feasibility. Interestingly, if $f$ is supermodular, then $g$ is submodular, making the supermodular paging problem a generalization of the well known {\em online set cover} \cite{alon2003online}. Vice versa, if $f$ is submodular, then $g$ is supermodular.  
%Same as paging, but with an arbitrary cost $c(p)$ for every page $p$. 
%The classic paging $f(S) = |S|$ for all $S \subseteq \cP$. 

%; formally, let $S_t$ be the set of pages in the cache at time $t$; then, we are ought to find $$%there is a cache miss incurring a cost of $c_p$

%\subsection{Applications of the Model}

%The conventional model for evaluating paging algorithms is competitive analysis. Unfortunately, 
%non-linear paging with a non-monotone feasibility function $f$ has an intractable competitive ratio, as shown in \Cref{sec:nonMonotone}. Therefore, we henceforth restrict the discussion to {\em monotone} non-linear paging. 

%One may wander why 
%Our model considers as a cache  miss if a page is not in the cache 

\subsection{Our Results and Techniques}

%A desirable property of online algorithm is that their competitive ratio is independent of the set of time steps, $T$. Our first motivation is to obtain such online algorithms for non-linear paging.
%strictly separates non-linear paging from the classic paging problem. 
%Crucially, 
We now present our results and elaborate on the techniques used. We start with general non-linear paging and then proceed to the special case of supermodular paging.

\subsubsection{General Non-Linear Paging}
\label{sec:ell}

In classic paging models, competitive ratios are typically given as a function of $k$, the cache size.  However, non-linear paging is more difficult. Even for non-linear paging instances with $k = 0$ the competitive ratio can be very high. For example, consider a (classic) paging instance $I'$ with cache threshold $k'$; define a non-linear paging instance $I$ with a (non-linear) feasibility function $f$ for which $f(S) = 0$ if $S$ is feasible for $I'$ and $f(S) = 1$ otherwise. In addition, we set $k = 0$ as the cache threshold of $I$. Clearly, a solution for $I$ implies a solution for $I'$. 
 Hence, by the hardness of paging \cite{sleator1985amortized,fiat1991competitive} the best competitive ratio of non-linear paging is arbitrarily large as a function of $k$ (we give the remaining details in \Cref{sec:hardRestricted}). %as a function of $n$. 
Thus, instead of $k$, we look for a parameter that better captures the competitiveness of general non-linear paging.
%but allows us to exhibit more understanding of the problem than the large factors of $\log (n) \cdot \log (f(\cP))$ as in \Cref{thm:randomized}.  
%We focus on a different parameter, which sheds more light on our problem. 
%The parameter that we use refers to 
This parameter turns out to be the maximum cardinality of a {\em minimally infeasible} set, i.e., an infeasible set where every proper subset of it is feasible\footnote{Technically, we subtract one so that the definition coincides with the parameter $k$ in paging.}. Formally, %set $S$ such that $f(S)>k$, 

\begin{definition}
	\label{def:Mininfeasible}
	A set $S \subseteq \cP$ is called {\em feasible} if $f(S) \leq k$ and is {\em infeasible} otherwise. Additionally, $S$ is {\em minimally  infeasible} if $S$ is infeasible and every $S' \subset S$ is feasible, and let $\mathcal{M} = \left\{S \subseteq \cP~|~f(S)>k \textnormal{ and } f(S') \leq k~\forall S' \subset S \right\}$ be all minimally infeasible sets. Finally, let the {\em width} of $f$ be $$\ell(f) = \max_{S \in \mathcal{M}} \left(|S|-1\right).$$ %as the maximum cardinality of a minimally infeasible set. %in the instance. 
	%	$$\ell = \max \left\{\right\}$$
\end{definition}
%We call $\ell$ the {\em width} of the instance. 
We simply let $\ell = \ell(f)$ when it is clear from context. 
Clearly, for paging (or weighted paging) the width $\ell$ equals $k$. Hence, the width accurately captures the optimal performance of paging algorithms: there is a tight $\ell$-competitive deterministic algorithm \cite{sleator1985amortized} and a tight $O(\log(\ell))$-competitive randomized algorithm \cite{fiat1991competitive}. However, the width behaves quite differently in other scenarios. For example, in the setting of submodular paging described in \Cref{sec:motivation}, the instance can have a fixed width $\ell = O(1)$ but the number of pages in cache can be unbounded (e.g., all pages use the same atom). Interestingly, we show that the width gives a tight competitive ratio also for general non-linear paging via a new LP for the problem (see \Cref{sec:LP,sec:algUneweighted}).   %$\ell$-competitive deterministic algorithm  

\begin{theorem}
	\label{thm:deterministic}
	There is a deterministic $\ell$-competitive algorithm for \textnormal{non-linear paging}. Moreover, every deterministic algorithm for  \textnormal{non-linear paging} is at least $\ell$-competitive. 
\end{theorem}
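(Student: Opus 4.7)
The plan is to handle the two directions of \Cref{thm:deterministic} separately. The lower bound follows from a direct adaptation of the classic Sleator--Tarjan adversary argument for paging, instantiated on a maximum-cardinality minimally infeasible set. The upper bound is algorithmic and is built on the LP relaxation developed in \Cref{sec:LP,sec:algUneweighted}, together with a primal-dual style eviction rule whose competitive ratio is governed by the width of the LP constraints, which is exactly $\ell$.

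For the lower bound, I would fix $S\in\cm$ with $|S|=\ell+1$, assign unit eviction cost $c(p)=1$ to each $p\in S$, and restrict the request sequence to pages of $S$. Because $S$ is infeasible but every proper subset of $S$ is feasible, the cache feasibility predicate restricted to $S$ is exactly that of classic paging with cache size $\ell$ on $\ell+1$ pages: at every moment at most $\ell$ pages of $S$ can reside in cache, so there is always a unique ``missing'' page. The adversary repeatedly requests that missing page, forcing a miss at every step, so any deterministic algorithm pays $T$ over a sequence of length $T$. On the same instance, Belady's offline policy serves any request sequence over $\ell+1$ pages with at most $T/\ell + O(\ell)$ misses, so the ratio tends to $\ell$ as $T\to\infty$.

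For the upper bound, the plan is to use an LP with variables $x_p(j)\ge 0$, one per page $p$ and per inter-request interval $j$, where $x_p(j)=1$ encodes that $p$ is evicted during its $j$-th interval, and with covering constraints, one per time $t$ and per minimally infeasible set $S\ni p_t$, of the form
\[
\sum_{p\in S\setminus\{p_t\}} x_p(r(p,t)) \ge 1.
\]
By \Cref{def:Mininfeasible}, each such constraint involves at most $|S|-1\le \ell$ variables. The online algorithm processes each request, and when fetching $p_t$ would render the cache infeasible, it identifies some $S\in\cm$ with $S\subseteq \text{cache}\cup\{p_t\}$ and $p_t\in S$, raises the associated dual at unit rate, and correspondingly grows each primal variable $x_p(r(p,t))$ for $p\in S\setminus\{p_t\}$ at rate $1/c(p)$; the page whose variable first reaches $1$ is evicted, and the process repeats until the cache is feasible. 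Because every tight constraint has width at most $\ell$, a standard primal-dual charging shows that the integer eviction cost incurred is at most $\ell$ times the dual value, which is itself a lower bound on $\OPT$ by LP duality.

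The main obstacle is on the upper-bound side: one must verify that the above LP is genuinely a relaxation of non-linear paging, i.e., every feasible integral eviction schedule induces a primal-feasible point of equal cost. This reduces to the observation that for every time $t$ and every $S\in\cm$ with $p_t\in S$, an integral cache containing $p_t$ at time $t$ cannot contain all of $S$, so at least one page of $S\setminus\{p_t\}$ has been evicted since its last request, satisfying the associated constraint. Once this relaxation property is established, the width-based rounding bound from the primal-dual analysis is routine, and the $\ell$ constant is tight by the matching lower bound above.
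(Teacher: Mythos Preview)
Your proposal is correct and follows essentially the same approach as the paper: the lower bound is the Sleator--Tarjan adversary instantiated on a maximum-cardinality minimally infeasible set, and the upper bound is the primal-dual local-ratio algorithm on the covering LP, with the $\ell$ factor coming from the width of the constraints. The only cosmetic difference is that the paper writes the LP constraints over \emph{all} infeasible sets $S\in\cS(t)$ rather than just the minimally infeasible ones, but since the algorithm only ever raises duals on minimally infeasible sets this is immaterial to the analysis.
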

%\subsection{}

The algorithm achieving \Cref{thm:deterministic} is based on a new LP relaxation, designed for the general setting of non-linear paging. We now explain why previous techniques for classic paging are not sufficient for obtaining a competitive online algorithm for this setting.

Previous work on generalized paging \cite{bansal2012randomized,adamaszek2018log} and on other online covering problems \cite{gupta2020online,coester2022competitive} use  {\em knapsack cover} constraints. 
This powerful technique, originated by Wolsey \cite{wolsey1982analysis}, is very useful for relaxing submodular cover constraints
using linear inequalities. 
%for offline submodular cover, reduces the integrality gap of the relaxation and enables the design of online algorithms with small competitive ratios. We provide below the necessary definitions. %necessary definition of {\em covering} function in this context. 
Indeed, paging problems are often studied from the viewpoint of a {\em covering} problem (e.g., \cite{bansal2012primal}), where the complement of the cache (i.e., pages outside the cache) needs to be covered. In classic paging, at any point of time at least $n-k$ pages are not in the cache. %To generalize this concept for non-linear paging, 

Consider the {\em covering} function $g:2^{\cP} \rightarrow \mathbb{N}$ of a feasibility function $f$ defined to be the (non-linear) size requirement {\em outside} of the cache: $g(S) = f(\cP)-f(\cP \setminus S)$ for every $S \subseteq \cP$. Observe that for classic paging it holds that $g(S) = n-(n-|S|) = |S|$ and the feasibility constraint translates to $g(S) \geq n-k$ at all times. If $f$ is submodular (i.e., submodular paging) then $g$ is supermodular, and vice versa. %if $f$ is supermodular, then $g$ is submodular. 
%In our notation, given a variable $x_p$ indicating the contribution of $p$ to the cover, knapsack constraints require that every subset $S \subseteq \cP$ satisfy that $\sum_{p \in S} x_p \cdot g_S(\{p\}) \geq g_S(\cP)$.\footnote{For $S,S'\subseteq \cP$ %and $p \in \cP$ we use the notation $g_S(S') = g(S' \cup S)-g(S)$.}  
Knapsack cover constraints yield a relaxation of the covering problem when the cover function $g$ is submodular \cite{gupta2020online}, as is the case in classic paging problems.
%. Hence, for supermodular paging, in which the induced covering function $g$ is submodular, we may use the classic knapsack cover constraint. 

However, for submodular paging, the covering function $g$ is supermodular and knapsack cover constraints do not even provide a relaxation of the problem.  For example, let $g(S) = 1$ for $S = \cP$ and $g(S) = 0$ otherwise. Then, the knapsack constraints are not satisfied by the unique solution that covers a demand of $k = 1$ (the entire set $\cP$). Specifically, $\cP$ does not satisfy the knapsack constraint for $S = \emptyset$, i.e.,  $\sum_{p \in \cP} x_p \cdot g_{\emptyset}(\{p\}) = 0$, but $g_{\emptyset}(\cP) = 1$. 

To circumvent the limits of knapsack cover constraints for submodular paging, we formulate a new set of covering constraints that are valid for any feasibility functions $f$ and $g$. %giving a tight competitive ratio for deterministic algorithms for non-linear paging.
Specifically, the constraints require removing at least one page from every infeasible set. Then, using the online primal-dual approach applied to this set of constraints, we obtain a tight $\ell$-competitive deterministic algorithm for non-linear paging. Specifically, upon arrival of a page that induces an infeasible set of pages in cache, our algorithm identifies a minimally infeasible set of pages and continuously increases their corresponding dual variable in the LP, evicting {\em tight} pages.  %This turns out to be a relaxation  

Interestingly, as a special case, \Cref{thm:deterministic} gives a simple $k$-competitive deterministic algorithm for generalized paging; to the best of our knowledge, there are only $(k+1)$-competitive deterministic algorithms \cite{cao1997cost, young2002line} for generalized paging. Thus, our bound is tight for this problem.
%Since paging is a special case, and cannot have a better than a $k$-competitive deterministic algorithm \cite{sleator1985amortized}, our algorithm gives a tight competitive ratio for this problem. 

\begin{cor}
	\label{thm:generalized}
	There is a deterministic $k$-competitive algorithm for \textnormal{generalized paging}. %Moreover, every deterministic algorithm for  \textnormal{non-linear paging} is at least $\ell$-competitive. 
\end{cor}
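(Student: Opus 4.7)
The plan is to derive the corollary directly from \Cref{thm:deterministic} by showing that whenever the feasibility function $f$ is linear with a cache threshold of $k$, the width satisfies $\ell(f) \leq k$. Once this is established, \Cref{thm:deterministic} immediately furnishes a deterministic $k$-competitive algorithm for generalized paging.

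To bound the width, I would fix an arbitrary minimally infeasible set $S \in \cm$ and argue that $|S|-1 \leq k$. First note that every page in $S$ must have strictly positive size: if some $p \in S$ had $f(p)=0$, then by linearity $f(S \setminus \{p\}) = f(S) > k$, contradicting minimal infeasibility of $S$. Since $f$ takes values in $\mathbb{N}$, each $p \in S$ therefore satisfies $f(p) \geq 1$. Now let $p^{\star} \in \argmin_{p \in S} f(p)$. Minimal infeasibility of $S$ gives $f(S \setminus \{p^{\star}\}) \leq k$, so by linearity
\begin{equation*}
f(S) \;=\; f(S \setminus \{p^{\star}\}) + f(p^{\star}) \;\leq\; k + f(p^{\star}).
\end{equation*}
On the other hand, linearity together with the minimality of $f(p^{\star})$ within $S$ yields $f(S) \geq |S| \cdot f(p^{\star})$. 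Combining the two inequalities and rearranging gives $(|S|-1) \cdot f(p^{\star}) \leq k$, and since $f(p^{\star}) \geq 1$ this forces $|S|-1 \leq k$. Taking the maximum over $S \in \cm$, we conclude that $\ell(f) \leq k$.

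The remainder is immediate: invoking \Cref{thm:deterministic} with the bound $\ell(f)\leq k$ produces a deterministic $k$-competitive algorithm for generalized paging. There is no real obstacle in this argument; the only subtlety worth being careful about is the handling of zero-size pages, which is resolved by the observation above that such pages cannot appear in any minimally infeasible set and therefore do not affect $\ell(f)$.
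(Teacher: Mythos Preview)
Your proposal is correct and follows the same approach the paper (implicitly) uses: the corollary is stated immediately after \Cref{thm:deterministic} as a direct consequence, relying on the fact that $\ell(f)\leq k$ for a linear feasibility function with integer page sizes. Your argument supplies exactly these details; one minor simplification is that once you know every $p\in S$ has $f(p)\geq 1$, you can skip the $p^\star$ step entirely and observe $|S|-1\leq \sum_{q\in S-p}f(q)=f(S-p)\leq k$ for any $p\in S$.
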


We emphasize that the lower bound of \Cref{thm:deterministic} can be obtained for {\em any} function $f$ with a minimally infeasible set of cardinality $\ell$ (regardless of whether $f$ is linear, submodular, supermodular, or any other function). This shows the robustness of the parameter $\ell$ as an indicator for the competitiveness of non-linear paging. Thus, it is natural to ask whether the parameter $\ell$ for non-linear paging is analogous to the parameter $k$ for classic paging when allowing randomization. We answer this question in the negative by showing that in contrast to paging, that admits an $O(\log(\ell))$-competitive randomized algorithm \cite{fiat1991competitive}, non-linear paging is substantially harder w.r.t. the parameter $\ell$ (see \Cref{sec:hardness}).

\begin{theorem}
	\label{thm:LB}
Unless \textnormal{NP $\subseteq $ BPP}, there is no polynomial-time randomized $o(\log^2 (\ell))$-competitive algorithm for \textnormal{non-linear paging}.
\end{theorem}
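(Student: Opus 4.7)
My plan is to reduce randomized online set cover to supermodular paging (a special case of non-linear paging) and invoke the polynomial-time randomized hardness of online set cover due to Korman: under NP $\not\subseteq$ BPP, no polynomial-time randomized algorithm is $\tilde{\Omega}(\log n \log m)$-competitive for online set cover on $n$ elements and $m$ sets. The strategy has three parts: (i) build a supermodular paging instance from the set cover instance in which the width $\ell$ is polynomial in $n+m$; (ii) argue that a $c$-competitive algorithm for the paging instance yields an $O(c)$-competitive algorithm for the set cover instance; (iii) match $\log^2 \ell$ against $\log n \log m$ on balanced instances and derive the contradiction.

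For the reduction I would take the page set $\cP = \cS \cup \{p_e : e \in X\}$, with $c(S)=1$ for $S \in \cS$ and $c(p_e) = M$ for a sufficiently large polynomial $M$, so that gadgets $p_e$ are never evicted by any near-optimal strategy. Letting $\cS_e := \{S \in \cS : e \in S\}$ and $A_e := \{p_e\} \cup \cS_e$, the feasibility function is
\[
f(T) \;=\; \sum_{e \in X} \one\bigl[A_e \subseteq T\bigr],
\]
with threshold $k=0$. Each summand is the indicator that $T$ contains a fixed subset, which is readily checked to be supermodular by cases; summing preserves supermodularity, so $f$ is monotone and supermodular. Element arrivals in the set cover instance are translated to requests for the corresponding gadget $p_e$: since $p_e$ cannot be evicted, feasibility forces at least one $S \in \cS_e$ to sit outside cache, identifying ``outside cache'' with ``selected in the cover.'' Unit-cost evictions among $\cS$-pages coincide with additions to the set cover, so paging cost and set cover cost agree up to a factor of $O(1)$.

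For the width, any minimally infeasible set must equal $A_{e^*}$ for some $e^*$: removing any page of $A_{e^*}$ destroys the only possible witness of infeasibility, while a proper superset is not minimal. Hence $\ell = \max_e |A_e|-1 \le m$, so $\log^2 \ell \le (\log m)^2$. On balanced instances with $n = \Theta(m)$, the set-cover lower bound gives $\tilde{\Omega}(\log n \log m) = \tilde{\Omega}(\log^2 m) = \tilde{\Omega}(\log^2 \ell)$, so a polynomial-time randomized $o(\log^2 \ell)$-competitive algorithm for non-linear paging would yield a polynomial-time randomized $o(\log n \log m)$-competitive algorithm for online set cover, contradicting NP $\not\subseteq$ BPP. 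The step I expect to be the main technical obstacle is implementing the ``never evict $p_e$'' condition faithfully with polynomial costs: choosing $M \gg nm\cdot\max_S c(S)$ should suffice, but one must verify via a dominance/coupling argument that OPT on the paging instance is attained by strategies that never evict any $p_e$, and consequently that paging cost tracks set-cover cost tightly enough that the $o(\log^2 \ell)$ ratio is preserved under the reduction.
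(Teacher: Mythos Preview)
Your approach is essentially the paper's. The feasibility function you write down, $f(T)=\sum_{e}\one[A_e\subseteq T]$, coincides with the paper's $f(T)=|X|-g(\cP\setminus T)$ once you unwind the cover function $g$, and both proofs then invoke Korman's $\Omega(\log^2 n)$ randomized lower bound for online set cover on instances with $m=n^{a}$ sets.

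Two small gaps to close. First, you need an opening phase that requests every set-page $S_1,\dots,S_m$ once so that they actually sit in cache; without this no $A_e$ can ever be contained in the cache contents and the reduction is vacuous. The paper includes this phase explicitly. Second, your chain ``$\tilde\Omega(\log^2 m)=\tilde\Omega(\log^2 \ell)$'' silently uses a polynomial \emph{lower} bound on $\ell$, which you have not established (you only proved $\ell\le m$). The cleaner route is to note that on Korman's hard instances $m=n^{a}$, hence $\ell\le m+n=n^{O(1)}$, and therefore an $o(\log^2\ell)$-competitive paging algorithm is already $o(\log^2 n)$-competitive for set cover on those instances, directly contradicting Korman; the ``balanced $n=\Theta(m)$'' assumption is unnecessary. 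The paper additionally asserts $\ell\ge n^{b}$ to pin down $\log\ell=\Theta(\log n)$, but only the upper bound is logically required for the contradiction.
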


A very intriguing open question is whether there exists a randomized $\textnormal{polylog} (\ell)$-competitive algorithm for general non-linear paging. Unfortunately, we show that the LP used for obtaining \Cref{thm:deterministic} has an integrality gap of $\ell$, and thus would need to be strengthened to achieve this end (see \Cref{sec:integralityGap}).  Hence, obtaining a $\textnormal{polylog} (\ell)$ competitive factor would require a new set of techniques. As we have already discussed earlier, existing techniques for obtaining randomized online algorithms for paging problems and related variants focus on solving covering linear programs (LP) online, and they break down in the presence of general non-linear paging constraints. 

To overcome the integrality gap of our LP, we formulate a stronger LP for non-linear paging (see \Cref{sec:integralityGap}). Obtaining even a fractional $\textnormal{polylog}(\ell)$-competitive algorithm for this LP seems a hard task. Thus, we consider another parameter which allows us to get a better competitive ratio. The parameter is the maximum number of pages that fit together in the cache. Formally, define 
\begin{equation}
	\label{eq:mu}
	\mu = \max_{S \subseteq \cP \text{ s.t. } f(S) \leq k} |S|
\end{equation} as the maximum cardinality of a feasible set in cache. Observe that $\mu = k$ for e.g., generalized paging, hence it is a natural parameter to also consider in our setting. We also remark that in many practical settings, such as classic paging, it holds that $n \gg \mu$ and obtaining a competitive ratio that depends on $\mu$ (rather than $n$) is much more desirable. 

Clearly, $\mu \geq \ell$. The following example illustrates a scenario demonstrating a scenario where $\mu \gg \ell$. Consider a non-linear paging
instance on a set P of pages such that P is partitioned into sets $X,Y$, where $|X| = n$, $|Y| = k+1$, and $n \gg k$.
Define a feasibility function $f$ such that for all subsets $S$ of $P$ define $f(S) = |Y \cap S|$. Define the cache
threshold as $k$. Therefore, the only minimally infeasible set is $Y$; thus, $\ell = k$. On the other hand, the
maximum cardinality of a set that fits into cache is the cardinality of all pages in $X$ and any $k$ pages from
$Y$; thus, $\mu = n+k$. Since $n \gg k$ (i.e., $n$ can be chosen to be arbitrarily large w.r.t. $k$) it follows that $\mu \gg \ell$. 

We give the following fractional algorithm for solving the strengthened LP online (see \Cref{sec:stronger}). %that this natural parameter %accurately describes the hardness of non-linear paging. %(rather than $k$). 

\begin{theorem}
	\label{thm:MU}
	There is an $O\left(\log \mu\right)$-competitive algorithm for obtaining a fractional solution for the strengthened \textnormal{LP}. %\textnormal{\textsf{Stronger-LP}} \eqref{eq:LPS}. 
\end{theorem}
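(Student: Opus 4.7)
The plan is to apply the online primal-dual framework of Buchbinder and Naor to the strengthened LP. First I would write down the dual: a variable $\alpha_S^t \geq 0$ for every time $t$ and every infeasible $S \in \cS(t)$, objective $\max \sum_{t,S} q(S)\,\alpha_S^t$, and packing constraints $\sum_{(t,S) :\, p \in S-p_t,\, r(p,t)=j}\alpha_S^t \leq c(p)$ for every page $p$ and epoch index $j$. The algorithm processes each request $p_t$ by initializing $x_{p_t}(r(p_t,t))=0$ and then, as long as some $S \in \cS(t)$ violates its constraint, continuously raising $\alpha_S^t$ at unit rate and, in lockstep, updating each unsaturated $x_p(r(p,t))$ with $p \in S - p_t$ by the multiplicative rule
\[
\frac{d x_p(r(p,t))}{d\alpha_S^t} \;=\; \frac{1}{c(p)}\Bigl(x_p(r(p,t)) + \tfrac{1}{\mu}\Bigr),
\]
stopping when the constraint becomes tight (or all involved variables saturate at $1$).

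The heart of the analysis is to bound the primal cost by $O(\log \mu)$ times a feasible dual. At any instant during processing of a violated $S$, the primal objective grows at rate $\sum_{p \in S - p_t,\, x_p(r(p,t))<1}(x_p(r(p,t)) + 1/\mu)$, whereas the dual objective grows at rate $q(S)$. I would combine two ingredients: the violation inequality $\sum_{p \in S-p_t}x_p(r(p,t))<q(S)$, and the structural bound $|S-p_t|\leq q(S)+\mu-1$, which is immediate from $q(S) = |S|-\mu(S)$ and $\mu(S)\leq \mu$. Together they give
\[
\sum_{p \in S-p_t,\, x_p<1}\bigl(x_p+\tfrac{1}{\mu}\bigr) \;\leq\; q(S) + \tfrac{|S-p_t|}{\mu} \;\leq\; q(S)\bigl(1+\tfrac{1}{\mu}\bigr) + 1 \;\leq\; 3\,q(S),
\]
so the raw primal is at most three times the raw dual at all times. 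For dual feasibility, integrating the ODE shows that $x_p(r(p,t))$ climbs from $0$ to $1$ only after the total dual mass placed on it reaches $c(p)\ln(1+\mu)$; dividing every dual variable by $\ln(1+\mu)$ therefore produces a genuinely feasible dual, whose value lower-bounds $\OPT$. Putting the two bounds together yields the claimed $O(\log\mu)$ competitive ratio.

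The main obstacle — and the reason a naive Buchbinder-Naor instantiation falls short — is the case where $|S-p_t|$ is much larger than $\mu$. Using the standard offset $1/|S-p_t|$ in place of $1/\mu$ would give only $O(\log|S-p_t|)=O(\log n)$ as the competitive ratio. The crucial idea is to use the fixed offset $1/\mu$ and absorb the resulting $|S-p_t|/\mu$ term via the inequality $|S-p_t|-q(S)\leq \mu-1$, which reflects the fact that at most $\mu$ pages of any $S$ can simultaneously reside in cache. This replaces the support size of a constraint by $\mu$ in the final bound, and is precisely what allows the ratio to depend on $\mu$ rather than on $n$ or $|S-p_t|$.
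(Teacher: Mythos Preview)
Your approach is essentially the same as the paper's: the same strengthened LP, the same multiplicative update with offset $1/\mu$, and the same structural bound $|S-p_t|\le q(S)+\mu-1$ (the paper writes this as $|S\setminus S'|\le\mu$ for $S'\in\cC(S)$ with $|S'|=q(S)$), leading to the identical rate comparison $dx\le 3\ln(\mu+1)\,q(S)\,dy_t(S)$.

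There is one gap you gloss over. Your dual-feasibility argument (``$x_p$ reaches $1$ only after total dual mass $c(p)\ln(1+\mu)$'') is valid only if no further dual mass is ever placed on $p$ once $x_p(r(p,t))=1$. But your algorithm, as described, may select a violated $S$ that contains already-saturated pages and then raise $\alpha_S^t$; this deposits additional dual mass on those saturated pages, and dividing by $\ln(1+\mu)$ no longer yields a feasible dual. The paper handles this by restricting the choice of $S$ to sets in which \emph{every} $p\in S-p_t$ has $x_p(r(p,t))<1$. That fix, however, makes primal feasibility nontrivial: one must argue that satisfying only the constraints of such ``fully unsaturated'' sets implies all constraints. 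The paper proves this via the sub-additivity inequality $q(S)\le |U|+q(S\setminus U)$, where $U$ is the set of saturated pages in $S$ (this follows because any $S'\in\cC(S\setminus U)$ gives $S'\cup U\in\cC(S)$). You should add this step; without it either dual or primal feasibility is unproven.
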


It as an interesting open question if a randomized $\textnormal{polylog}(\mu)$-competitive algorithm for non-linear paging can be designed by rounding the fractional solution obtained in \Cref{thm:MU}.

\subsubsection{Supermodular Paging} % (Submodular Cover) }

We now discuss our results and techniques for supermodular paging. Our main result is a polylogarithmic randomized competitive algorithm for supermodular paging, i.e., submodular cover paging (see \Cref{sec:randomizedALG}). % in particular showing that online algorithm exist for large family of non-trivial non-linear paging instances without a dependence on $T$. 
As we show later on, our upper bound turns out to be quite close to the lower bound we prove. 

%Since the supermodular paging problem is a closely related variant of online set cover \cite{alon2003online} and online submodular cover \cite{gupta2020online}, a significant improvement of our results is unlikely to exist \cite{korman2004use}.  

	\begin{theorem}
	\label{thm:randomized}
There is an $O\left(\log^2 \mu \cdot \log \left(\frac{c_{\max}}{c_{\min}} \cdot f(\cP) \right)\right)$-competitive randomized algorithm for \textnormal{supermodular paging (submodular cover paging)}, where $c_{\max},c_{\min}$ are the maximum and minimum costs of pages, respectively.
\end{theorem}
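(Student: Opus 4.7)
The plan is to combine the $O(\log \mu)$-competitive fractional algorithm of \Cref{thm:MU} with an online randomized rounding scheme that adds a further $O\!\left(\log \mu \cdot \log\!\left(\tfrac{c_{\max}}{c_{\min}}\cdot f(\cP)\right)\right)$ multiplicative factor. First I would run the fractional algorithm of \Cref{thm:MU} to maintain a monotonically growing fractional solution $\{x_p(j)\}$ to the strengthened LP; second I would convert $x$ into an integral eviction schedule online via threshold rounding in the style of Alon--Awerbuch--Azar--Buchbinder--Naor for online set cover~\cite{alon2003online}.

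The rounding would work as follows: for every pair $(p,j)$, sample $\Theta(\log \mu)$ independent uniform thresholds $\alpha^{(i)}_p(j)\in[0,1]$; evict $p$ for its $j$-th time as soon as some $\alpha^{(i)}_p(j)<x_p(j)$. The expected cost per copy is a constant times the fractional cost, so this step multiplies the fractional cost by $O(\log \mu)$. Feasibility is the crux: whenever a newly requested $p_t$ turns a feasible set into a minimally infeasible $S\in\cm$, the strengthened-LP constraint yields $\sum_{p\in S\setminus\{p_t\}} x_p(r(p,t))\ge q(S)\ge 1$, so in any single copy at least one page of $S\setminus\{p_t\}$ is evicted with constant probability; with $\Theta(\log \mu)$ copies the failure probability becomes small enough to union-bound over the minimally infeasible sets active at time $t$, whose number I would control using $|S|\le \mu+1$ for every $S\in\cm$.

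To obtain the remaining $\log\!\left(\tfrac{c_{\max}}{c_{\min}}\cdot f(\cP)\right)$ factor, I would partition the pages into $O(\log(c_{\max}/c_{\min}))$ geometric cost classes and run the rounding separately within each class so that weights within a class are comparable, losing only a constant per class. I would then decompose the time axis into $O(\log f(\cP))$ doubling phases indexed by the cumulative covering demand $f(\cP)-f(\cP\setminus S_t)$; within each phase only a bounded amount of fractional mass is added, which allows refreshing the thresholds and absorbing residual probabilistic errors at the cost of a $\log f(\cP)$ factor. Summing the per-phase, per-class expected costs gives the stated competitive ratio.

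The main obstacle, I expect, is controlling the union bound over infeasible sets during rounding. Online set cover unions over the stream of arriving elements, but here infeasibility is inherently set-based and the family $\cm$ can be exponentially large in $n$. The key will be to argue that only $\mathrm{poly}(\mu)$ constraints are ``active'' at any moment---for instance, by charging each potential violation to a minimally infeasible $S\in\cm$ containing the current request $p_t$ and exploiting $|S|\le \mu+1$ to keep the per-copy feasibility probability high---so that the rounding loss depends on $\log \mu$ rather than $\log n$. A secondary subtlety is that, unlike pure set cover, a rounded eviction here can be ``undone'' by a later fetch, so the charging must be careful not to double-count evictions across phases.
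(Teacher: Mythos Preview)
Your proposal diverges from the paper's proof in two substantive ways, and the second exposes a genuine gap.

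First, the paper does \emph{not} round the strengthened LP of \Cref{thm:MU}. It sets up a Wolsey-style submodular-cover LP~\eqref{eq:LP2} with constraints $\sum_{p} x_p(r(p,t))\, g_S(p)\ge N-g(S)$, solves it online to $O(\log\mu)$ via a tailored primal--dual scheme (\Cref{alg:fractional}), and produces an auxiliary solution $z$ whose nonzero fractional entries lie in $[\tfrac{1}{4N\mu},\tfrac12]$. The rounding (\Cref{alg:randomizedRounding}) then boosts $z$ by a single factor $\alpha=\Theta(\log(CN\mu))$ and evicts pages so that $\Pr[p\notin\text{cache}]=\min(1,\alpha z_p)$; feasibility is handled not by a union bound over set constraints but by the concentration lemma of~\cite{gupta2020online} for monotone submodular $g$, which gives $\E[g(\cF)]\ge N-e^{-\alpha}N$. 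When the rounded cache is still infeasible the algorithm greedily evicts extra pages, and the $e^{-\alpha}$ failure probability makes this correction cost a constant factor in expectation. There is no cost-class partition and no phase decomposition.

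Second, your proposed union bound is the actual gap. You correctly flag it as ``the main obstacle,'' but the suggested fix---that only $\mathrm{poly}(\mu)$ minimally infeasible sets are ``active'' at any time---is neither proved nor true in general: even though every $S\in\cm$ has $|S|\le\mu+1$, the number of such $S$ containing the current request $p_t$ can be as large as $\binom{n-1}{\mu}$, so $\Theta(\log\mu)$ independent threshold copies do not drive the per-step failure probability below anything usable. This is precisely why the paper abandons the set-indexed constraints and leverages the submodularity of $g$: the single inequality $\E[g(\cF)]\ge N-e^{-\alpha}N$ replaces the entire exponential family of infeasible-set constraints in one stroke. Your phase machinery is also underspecified (the cover demand $N=f(\cP)-k$ is fixed throughout, so it is unclear what the ``doubling phases indexed by the cumulative covering demand'' are doubling), but even were that part made precise, the rounding step would still be missing the key structural idea---exploiting submodularity of $g$ rather than union-bounding over $\cm$---that makes the argument go through.
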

In particular, for unweighted supermodular paging (where $c_p = 1~\forall p \in \cP$), the above theorem implies an $O\left(\log^2 (\mu) \cdot \log \left(f(\cP) \right)\right)$-competitive algorithm.

Our algorithm relies on a different LP relaxation than the LP described in \Cref{sec:ell}. As we aim to solve supermodular paging, which implies a submodular cover function $g$, we design an LP relaxation inspired by the submodular cover relaxation of Wolsey \cite{wolsey1982analysis} (see also \cite{gupta2020online}). We solve this LP where the constraints arrive online to obtain a fractional $O(\log(\mu))$-competitive (deterministic) algorithm, while maintaining the property that the entries are either integral or multiples of $\frac{1}{k}$. 

To obtain an integral solution, one is tempted to maintain online a set of feasible {\em cache states} respecting (even approximately) the marginal probabilities induced by the fractional solution, similarly to previous works on weighted paging and generalized paging \cite{bansal2012primal,bansal2012randomized,adamaszek2018log}. However, techniques for online cache state maintenance of \cite{bansal2012primal,bansal2012randomized,adamaszek2018log} seem to break down in the presence of submodular cover constraints. Instead, we augment the fractional solution by an additional polylogarithmic {\em boosting factor} and perform randomized rounding, with possible corrections to ensure feasibility. The probability of a page to be evicted at some point in time is shown to be proportional to the page's fractional increase normalized by the probability that the page is in cache. 

We also give lower bounds for online supermodular paging. Surprisingly, even though online set cover seems starkly different from supermodular paging, in particular since the cover constraints change over time for online set cover, we can show that supermodular paging is in a sense harder to solve online (see \Cref{sec:hardness}).

\begin{theorem}
	\label{lem:Set Cover}
	For any $\rho \geq 1$, if there is a $\rho$-competitive algorithm for \textnormal{supermodular paging}, then there is a $\rho$-competitive algorithm for \textnormal{online set cover} having the same running time up to polynomial factors.  
\end{theorem}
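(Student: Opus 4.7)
The plan is to design a polynomial-time reduction from online set cover to supermodular paging. Given a set cover instance $(X,\cS,c_{\textsc{SC}})$, I construct a paging instance with pages $\cP=\cS\cup X$, cache threshold $k=0$, eviction costs $c(p_S)=c_{\textsc{SC}}(S)$ for $S\in\cS$ and $c(p_x)=M$ for $x\in X$ with $M:=\rho\sum_{S\in\cS}c_{\textsc{SC}}(S)+1$, and feasibility function
\[
f(A)\;=\;\bigl|\{\,x\in X\cap A\,:\,S\in A\text{ for every }S\in\cS\text{ with }x\in S\,\}\bigr|.
\]
The request sequence first pre-loads the cache by requesting the set pages $S_1,\ldots,S_m$, and thereafter relays every element arrival $x_t$ in the set cover instance as a request for $p_{x_t}$. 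The online set cover is built by adding $S$ to the cover the moment the paging algorithm evicts $p_S$.

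The first step is to verify that $f$ is monotone, integer-valued, and supermodular. Writing $f(A)=\sum_{x\in X}\mathbf{1}[T_x\subseteq A]$ with $T_x=\{p_x\}\cup\{p_S:x\in S\}$, each summand is a monotone supermodular indicator, since the marginal $\mathbf{1}[T_x\subseteq A\cup\{p\}]-\mathbf{1}[T_x\subseteq A]$ equals $1$ precisely when $p\in T_x$ and $T_x\setminus\{p\}\subseteq A$, a condition that is monotone non-decreasing in $A$. The second step is to trace feasibility along the request sequence: during the pre-load phase no element page is in the cache, so $f=0$ trivially; once $p_{x_t}$ is requested, the constraint $f(A)\le 0$ forces at least one $S\in\cS$ with $x_t\in S$ to lie outside the cache, which guarantees that the induced cover is valid at every time step.

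For the cost analysis, the plan is to argue that the paging optimum on this instance equals the set cover optimum. In one direction, an optimal cover $C^*$ yields a legal paging schedule that evicts each $p_S$ with $S\in C^*$ right after pre-loading and never touches an element page, with cost $\textsc{OPT}_{\textsc{SC}}$. In the other direction, any paging solution induces a valid cover (by the feasibility trace above), and since each set page is never re-requested it is evicted at most once, so the cover's cost is bounded by the paging cost. Finally, a $\rho$-competitive paging algorithm cannot afford to evict any element page on this instance: doing so would pay at least $M>\rho\cdot \textsc{OPT}_{\textsc{pag}}$, contradicting its competitive guarantee. Consequently its cost equals the cost of the induced cover, which is therefore at most $\rho\cdot\textsc{OPT}_{\textsc{SC}}$.

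The main obstacle is to produce a feasibility function that is simultaneously supermodular on the cache and tight enough to force a covering eviction on every element arrival; standard formulations of set cover naturally yield submodular covering functions on the chosen sets rather than supermodular feasibility on the cache. Writing $f$ as a sum of subset-containment indicators resolves this by placing supermodularity exactly where it is needed. The remaining concern, preventing the paging algorithm from dodging the reduction by evicting cheap element pages instead of set pages, is handled by inflating the element page cost to $M$, which renders any such eviction inconsistent with $\rho$-competitiveness. The pre-load phase adds only $m$ requests, so the simulation preserves running time up to polynomial factors.
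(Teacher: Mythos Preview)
Your reduction is correct and essentially identical to the paper's: the same page set $\cP=\cS\cup X$, the same cache threshold $k=0$, the same pre-load-then-relay request sequence, and in fact the same feasibility function, since the paper's $f(A)=n-g(\cP\setminus A)$ with $g$ the coverage function unwinds to exactly your $\sum_{x\in X}\mathbf{1}[T_x\subseteq A]$. The only cosmetic differences are that you prove supermodularity directly via the indicator decomposition (the paper instead invokes submodularity of $g$), and you use a finite prohibitive cost $M=\rho\sum_S c_{\textsc{SC}}(S)+1$ for element pages where the paper simply sets $c(i)=\infty$; your choice is arguably cleaner, though note it relies on the strict (no additive constant) notion of $\rho$-competitiveness.
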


% we have the statement of \Cref{thm:hardness}. 

By \Cref{lem:Set Cover} and the results of \cite{alon2003online,korman2004use}, we give a lower bound on the competitiveness of supermodular paging, indicating the necessity of the factor $\log \mu \cdot \log \left(\cP\right)$.
%and $\log(f(\cP))$ 
%in the competitive ratio of \Cref{thm:randomized}. %The proof is based on a reduction from online set cover. 

\begin{cor}
	\label{thm:hardness}
	Unless \textnormal{NP $\subseteq $ BPP}, there is no polynomial $o\left(\log \mu \cdot \log \left(f(\cP)\right) \right)$-competitive algorithm for \textnormal{supermodular paging}. Moreover, there is no deterministic $o \left(\frac{\log \mu \cdot \log \left(f(\cP)\right)}{\log \log \mu+\log \log \left(f(\cP)\right)}\right)$-competitive algorithm for \textnormal{supermodular paging} of any running time. 
\end{cor}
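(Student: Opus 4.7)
The plan is to derive the corollary by composing Theorem \ref{lem:Set Cover} with the known hardness bounds for online set cover. Recall that for online set cover on a ground set of $n$ elements with $m$ candidate sets, \cite{alon2003online} rules out polynomial-time randomized algorithms with competitive ratio $o(\log n \cdot \log m)$ unless \textnormal{NP $\subseteq$ BPP}, and \cite{korman2004use} rules out deterministic algorithms of any running time with competitive ratio $o\bigl(\log n \cdot \log m / (\log \log n + \log \log m)\bigr)$.

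The first step is to make the parameter correspondence explicit. Fix an online set cover instance $\mathcal{J}$ with $n$ elements and $m$ sets, and let $\mathcal{J}'$ be the supermodular paging instance produced by the reduction underlying Theorem \ref{lem:Set Cover}. I would verify that pages in $\mathcal{J}'$ correspond naturally to the sets of $\cS$ (up to polynomially many auxiliary pages to encode requests and enforce the supermodular feasibility function), so that $|\cP| = \mathrm{poly}(m)$ and in particular
\[
\mu \leq |\cP| = \mathrm{poly}(m).
\]
Similarly, because the supermodular feasibility function is built to encode the cover demand of elements in $\mathcal{J}$, the total demand satisfies $f(\cP) = \mathrm{poly}(n)$. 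Consequently $\log \mu \cdot \log f(\cP) = O(\log m \cdot \log n)$.

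Given the parameter correspondence, both claims follow by contrapositive. Suppose there were a polynomial-time randomized algorithm $A$ for supermodular paging with competitive ratio $\rho = o(\log \mu \cdot \log f(\cP))$. By Theorem \ref{lem:Set Cover}, executing $A$ on $\mathcal{J}'$ (while translating requests and output back via polynomial-time overhead) yields a polynomial-time randomized $\rho$-competitive algorithm for the original online set cover instance $\mathcal{J}$. The parameter correspondence gives $\rho = o(\log n \cdot \log m)$, contradicting \cite{alon2003online} under \textnormal{NP $\not\subseteq$ BPP}. The deterministic bound is obtained in exactly the same way by invoking \cite{korman2004use} instead, and by dropping the polynomial-time restriction on both $A$ and the reduction.

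The main (and really the only) technical point to verify is the parameter bookkeeping in the reduction: one must confirm that the reduction in Theorem \ref{lem:Set Cover} does not inflate $\mu$ or $f(\cP)$ super-polynomially in $m$ or $n$, since any super-polynomial blow-up would erode the transferred lower bound by more than a constant factor in the exponent. Assuming the reduction is the natural one, with each set of $\cS$ mapped to a page and each element contributing $\Theta(1)$ units to the supermodular demand, the $\mathrm{poly}(m)$ and $\mathrm{poly}(n)$ bounds hold, and the corollary is immediate.
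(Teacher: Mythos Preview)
Your proposal is correct and follows exactly the paper's approach: compose Theorem~\ref{lem:Set Cover} with the online set cover lower bounds of \cite{alon2003online,korman2004use}, after checking the parameter correspondence. The paper's reduction is even simpler than you guess---pages are precisely $\cP = X \cup \cS$ (elements with cost $\infty$, sets with their original cost), giving $|\cP| = m+n$ and $f(\cP) = n$ exactly, so $\mu \le m+n$ and the required bounds $\log\mu\cdot\log f(\cP)=O(\log m\cdot\log n)$ hold because the hard set cover instances have $m$ and $n$ polynomially related.
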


We remark that some of our results from \Cref{sec:ell} can be stated using the parameter $\mu$ as well. However, as we showed earlier, there are non-linear paging instances in which the parameter $\mu \gg \ell$ and it grows artificially apart
from the true hardness of the instance – in terms of competitive analysis. In contrast, every minimally
infeasible set of cardinality $\ell+1$ can be used to obtain the lower bounds for classic paging \cite{sleator1985amortized,fiat1991competitive} (i.e., $\ell$-
deterministic and $O(\log \ell)$-randomized lower bounds). Thus, in an informal sense, an increase in $\ell$
always incurs an increase in the difficulty of the problem, unlike an increase in $\mu$. For this reason, we
believe that searching for competitive algorithms and lower bounds in terms of $\ell$, rather than $\mu$, may be of
greater interest in the non-linear paging setting.

Finally, we also aim at obtaining an offline approximation algorithms for supermodular paging (i.e., where all requests are known in advance). A first attempt would be to reduce the problem to offline submodular cover, as follows. Define a covering function $G$ on the domain containing all pairs $(p,j)$, for every page $p$ and the $j$-th time it is requested. Define the value of $G$ on a subset of pairs $S$ as $G(S) = \sum_{t \in T} g(S_t)$, where $g$ is the covering function of the instance (see \Cref{sec:ell}), and $S_t$ is the set of all pages $p$, where $(p,j)$ belongs to $S$ and the time interval between requests $j$ and $(j+1)$ for $p$ intersects $t$. Clearly,  the total cover demand, even with a unit demand per-time slot, is a function of $T$; thus, applying an offline set cover algorithm in a black box manner gives only an $\Omega\left(\log \left(T\right)\right)$-approximation \cite{lund1994hardness}. As $T$ may be very large, a different technique is in place.
%(see \Cref{sec:technicalC} for more details on covering ).

Combined with the strong {\em round-or-separate} algorithm of \cite{gupta2020online}, our techniques yield an approximation algorithm for supermodular paging independently of $T$ (see \Cref{sec:randomizedALG}).
We defer the details to the full version of the paper.
\begin{theorem}
	\label{thm:approximation}
	There is an offline $O\left( \log \left(\frac{c_{\max}}{c_{\min}} \cdot f(\cP) \right)\right)$-approximation algorithm for \textnormal{supermodular paging}. %(submodular cover paging)}%, where $c_{\max},c_{\min}$ are the maximum and minimum costs in the input. 
	\end{theorem}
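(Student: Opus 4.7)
The plan is to combine an offline LP relaxation of supermodular paging with the \emph{round-or-separate} framework of Gupta et al.\ \cite{gupta2020online} for submodular cover. This bypasses the $\Omega(\log T)$ barrier of the naive reduction to offline set cover described in the discussion above the theorem by charging the approximation ratio against the maximum cover demand, which is bounded by $f(\cP)$, rather than against the number of covering constraints.

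First, I would formulate an offline LP. Since $f$ is supermodular, the complementary function $g(S) = f(\cP) - f(\cP \setminus S)$ is monotone submodular. The feasibility requirement $f(S_t) \leq k$ at time $t$ is therefore equivalent to requiring that the set of pages evicted from $\cP$ at time $t$ covers $g$ up to a threshold of $f(\cP) - k$. Using eviction variables $x_p(j)$ that indicate whether page $p$ is evicted after its $j$-th request, one obtains a relaxation with one submodular-cover constraint per time step, whose integrality gap is controlled by the classical Wolsey analysis once the constraints are strengthened with knapsack-cover-style inequalities.

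Second, I would run the round-or-separate procedure of \cite{gupta2020online} on this LP. At each iteration: (i) solve the currently active restricted LP (polynomial size via a separation oracle that uses only value-oracle access to $f$); (ii) perform randomized rounding scaled by an $O(\log D)$ boosting factor, where $D$ upper bounds the cover demand; (iii) test feasibility at every time step, and if some constraint is violated, add the strengthened knapsack-cover version of that constraint and repeat. Since $D \leq f(\cP)$ in the unit-cost case, the procedure yields an $O(\log f(\cP))$-approximation. To handle general costs, I would bucket pages by cost class (powers of two), run the procedure per bucket, and pay an additional factor of $O\!\left(\log(c_{\max}/c_{\min})\right)$, giving the stated $O\!\left(\log\!\left(\frac{c_{\max}}{c_{\min}} \cdot f(\cP)\right)\right)$-approximation.

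The main obstacle is that supermodular paging is not a plain submodular cover instance: the $T$ time steps induce a \emph{family} of coupled submodular-cover constraints that share the same eviction variables, and the demand at each time depends on evictions at earlier times (only pages currently in cache are eligible to be evicted). The key technical step will be to embed this coupled collection of constraints into a single submodular covering problem over the enlarged domain of page--occurrence pairs $(p,j)$, so that each round-or-separate iteration discovers a violated constraint in the combined instance, and then to verify that the integrality gap bounds and the separation guarantees of \cite{gupta2020online} transfer to this construction without incurring any additional $\log T$ factor. The remaining bookkeeping --- translating the cover-side guarantee back into a bound on total eviction cost --- is essentially standard.
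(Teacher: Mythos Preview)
Your core plan matches the paper's: both start from the Wolsey-style relaxation (this is exactly LP~\eqref{eq:LP2}) on page--occurrence variables $x_p(j)$ and invoke the round-or-separate framework of \cite{gupta2020online}. You also correctly isolate the real difficulty --- $T$ coupled submodular-cover constraints sharing the same eviction variables --- and the right remedy, namely lifting to the $(p,j)$ domain, which is precisely what the paper does.

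The divergence is in how $c_{\max}/c_{\min}$ enters. The paper does \emph{not} bucket by cost. It keeps one weighted LP, applies the randomized rounding of Algorithm~\ref{alg:randomizedRounding} with the boosting factor $\alpha$ (which already absorbs $C=c_{\max}/c_{\min}$), and then reuses Lemma~\ref{lem:Gupta} together with the charging argument of Lemma~\ref{lem:ALG3} to show that the greedy correction at each time step costs in expectation only a constant factor more than the fractional fetch. This is what yields the additive bound $O(\log(C\cdot f(\cP)))$.

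Your bucketing step, as written, has a gap. ``Run the procedure per bucket'' implicitly asks each cost class, on its own, to meet the submodular-cover demand at every time step; in general it cannot, since pages from several classes may be needed to reach the threshold. To repair this you would have to specify per-bucket residual targets (derived from a global LP solution or from a sequential residual-cover pass) and argue they can be met at the claimed cost --- none of which is spelled out. Even the natural repair (round each class separately against one global fractional solution) gives $O(\log C)\cdot O(\log f(\cP))$, i.e.\ a multiplicative rather than additive dependence, which is weaker than the bound stated in the theorem. The paper avoids this entirely by pushing $C$ into the boosting factor rather than into a decomposition.
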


%Observe that \Cref{thm:hardness} 

%We leave the interesting open question of whether there is a $O(\log (\ell))$-competitive {\em randomized} algorithm for non-linear paging for future work.  
%Our randomized algorithm relies on a deterministic fractional algorithm for a new {\em linear programming (LP)} relaxation of non-linear paging (The LP is concretely given in \Cref{sec:LP}). 
%Our results rely on a new LP relaxation to the problem. 

%\begin{theorem}
%	\label{thm:logL}
%There is a deterministic $O(\log (\ell))$-\textnormal{competitive} algorithm for a \textnormal{non-linear paging} relaxation whose integral solutions are exactly the set of solutions for the problem.  
%\end{theorem}

\comment{
\subsection{Technical Contribution}
\label{sec:technicalC}

The algorithm described in \Cref{thm:deterministic} is based on a new LP relaxation, designed for the most general form of non-linear paging. Previous work on generalized paging \cite{bansal2012randomized,adamaszek2018log} and on other online covering problems \cite{gupta2020online,coester2022competitive} use exponentially many {\em knapsack cover} constraints that describe the submodular cover constraint using linear inequalities. This powerful technique, originated by Wolsey \cite{wolsey1982analysis} for offline submodular cover, reduces the integrality gap of the relaxation and enables the design of online algorithms with small competitive ratios.

  Paging problems are often studied from the viewpoint of a {\em covering} problem (e.g., \cite{bansal2012primal}), where the complement of the cache (i.e., pages outside the cache) needs to be covered. In classic paging, at any point of time at least $n-k = n-k$ pages are not in the cache. %To generalize this concept for non-linear paging, 
Consider the {\em covering} function $g:2^{\cP} \rightarrow \mathbb{N}$ of the feasibility function $f$ defined as the (non-linear) size {\em outside} of the cache: $g(S) = f(\cP)-f(\cP \setminus S)$ for every $S \subseteq \cP$. Observe that for classic paging it holds that $g(S) = n-(n-|S|) = |S|$ and the feasibility constraint translates to $g(S) \geq n-k$ at all times. If $f$ is submodular (submodular paging), then $g$ is supermodular. Vice versa, if $f$ is supermodular, then $g$ is submodular. In our notation, given a variable $x_p$ indicating the contribution of $p$ to the cover, knapsack constraints require that every subset $S \subseteq \cP$ satisfy that $\sum_{p \in S} x_p \cdot g_S(\{p\}) \geq g_S(\cP)$.\footnote{For $S,S'\subseteq \cP$ %and $p \in \cP$ 
	we use the notation $g_S(S') = g(S' \cup S)-g(S)$.}  

Knapsack cover constraints form a relaxation of the covering problem when the cover function $g$ is submodular \cite{gupta2020online}. Hence, for supermodular paging, in which the induced covering function $g$ is submodular, we may use the classic knapsack cover constraint. However, for submodular paging, the covering function $g$ is supermodular; here, the classic form of knapsack cover constraints does not form a relaxation of the problem.  For example, if $g(S) = 1$ for $S = \cP$ and $g(S) = 0$ otherwise, the knapsack constraints are not satisfied for the only solution that covers a demand of $k = 1$, which is the entire set $\cP$. Specifically, $\cP$ does not satisfy the knapsack constraint for $S = \emptyset$, i.e.,  $\sum_{p \in \cP} x_p \cdot g_{\emptyset}(\{p\}) = 0$ but $g_{\emptyset}(\cP) = 1$.

To circumvent the limits of knapsack constraints for submodular paging, we design a new set of covering constraints oblivious to the behavior of $f$ and $g$, giving a tight competitive ratio for deterministic algorithms of any non-linear paging instance. Specifically, our constraints require removing at least one page from every infeasible set. Then, using the online primal dual approach applied on the latter LP, we obtain $\ell$-competitive deterministic algorithm. Specifically, upon the arrival of a page inducing an infeasible set of pages in cache, our algorithm identifies a minimally infeasible set of pages and continuously increase their corresponding dual variable in the LP while evicting {\em tight} pages.  %This turns out to be a relaxation  
We remark that our LP has integrality gap $\ell$ and cannot be used to obtain a polylogarithmic randomized competitive algorithm for the problem.  

}

\comment{

Our algorithm exploits the non-linearity of the feasibility function $f$ to obtain a simple and elegant algorithm for the general (weighted) model with arbitrary page costs. Specifically, we first design an algorithm   for the unweighted problem. Then, for solving a weighted instance online, we construct a reduced unweighted instance by transforming the feasibility function. We simulate the unweighted algorithm on this reduced instance, in an online fashion using multiple {\em sub-time points}. This surprising use of reduced  non-linear feasibility functions to tackle weighted instances, simplifies the analysis and gives the tight bound for generalized paging shown in \Cref{thm:generalized}.  This technique may be useful in other online settings. 

}

%This technique,   

%While there are LP relaxations for online covering problems involving a submodular function 

%\section{Preliminaries}

%To simplify notation, for a set $S$ and an element $p$ we use $S+p = S \cup \{p\}$ and $S-p = S \setminus \{p\}$. In addition, for any $w \in \mathbb{N}$, let $[w] = \{1,2,\ldots,w\}$. Let $\cP$ be a set and let $f:2^{\cP} \rightarrow \mathbb{N}$ be a set function of $\cP$. The function $f$ is called {\em monotone} if for every $S \subseteq \cP$ and $S'\subseteq S$ it holds that $f(S') \leq S$. In addition, $f$
 %is called {\em submodular} if for every $S,S' \subseteq \cP$ it holds that $f(S)+f(S') \geq f\left(S \cup S'\right)+f\left(S \cap S'\right)$. Conversely, $f$ is called {\em supermodular} if for every $S,S' \subseteq \cP$ it holds that $f(S)+f(S') \leq f\left(S \cup S'\right)+f\left(S \cap S'\right)$.

%\section{Deterministic Algorithm for Non-Linear Paging}

%We first consider the easier case where the costs of the pages are uniform in \Cref{sec:algUneweighted}; then, we give a reduction from the general case in \Cref{sec:reduction} and finally give the proof of \Cref{thm:deterministic,thm:generalized} in \Cref{sec:RedAn}. %This reduction demonstrates the power of non-linear paging to express a wide range of constraints. 

\section{Deterministic Algorithm for Non-Linear Paging}
\label{sec:algUneweighted}

%In this section, we derive a tight deterministic $\ell$-competitive algorithm for general non-linear paging. 
In this section, we present a deterministic algorithm for non-linear paging and show that it gives a tight competitive ratio for the problem, thus providing proofs for \Cref{thm:deterministic} and \Cref{thm:generalized}. Our algorithmic results are based on a new LP relaxation for the problem which we introduce here.
%s apply the online primal-dual method together with a new LP which is described below.  

\subsection{LP Relaxation for Non-Linear Paging}
\label{sec:LP}

   Consider a non-linear paging instance with a set of pages $\cP$, a feasibility function $f$, cache threshold $k$, a set of time points $T$, and a page request $p_t \in \cP$ for every time $t \in T$. To simplify notation, for a set $S$ and an element $p$ we use $S+p = S \cup \{p\}$ and $S-p = S \setminus \{p\}$. The LP is as follows.
 %We use the following {\em linear programming (LP)} relaxation of non-linear paging. %For the following, fix a sequence of requests $I \subseteq \om^{\mathbb{N}}$. For some $w \in \mathbb{N}$, let $[w] = \{1,2,\ldots,w\}$. For every $p \in \om$ and $t \in T(I)$ let $R(p,t) = \left\{ t' \in [t]~|~I_{t'} = p \right\}$ be the set of time points until time $t$ in which the requested page is $p$; in addition, let $r(p,t) = \max_{t' \in R(p,t)} t'$ be the time $j$ of the last request for $p$ up to time $t$; assume that $r(p,t) = -1$ if $R(p,t) = \emptyset$. We also use $R(p,j) = \{t \in T(I)~|~j = r(p,t)\}$ for all time points between the $j$-th request to $p$ to the last time point before the $(j+1)$-th request for page $p$. 
 
The variables of the LP are $x_{p}(j)$ for every page $p \in \cP$ and the $j$-th time that $p$ is requested.  Intuitively, the value of the variable $x_{p}(j)$ indicates to what extent page $p$ is evicted between its $j$-th and $(j+1)$-th requests.  %For some page $t \in T(I)$, let $\bar{\cC}(I_t) = \{S \subseteq \om~|~|S|>k, I_t \in S\}$ be the set of all non-valid cache states of $I_t$ that contain $I_t$. These sets are called {\em violating sets}.    
The constraints state that for every infeasible set $S$ containing the requested page $p_t$, for time point $t$, we must ``break" this set - i.e., evicting at least one page from $S - p_t$. If this constraint is satisfied for every such infeasible set by an integral solution, then it induces a feasible set of pages in the cache at all times. See \Cref{fig:M1} for a visualization of these constraints. %The LP is as follows. 

We use $n_p$ to denote the number of requests for page $p$ during the request sequence. %and let $[n_p] = \left[n_p\right]$; 
Also, %with a slight abuse of notation, 
we use $r(p,t)$ to denote the number of requests for page $p$ until time $t$. We assume that $x_p(0)$ is a variable always set to $0$, for all $p \in \cP$. To simplify notation, for every $t \in T$, let $\cS(t) = \left\{ S \subseteq \cP ~|~  p_t \in S \textnormal{ and } f(S)>k \right\}$ denote the infeasible sets containing $p_t$. For any $w \in \mathbb{N}$, let $[w] = \{1,2,\ldots,w\}$. Our LP relaxation is as follows. 

%Finally, let $T$ be the set if all time points. %(including $t$). %variable $x_p(j)$ where there are exactly   %require that the total fractional amount taken from each violating set $S \in \bbc$ is at least $|S|-k$; this is the minimum number of pages  from $S$ that are required to be outside of the cache in any valid cache state of $I_t$. %For the corner case where $x_p(r(p,t)) = x_p(-1)$, i.e., for $R(p,t) = \emptyset$, assume that $x_p(-1) = 1$. This ensures the feasibility of all constraints and does not incur an additional cost for the LP. The LP is defined as follows. 
	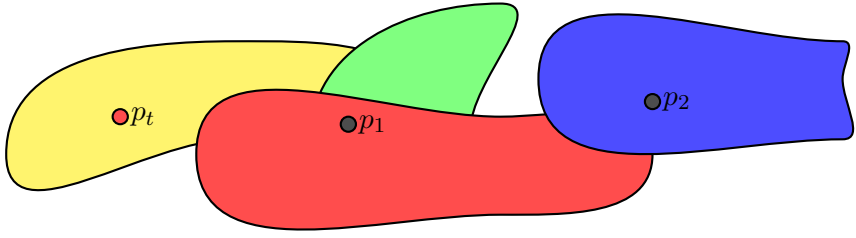
\begin{figure}
	\centering
	
	\begin{tikzpicture}[thick]
		%	\tikzset{vertex/.style = {shape=circle,draw,minimum size=1.5em}}
		%\tikzset{edge/.style = {->,> = latex'}}
		%\tikzstyle{edge} = [fill,opacity=.5,fill opacity=.5,line cap=round, line join=round, line width=50pt]
		\tikzstyle{edge} = [->, line width=1pt]
		%\tikzset{edge/.style = {->,> = latex'}}
		\node (v1) at (0,2) {};
		\node (v5) at (9,5) {};
		\node (v6) at (9,4.5) {};
		\node (v7) at (9,4) {};
		
		%									\node (v8) at (9,3.5) {};
		%									\node (v9) at (9,3) {};
		%									\node (v10) at (9,2.5) {};
		%									
		%									\node (v11) at (9,2) {};
		%									\node (v12) at (9,1.5) {};
		%									\node (v13) at (9,1) {};
		%									
		%									\node (v14) at (9,0.5) {};
		%									\node (v15) at (9,0) {};
		%									\node (v16) at (9,-0.5) {};
		%									
		%									\node (v17) at (13,3.5) {};
		%									\node (v18) at (13,1.5) {};
		
		\node (v2) at (1.5,3) {};
		%	\node at (2,1) {$\bf \textcolor{black}{U}$};
		\node (v3) at (4,2.5) {};
		\node (v4) at (5,2) {};
		\node (v5) at (7,3) {};
		%	\node at (7,1) {$\bf \textcolor{black}{V}$};
		%\node at (12,1) {$\bf \textcolor{black}{W}$};
		\node (u0) at (4,2) {};
		\node (u1) at (5,2) {};
		\node (u2) at (6,2) {};
		
		\node (x0) at (8.5,3) {};
		\node (x1) at (9.5,3) {};
		\node (x2) at (10.5,3) {};

		%	\node (p2) at (1,2.5) {$\bf \textcolor{red}{p_t}$};
		
		\begin{scope}[fill opacity=0.4]
			\filldraw[fill=yellow!70] ($(v1)+(-0.5,0)$) 
			to[out=90,in=180] ($(v2) + (1,0.5)$) 
			to[out=0,in=90] ($(v3) + (1,0)$)
			to[out=270,in=0] ($(v2) + (1,-0.8)$)
			to[out=180,in=270] ($(v1)+(-0.5,0)$);

			\filldraw[fill=green!50] ($(v4)+(-1.5,0.2)$)
			to[out=90,in=180] ($(v4)+(1,2)$)
			to[out=0,in=90] ($(v4)+(0.6,0.3)$)
			to[out=270,in=0] ($(v4)+(1,-0.6)$)
			to[out=180,in=270] ($(v4)+(-1.5,0.2)$);

			%	\begin{scope}
				%	\filldraw[fill opacity=0.9]
				\filldraw[fill=red!70] ($(u1)+(-3,0)$) 
				to[out=90,in=180] ($(u2) + (0,0.5)$) 
				to[out=0,in=90] ($(u0) + (4,0)$)
				to[out=270,in=0] ($(u2) + (0,-0.8)$)
				to[out=180,in=270] ($(u1)+(-3,0)$);
				
				\filldraw[fill=blue!70] ($(x1)+(-3,0)$) 
				to[out=90,in=180] ($(x2) + (0,0.5)$) 
				to[out=0,in=90] ($(x0) + (2,0)$)
				to[out=270,in=0] ($(x2) + (0,-0.8)$)
				to[out=180,in=270] ($(x1)+(-3,0)$);
			\end{scope}

			%	\filldraw[fill=brown!70] (v1) circle (0.1) node [right] {$u_1$};
			%\filldraw[fill=darkgray!70] (v2) circle (0.1) node [below left] {$u_2$};
			%\filldraw[fill=blue!70] (v3) circle (0.1) node [left] {$u_3$};
			%\filldraw[fill=red!70] (v4) circle (0.1) node [below] {$v_1$};
			%\filldraw[fill=purple!70] (v5) circle (0.1) node [below] {$v_2$};
			%\filldraw[fill=white!70] (u1) circle (0.1) node [right] {$w_3$};
			\filldraw[fill=black!70] (4,2.4) circle (0.1) node [right] {$p_1$};
			
			\filldraw[fill=black!70] (8,2.7) circle (0.1) node [right] {$p_2$};
			
			\filldraw[fill=red!70] (1,2.5) circle (0.1) node [right] {$p_t$};
			%	\filldraw[fill=pink!70] at (4,2.5) circle (0.1) node [left] {$p_1$};
		\end{tikzpicture}
		\caption{\label{fig:M1} An illustration of the LP constraints. Each colored shape represents a minimally infeasible set of pages upon the arrival of a page $p_t$. Removing pages $p_1$ and $p_2$ ensures cache feasibility.}  
	\end{figure}

\begin{equation}
	\label{eq:LP}
	\begin{aligned}
		%\textsf{Primal-LP}:
  ~~~~~~~~	& \min~~~ \sum_{p \in \cP~} \sum_{j \in [n_p]}  x_p(j) \cdot c(p)\\
		& ~~~~~~~~\text{s.t. }\\
		& \sum_{p \in S-p_t} x_p(r(p,t)) \geq 1, 	~~~~~~~~~~~~~~\forall t \in T~\forall S  \in \cS(t) \\ 
		& x_p(j) \geq 0 ~~~~~~~~~~~~~~~~~~~~~~~~~~~~~~~	\forall p \in \cP~\forall j \in [n_p]
	\end{aligned}
\end{equation} 
We now define the dual LP. For page $p \in \cP$ and $j \in [n_p]$ let $I(p,j) = \{t \in T~|~j = r(p,t)\}$ be the {\em interval} of all time points between the $j$-th request to $p$ till the last time point before the $(j+1)$-st request for page $p$. %$I(p,j)$ be the interval of time points from the $j-1$-th request to $p$ to the $j$-th request to $p$.   %For short, given $p \in \om$, variables $y_t(S)$ for $t \in T(I)$ and $S \in \bbc$, let $$y_t(p) = \sum_{S \in \bbc~|~p \in S-I_t} y_t(S)$$
%For every $p \in \om$ and $j \in R(p,T(I))$ let $R(p,j) = \left|\left\{ t' \in [t]~|~I_{t'} = p \right\}\right|$. 
The {\em dual} of \eqref{eq:LP} is the following. 
\begin{equation}
	\label{eq:dualD}
	\begin{aligned}
		%\textsf{Dual-LP}:
  ~~~~~~~~	& \max~~~ \sum_{t \in T~} \sum_{S  \in \cS(t)} y_t(S)\\
		& ~~~~~~~~\text{s.t. }\\
		& \sum_{t \in I(p,j)~} \sum_{S \in \cS(t)\big|p \in S-p_t} y_t(S) \leq c(p), 	~~~~~~~~~~~~~~\forall p \in \cP~\forall j \in [n_p]\\ 
	%	& y_t(S) \geq 0 ~~~~~~~~~~~~~~~~~~~~~~~~~~~~~	\forall t \in T(I) ~\forall S \in \bbc. 
	\end{aligned}
\end{equation}

%To further simplify the notations, for $p \in \om$ and $j \in R(p,T(I))$ let $y_p(j) = \sum_{t \in R(p,j)} y_t(p)$ (i.e., the left side of the dual constraints). 
We use $\textsf{Primal-LP}$ and $\textsf{Dual-LP}$ to denote the values of the optimal solutions for the primal and dual programs, respectively, and by $\OPT$ the offline (integral) optimum.  
%The LP \eqref{eq:LP} is used in \cite{bansal2012randomized} in a generalization of weighted paging, and it is in particular a relaxation of a solution for weighted paging. Thus, 
%Using weak duality we have the following result. 
The next result follows from weak duality and since our LP is a relaxation of non-linear paging.
  
\begin{lemma}
	\label{lem:relaxation2}
%	For every unweighted \textnormal{non-linear paging} instance $I$ it holds that 
$\textnormal{\textsf{Dual-LP}} \leq \textnormal{\textsf{Primal-LP}} \leq \OPT$.
\end{lemma}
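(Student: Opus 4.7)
The plan is to establish the two inequalities $\textsf{Dual-LP} \leq \textsf{Primal-LP}$ and $\textsf{Primal-LP} \leq \OPT$ in turn. The first is weak LP duality applied to the primal--dual pair \eqref{eq:LP}--\eqref{eq:dualD}: I would verify that the constraint matrices are transposes of each other---the primal variable $x_p(j)$ appears in the row indexed by $(t,S)$ precisely when $t \in I(p,j)$, $S \in \cS(t)$, and $p \in S - p_t$, which matches the sum defining the dual constraint for $(p,j)$---and conclude via the standard double-counting argument using non-negativity of all variables.

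For the second inequality I would lift an integral optimal paging schedule to a primal-feasible solution of \eqref{eq:LP} whose objective equals $\OPT$. Let $(C_t)_{t \in T}$ be the cache configurations maintained by an optimal integral schedule, and set $x_p(j) = 1$ if $\OPT$ evicts page $p$ at least once during the interval $I(p,j)$ and $x_p(j) = 0$ otherwise. Since every eviction is charged exactly to the interval containing it, $\sum_{p,j} c(p) \cdot x_p(j)$ equals the total eviction cost of $\OPT$.

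It then suffices to verify primal feasibility at an arbitrary pair $(t, S)$ with $S \in \cS(t)$. Because $\OPT$ keeps $f(C_t) \leq k$ while $f(S) > k$, monotonicity of $f$ forces $S \not\subseteq C_t$; together with $p_t \in C_t$ this produces some $p \in S - p_t$ with $p \notin C_t$, and $\OPT$ must have evicted $p$ at some point inside $I(p, r(p,t))$, giving $x_p(r(p,t)) = 1$ and certifying the constraint $\sum_{p' \in S - p_t} x_{p'}(r(p',t)) \geq 1$. The one genuinely delicate point---and essentially the only real obstacle---is the boundary case in which such a $p \in S - p_t$ has never been requested before time $t$, so $r(p,t) = 0$ and $x_p(0) \equiv 0$. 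I would dispose of this by the harmless convention of adding a dummy initial request for every page (equivalently, treating $x_p(0)$ as a free variable at zero cost), which ensures $r(p,t) \geq 1$ whenever $p$ could matter for a constraint and leaves $\OPT$ unchanged. Chaining the two bounds then yields the claim.
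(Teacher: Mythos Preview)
Your proof is correct and takes essentially the same approach as the paper: weak duality for the first inequality, and lifting an optimal integral schedule to a feasible LP solution of equal cost for the second. You are in fact more careful than the paper about the boundary case of pages not yet requested---the paper's proof simply argues that since the infeasible set $S$ cannot be contained in the feasible cache $C_t$, some $p \in S - p_t$ is absent, without explicitly addressing whether $r(p,t) \geq 1$; your dummy-initial-request (equivalently, $x_p(0)$ as a free variable) convention cleanly handles this.
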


\begin{proof}
	The first inequality follows from weak duality, since \eqref{eq:dualD} is the dual program of \eqref{eq:LP}. For the second inequality, we show that \eqref{eq:LP} is a relaxation of non-linear paging. Consider an integral feasible solution $M$ to our instance; define a solution $x$ for \eqref{eq:LP} such that $x_p(j) = 1$ if and only if page $p$ is removed from the cache during the interval $I(p,j)$ in $M$, for every $p \in \cP$ and $j \in [n_p]$.  Then, for every time $t \in T$ and $S \in \cS(t)$, $S$ cannot be fully contained in the cache at time $t$, since $M$ is a feasible solution, and there is at least one $p \in S-p_t$ which is not in the cache at this time. Therefore, the primal constraint corresponding to $t$ and $S$ is satisfied. We conclude that $x$ is feasible for \eqref{eq:LP}; consequently, \eqref{eq:LP} is indeed a relaxation of non-linear paging, implying the second inequality.   
\end{proof} 

%By the proof of \Cref{lem:relaxation2}, we can use interchangeably the terms {\em solution} for the instance and an integral primal solution for \eqref{eq:LP}.  

%\sub
\subsection{A Deterministic Algorithm for Non-Linear Paging}
%\label{sec:algUneweighted}

%In this section, we present a deterministic algorithm for non-linear paging, and show that it gives a tight competitive ratio for the problem; this gives the proof of \Cref{thm:deterministic} and \Cref{thm:generalized}. 

In this section, we give a primal-dual algorithm, based on the LP relaxation of non-linear paging presented in \Cref{sec:LP}. For brevity, we denote the left handside of the dual constraint corresponding to page $p$ and $j \in [n_p]$ as
\begin{equation}
	\label{eq:Y}
	Y_p(j) = \sum_{t \in I(p,j)~} \sum_{S \in \cS(t)\big|p \in S-p_t} y_t(S).
\end{equation}
 We call a page $p$ {\em tight} at time $t$ if $Y_p(j) = c(p)$.  

The algorithm initializes an infeasible primal solution and a feasible dual solution as vectors of zeros $\bar{0}$. In every time step $t$, if the set of pages currently in cache, denoted by $\textsf{Cache}_t$, is infeasible, then our algorithm finds a subset of pages $Q$ in the cache such that: first, $Q$ is infeasible; second, $Q$ contains the requested page $p_t$, i.e., $Q \in \cS(t)$; third, $Q$ has minimum cardinality amongst all such sets.   %We remove all pages in $Q$ except $p_t$ from cache and 
we increase the variable $y_{t}(Q)$ continuously. %and simultaneously increase at the same rate the variables $x_p(r(p,t))$ for all pages in $Q$ except $p_t$. 
Once one of the pages becomes tight, we remove it from cache. If the cache is feasible, the algorithm proceeds to time $t+1$; otherwise, we repeat this process with a new set $Q'$ from the current cache.  The pseudocode is given in \Cref{alg:deterministic}.

\begin{algorithm}[h]
	\caption{$\textsf{Deterministic}$}
	\label{alg:deterministic}

%	Initialize an empty cache $S \leftarrow \emptyset$ and an empty buffer $B \leftarrow \emptyset$.
	
%	Initialize the cache $\cG_0 \leftarrow \emptyset$.
%	
Initialize (infeasible) primal and (feasible) dual solutions $x \leftarrow \bar{0}$ and   $y \leftarrow \bar{0}$. 

	\For{time $t \in T$}{
		
		%Add $p_t$ to cache  
	%	initialize $x_p(r(p,t)) \leftarrow 0$.
		
		Let $\textsf{Cache}_t = \left\{p \in \cP~|~r(p,t) \geq 1 \textnormal{ and } x_p(r(p,t)) = 0\right\}$ be the pages currently in cache.  
		
	%	$S \leftarrow \cC+p_t$ and remove from buffer $B \leftarrow B-p_t$. 
		
		\While{$f\left(\textnormal{\textsf{Cache}}_t\right)>k$\label{step:while}}{

%		\If{$B \neq \emptyset$}{
%		
%		Remove some page $p \in B$ from the cache $S \leftarrow S-p$ and from buffer $B \leftarrow B-p$. 
%		
%		}
%		
%		\Else{
		
		Find  $Q \subseteq \textsf{Cache}_t$ such that $Q \in \cS(t)$ of minimum cardinality.\label{step:Q}
		
	%	Update the buffer $B \leftarrow Q$.

		%find $q \in I(p,r(p,t))$ such that $y_{q}(S) = 0 ~\forall S \in \cS(q)$ and $p_{q} \in Q$
		
		\While{$Y_p(r(p,t)) < c(p)~\forall p \in Q-p_t$\label{step:Y<c}}{
		
		Increase $y_{t}(Q)$ %and $x_{p}(r(p,t))~\forall p \in Q-p_t$ 
		continously.\label{step:t'}
		
		}

		\For{$p \in Q-p_t$ such that $Y_p(r(p,t)) = c(p)$}{
		
			Remove $p$ from cache: $x_p(r(p,t)) \leftarrow 1, ~~\textsf{Cache}_t \leftarrow \textsf{Cache}_t-p$.\label{step:remov}
		}
				%For every : 

	%	 Increase $y_{t}(Q)$ and $x_{p'}(r(p,t))~\forall p' \in Q-p_t$ continously until the primal constraint of some $p \in Q-p_t$ becomes tight.\label{step:t'}
		
	%	}

		}

	}

%Set all undefined dual variables to zero. 

Return the (primal) solution $x$ and the (dual) solution $y$. 
\end{algorithm}

We start by showing the feasibility of the algorithm. 

	\begin{lemma}
	\label{lem:feasibilityD}
	\Cref{alg:deterministic} returns primal and dual feasible solutions of \eqref{eq:LP}.  
\end{lemma}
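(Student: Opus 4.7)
The plan is to prove primal and dual feasibility separately, each via an invariant maintained by the two while loops of \textnormal{\textsf{Deterministic}}.

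For primal feasibility, the key invariant is that once the outer while loop (step~4) terminates at time $t$, we have $f(\textnormal{\textsf{Cache}}_t)\le k$. Since $p_t$ is never evicted---step~9 restricts removals to $Q-p_t$---the requested page $p_t$ belongs to $\textnormal{\textsf{Cache}}_t$ at termination. Fix any $S \in \cS(t)$; because $S$ is infeasible and $\textnormal{\textsf{Cache}}_t$ is feasible, monotonicity of $f$ gives $S \not\subseteq \textnormal{\textsf{Cache}}_t$, so some $p \in S-p_t$ lies outside $\textnormal{\textsf{Cache}}_t$. The definition of $\textnormal{\textsf{Cache}}_t$ then forces $x_p(r(p,t))=1$ for such $p$ (treating $r(p,t)\ge 1$ as the relevant case, exactly as in the proof of \Cref{lem:relaxation2}), which delivers the LP constraint $\sum_{p\in S-p_t} x_p(r(p,t))\ge 1$. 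Primal nonnegativity is trivially preserved because step~9 only flips variables from $0$ to $1$, so constraints associated with previously processed times remain satisfied as well.

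For dual feasibility, I will show $Y_p(j)\le c(p)$ is maintained as an invariant for every $p$ and $j \in [n_p]$; nonnegativity of $y_t(S)$ is immediate since $y$ is only ever raised from zero. The inner while loop at step~6 continuously increases $y_t(Q)$ only while $Y_p(r(p,t))<c(p)$ for every $p \in Q-p_t$, so by continuity it exits exactly at the moment some $p^\star \in Q-p_t$ becomes tight, i.e., $Y_{p^\star}(r(p^\star,t))=c(p^\star)$. Step~9 then flips $x_{p^\star}(r(p^\star,t))$ to $1$ and removes $p^\star$ from $\textnormal{\textsf{Cache}}_t$. Since step~5 always picks $Q \subseteq \textnormal{\textsf{Cache}}_t$, the evicted $p^\star$ cannot appear in any later $Q'$ selected during the remainder of time $t$ or during any subsequent time $t' \in I(p^\star,r(p^\star,t))$: it stays out of the cache until it is next requested, at which point the index $r(p^\star,\cdot)$ advances and an entirely fresh variable $Y_{p^\star}(r(p^\star,t)+1)$ begins accumulating from zero.

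The delicate point, and the one I would argue most carefully, is this freezing property: once $p^\star$ is tight, no future dual update contributes to $Y_{p^\star}(r(p^\star,t))$. This rests on two observations: (i) the index $r(p^\star,\cdot)$ only changes at a subsequent request for $p^\star$, which partitions time into the disjoint intervals $I(p^\star,1),I(p^\star,2),\ldots$, so dual increments at later times modify different $Y_{p^\star}(\cdot)$ values; and (ii) the rule $Q \subseteq \textnormal{\textsf{Cache}}_t$ excludes already-evicted pages from all further dual updates inside the current interval. Termination of the inner while loop follows from continuity together with $c(p^\star)>0$, and termination of the outer while loop follows because each pass through step~9 strictly shrinks $\textnormal{\textsf{Cache}}_t$ by at least one page. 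Putting these invariants together yields both feasibility claims.
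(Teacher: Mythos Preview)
Your proof is correct and follows the same approach as the paper's, just with considerably more detail. The paper's argument is very terse: for primal feasibility it notes that the outer while loop terminates with a feasible cache (by monotonicity, in the worst case only $\{p_t\}$ remains) and then simply invokes the fact that the LP is a relaxation of the integral problem; for dual feasibility it points to the guard of the inner while loop at step~6 without spelling out the freezing property or termination, both of which you justify explicitly.
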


\begin{proof}
	%The primal is feasible 
	By \Cref{step:while,step:remov}, the algorithm keeps evicting pages until reaching a feasible set of pages in the cache. By the monotonicity of the feasibility function $f$, we eventually reach a feasible set in cache: every page $p$ is considered to be feasible alone in the cache, i.e., $f(\{p\}) \leq k$; thus, in the worst case, the content of the cache at the end of time step $t$ is $p_t$. The above shows that the primal solution $x$ is feasible, being a relaxation of the integer problem.
	In addition, the dual $y$ is feasible since for all $p \in \cP$ and $j \in [n_p]$ it holds that $Y_p(j) \leq c(p)$ by \Cref{step:Y<c}. 
\end{proof}

In the following we bound the competitive ratio of the algorithm. %Using \Cref{lem:feasibilityD} and 
Let $c(x)$ be the cost of our (integral) solution $x$ and let $v(y)$ be the value of the dual solution $y$ obtained by \Cref{alg:deterministic}. Using the selection of minimal sets for $Q$ in \Cref{step:Q} we have the following result. 
Recall the width parameter $\ell$ defined in \Cref{def:Mininfeasible}.

	\begin{lemma}
	\label{lem:k}
$c(x) \leq \ell \cdot v(y)$
\end{lemma}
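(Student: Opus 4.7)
The plan is to combine a complementary-slackness style accounting of $c(x)$ with a structural bound on the minimally infeasible sets that the algorithm can possibly pay for. Each time a variable $x_p(j)$ is set to $1$ the eviction occurs precisely at the moment $Y_p(j)$ reaches $c(p)$, so
\[
c(x) \;=\; \sum_{(p,j):\,x_p(j)=1} c(p) \;=\; \sum_{(p,j):\,x_p(j)=1} Y_p(j)
 \;=\; \sum_{(p,j):\,x_p(j)=1}\ \sum_{t \in I(p,j)}\ \sum_{\substack{S \in \cS(t)\\ p \in S - p_t}} y_t(S).
\]
Swapping the order of summation, and noting that for a fixed page $p$ and time $t$ there is a unique $j=r(p,t)$ with $t\in I(p,j)$, gives
\[
c(x) \;=\; \sum_{t \in T}\ \sum_{S \in \cS(t)} y_t(S)\cdot \bigl|\{\, p \in S - p_t : x_p(r(p,t))=1\,\}\bigr| \;\leq\; \sum_{t \in T}\ \sum_{S \in \cS(t)} y_t(S)\cdot |S-p_t|.
\]

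The second step is to bound $|S-p_t|$ on the support of $y$. The dual variable $y_t(S)$ is increased in Step~\ref{step:t'} only when $S$ is selected as $Q$ in Step~\ref{step:Q}, i.e.\ as a minimum-cardinality element of $\cS(t)$ contained in the current $\textsf{Cache}_t$. So it suffices to show that every such $Q$ satisfies $|Q|\leq \ell+1$.

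The heart of the proof is the following invariant, which I would establish by induction on time and on the iterations of the outer while-loop: throughout the run of the algorithm at time $t$, the set $\textsf{Cache}_t - p_t$ is feasible. The reason is that this set is always contained in the cache at the end of step $t-1$ (only non-$p_t$ pages are ever evicted, and the only fetch at step $t$ is of $p_t$); by the induction hypothesis that earlier cache is feasible, so by monotonicity of $f$ any subset of it is feasible. In particular, any minimally infeasible $M\subseteq \textsf{Cache}_t$ must contain $p_t$, else $M\subseteq \textsf{Cache}_t - p_t$ would give an infeasible subset of a feasible set. Hence $M\in\cS(t)$ with $M\subseteq \textsf{Cache}_t$, and by minimum-cardinality choice of $Q$ in Step~\ref{step:Q} we get $|Q|\leq |M|\leq \ell+1$, i.e.\ $|Q-p_t|\leq \ell$.

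Combining the two pieces yields $c(x)\leq \ell\cdot\sum_{t}\sum_{S\in \cS(t)} y_t(S) \;=\; \ell\cdot v(y)$, as required. The main obstacle is the invariant on $\textsf{Cache}_t - p_t$: this is what allows the purely local, minimum-cardinality rule used in the algorithm to coincide with the global width parameter $\ell$. Once the invariant is in hand, the rest is straightforward bookkeeping that crucially uses the fact that, on the support of $y$, the ``$-1$'' in $|S|-1$ is realized by the requested page itself.
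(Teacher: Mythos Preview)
Your proof is correct and follows essentially the same approach as the paper: use tightness to replace $c(p)$ by $Y_p(j)$, swap the order of summation, and bound $|S-p_t|\le \ell$ on the support of $y$. The paper simply asserts that the set $Q$ chosen in Step~\ref{step:Q} is minimally infeasible (hence $|Q|\le \ell+1$), whereas you supply the explicit justification via the invariant that $\textsf{Cache}_t-p_t$ is always feasible; this is a welcome clarification of a point the paper glosses over, not a different argument.
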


\begin{proof}
By LP \eqref{eq:LP},
\begin{equation}
	\label{eq:k1}
	\begin{aligned}
	%c(x) ={} & \sum_{p \in \cP~} \sum_{j \in [n_p]}  x_p(j) \cdot c(p)\\
	%\leq{} &  \sum_{p \in \cP~} \sum_{j \in [n_p]} x_p(j) \cdot Y_p(j)\\
 %={} &   \sum_{p \in \cP~} \sum_{j \in [n_p]} x_p(j) \cdot \left(  \sum_{t \in I(p,j)~} \sum_{S \in \cS(t)\big|p \in S-p_t} y_t(S) \right)\\
 c(x) ={} & \sum_{p \in \cP~} \sum_{j \in [n_p]}  x_p(j) \cdot c(p)
	\leq{}   \sum_{p \in \cP~} \sum_{j \in [n_p]} x_p(j) \cdot Y_p(j) \\
	={} &   \sum_{p \in \cP~} \sum_{j \in [n_p]} x_p(j) \cdot \left(  \sum_{t \in I(p,j)~} \sum_{S \in \cS(t)\big|p \in S-p_t} y_t(S) \right).\\
	\end{aligned}
\end{equation} The inequality holds since we only evict tight pages. By changing summation order in \eqref{eq:k1}, 
\begin{equation*}
	\label{eq:k2}
	\begin{aligned}
		c(x) \leq{} & \sum_{t \in T~} \sum_{S  \in \cS(t)} y_t(S) \cdot \sum_{p \in S -p_t} x_p(r(p,t)) \leq \sum_{t \in T~} \sum_{S  \in \cS(t)} y_t(S) \cdot \ell = \ell \cdot v(y). 
	\end{aligned}
\end{equation*} The second inequality holds since in \Cref{step:Q} we always choose a minimally infeasible set of cardinality at most $\ell +1$; hence, $y_t(S) \neq 0$ only if $|S| \leq \ell+1$. 
\end{proof}
%Using \Cref{lem:k}, we can give the proof of the main theorem. 
\noindent{}
We are now ready to give the proof of our main theorem.

%\subsubsection*{Proof of \Cref{thm:deterministic}}
\paragraph{Proof of \Cref{thm:deterministic}:}
By \Cref{lem:feasibilityD}, the returned solution $x$ is a feasible solution. Combined with \Cref{lem:relaxation2} and \Cref{lem:k}, %for an instance $I$ with width $\ell$ we have %the statement of the lemma. 
$c(x) \leq \ell \cdot v(y) \leq \ell \cdot \OPT$.
%By \Cref{lem:k} is of cost at most $\ell(f^*) \cdot \OPT^*$. Therefore, by \Cref{lem:widthf^*} and \Cref{lem:OPT} the cost paid by the returned solution is at most $\ell(f) \cdot \OPT$. 
%a deterministic $\ell(f^*)$-competitive algorithm for non-linear paging. 
Along with the tight lower bound from the special case of paging \cite{sleator1985amortized}, the statement of the theorem follows. 

We remark that for any $\ell$ and function $f$ such that $\ell(f) = \ell$, there is a minimally infeasible set $S$ of cardinality $\ell$; as a result, the lower bound for deterministic algorithms of paging \cite{sleator1985amortized} can be obtained from the set $S$. Namely, given a deterministic algorithm $\cA$ for unweighted non-linear paging, construct a sequence of requests for pages in $S$ that each time asks for the page missing in cache by algorithm $\cA$; clearly, $\cA$ pays a cost of one at each time step since $S$ is minimally infeasible. Conversely, the offline optimum would evict the page that will be requested latest in the future, missing only once in $\ell$ requests, giving the lower bound of $\ell$. 
\qed

\subsection{Integrality Gap}
\label{sec:integralityGap}
In this section, we show the limitations of the LP. Specifically, we show that the integrality gap of the LP defined in \eqref{eq:LP} is at least $\ell$; together with our algorithmic upper bound the integrality gap is exactly $\ell$. We then discuss a stronger LP formulation based on \eqref{eq:LP}. 

\begin{lemma}
    \label{lem:integralityGap}
    The integrality gap of \textnormal{LP} \eqref{eq:LP} is $\ell$. 
\end{lemma}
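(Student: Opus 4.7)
The plan is to exhibit a unit-cost classic-paging instance on which $\OPT/\textnormal{\textsf{Primal-LP}}$ tends to $\ell$; combined with the chain $\OPT \le c(x) \le \ell\cdot v(y) \le \ell\cdot\textnormal{\textsf{Primal-LP}}$ implicit in the proof of \Cref{thm:deterministic}, this will pin the integrality gap at exactly $\ell$.

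Concretely, I would use pages $\cP = \{p_1,\ldots,p_\ell\} \cup \{q_1,\ldots,q_R\}$ with unit eviction costs, linear feasibility $f(S)=|S|$, and cache threshold $k=\ell$; here every $(\ell{+}1)$-subset of $\cP$ is minimally infeasible, so $\ell(f)=\ell$. The request sequence is $\sigma = p_1,p_2,\ldots,p_\ell,q_1,q_2,\ldots,q_R$, in which each page is requested exactly once, and $R$ will be sent to infinity at the end.

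For the integer lower bound, the first $\ell$ requests fill the initially empty cache without any evictions, while each of the subsequent $R$ requests brings a previously-unseen page into a cache already at capacity $\ell$, forcing exactly one eviction; hence $\OPT = R$. For a feasible fractional upper bound, I would set $x_p(1)=1/\ell$ for every $p \in \{p_1,\ldots,p_\ell\}\cup\{q_1,\ldots,q_{R-1}\}$ and zero otherwise. At any time $t=\ell+j$ with $j\ge1$ (requesting $q_j$) and any minimally infeasible $S\ni q_j$ whose $\ell$ partners in $S-q_j$ have all been requested before time $t$, the LP constraint reduces to a sum of $\ell$ terms each equal to $1/\ell$, i.e., exactly $1$; constraints indexed by infeasible sets containing a page not yet requested at time $t$ are handled by the same reading used in the proof of \Cref{lem:relaxation2}, since any such set cannot actually lie in the cache. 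The cost of the proposed fractional solution is $\ell\cdot(1/\ell)+(R{-}1)\cdot(1/\ell) = 1 + (R{-}1)/\ell$, so $\textnormal{\textsf{Primal-LP}} \le 1+(R{-}1)/\ell$ and therefore $\OPT/\textnormal{\textsf{Primal-LP}} \ge R/(1+(R{-}1)/\ell) \to \ell$ as $R\to\infty$.

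The main obstacle I expect is the careful bookkeeping of LP constraints indexed by minimally infeasible sets $S$ that contain a page not yet requested by time $t$, since the convention $x_p(0)=0$ from \Cref{sec:LP} would, taken literally, render such constraints trivially unsatisfied. The natural resolution, consistent with the proof of \Cref{lem:relaxation2}, is to restrict attention to the constraints that can actually be violated by any integer solution -- namely those whose pages have all already been requested by time $t$ -- or equivalently to adopt the reading $x_p(0)=1$ (a page not yet requested is "outside the cache"). Under either interpretation the combinatorial calculation above goes through unchanged, and the asymptotic gap of $\ell$ follows.
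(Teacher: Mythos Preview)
Your approach is essentially the same as the paper's: a classic-paging instance with unit costs, each page requested exactly once, and fractional evictions of $1/\ell$ per page, then letting the number of pages tend to infinity. The paper parametrizes by $n$ and $k$ (with $\ell=k$) while you split the pages into $\ell$ ``fillers'' and $R$ ``new'' pages, but the computation is identical.

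One point worth noting: you are actually \emph{more} careful than the paper about the $x_p(0)$ convention. The paper's own proof writes $\sum_{p\in S} x_p(1)\ge 1$ without checking what happens when $S$ contains pages not yet requested at time $t$; under the literal convention $x_p(0)=0$ stated in \Cref{sec:LP}, such constraints are violated (e.g., at $t=1$ take any $(\ell{+}1)$-set through $p_1$). Your observation that the intended reading must be $x_p(0)=1$ --- a page never requested is outside the cache --- is exactly what makes both the relaxation proof (\Cref{lem:relaxation2}) and this integrality-gap argument go through, and the paper implicitly relies on it as well.
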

\begin{proof}
For some $n \geq 1$, consider an instance with $n$ pages $\cP = \{p_1,\ldots, p_n\}$ with uniform costs $c(p) = 1~\forall p \in \cP$ and a uniform feasibility function $f$ as in classic paging, i.e.,  $f(S) = |S|$ for all $S \subseteq \cP$, and some cache capacity $k$. Thus, all minimally infeasible sets are of cardinality $k+1$ and it follows that $\ell(f) = k$. Define the sequence of requests $p_1,\ldots, p_n$. That is, each page is requested exactly once. Observe that each request results in a page fault. Hence, any integral solution, in particular the optimal integral solution, evicts at least $n-k$ pages. On the other hand, define a fractional solution such that $x_p(1) = \frac{1}{k}$ for all $p \in \cP$, i.e., each page is evicted after its first request with fraction $\frac{1}{k}$ (note that each page is requested exactly once). Consider some infeasible set $S \subseteq \cP$ such that $|S| > k$. It holds that 
$$\sum_{p \in S} x_p(1) = |S| \cdot \frac{1}{k} \geq \frac{k+1}{k} \geq 1.$$
Thus, $x$ satisfies the constraints of \eqref{eq:LP}. The cost paid by the fractional solution $x$ is $\frac{n}{k}$. Therefore, as $n$ and $k$ can be chosen arbitrarily, the integrality gap of the LP is at least $$\lim_{n \rightarrow \infty} \frac{n-k}{\frac{n}{k}} = \lim_{n \rightarrow \infty} \left(k-\frac{k^2}{n}\right) = k = \ell(f).$$
    Therefore, in general, the integrality gap cannot be smaller than $\ell$.  
\end{proof}

\subsubsection{Discussion: a Stronger LP}
The integrality gap example shows that LP \eqref{eq:LP} is not sufficient for obtaining a randomized $\textnormal{polylog}(\ell)$-competitive algorithm for general non-linear paging. Instead, we describe a stronger version of our LP \eqref{eq:LP}, %
in which we require removing from each infeasible set $S$ a set of pages $S'$, so that the complement of $S'$ in $S$, $S \setminus S'$, will be feasible in the cache (i.e., $ f\left(S \setminus S'\right) \leq k$). The strengthened LP and our fractional algorithm for solving the LP online are presented in \Cref{sec:stronger}, giving the proof of \Cref{thm:MU}. 

\comment{

\Cref{lem:integralityGap} shows that our LP \ref{eq:LP} cannot be used to obtain a randomized $\textnormal{polylog}(\ell)$-competitive algorithm. %We leave it as an interesting open question if one can design such an algorithm using the 
we describe the following stronger version of our LP. Here, we require to remove from each infeasible set $S$ the maximum number of pages $q(S)$ such that each feasible subset $T \subseteq S$ satisfies $|T| \leq |S|-q(S)$. That is, we remove the maximum number of pages possible from $S$ while forming a relaxation of the problem (the integral optimum is feasible). Formally, for every $S \subseteq \cP$ define 
$$Q(S) = \bigg\{S' \subseteq S ~\bigg|~ |S|-|S'| \geq |S''| ~\forall ~ S'' \subseteq S \text{ s.t. } f(S'') \leq k \bigg\}.$$
The set $Q(S)$ contains all subsets $S'$ of $S$ of cardinality sufficiently small, so that every feasible subset $S'' \subseteq S$ is of cardinality at most $|S|-|S'|$ (hence, we can surely evict $S'$ in terms of cardinality). Now, define
$$q(S) = \max_{S' \in Q(S)} |S'|.$$
The LP is given as follows with similar notations as in \eqref{eq:LP}.

\begin{equation}
	\label{eq:LPS}
	\begin{aligned}
		\textsf{Stronger-LP}:~~~~~~~~	& \min \sum_{p \in \cP~} \sum_{j \in [n_p]}  x_p(j) \cdot c(p)\\
		& ~~~~~~~~\text{s.t. }\\
		& \sum_{p \in S-p_t} x_p(r(p,t)) \geq q(S), 	~~~~~~~~~~~~~~\forall t \in T~\forall S  \in \cS(t) \\ 
		& x_p(j) \geq 0 ~~~~~~~~~~~~~~~~~~~~~~~~~~~~~~~~~~~	\forall p \in \cP~\forall j \in [n_p]
	\end{aligned}
\end{equation} For example, in classic paging, for a subset of pages $S \subseteq \cP$ it holds that $q(S) = |S|-k$. Thus, the constraints of the LP \eqref{eq:LPS} require removing at least $|S|-k$ pages from cache within $S$. This effectively coincides with the LP constraints used to solve (weighted) paging \cite{bansal2012primal}, thus at least for classic paging, the integrality gap of the LP does not exceed $O(\log k)$. We note that in general this LP forms a relaxation of the problem, since in any integral solution and for any infeasible set $S$, the integral solution must evict at least $q(S)$ pages from $S$. 
}

\comment{

The integrality gap example shows that LP \eqref{eq:LP} is not sufficient for obtaining a randomized $\textnormal{polylog}(\ell)$-competitive algorithm for general non-linear paging. Instead, we describe a stronger version of our LP \eqref{eq:LP}, %
in which we require to remove from each infeasible set $S$ a set of pages $S'$, so that the complement of $S'$ in $S$, $S \setminus S'$, will be feasible in the cache (i.e., $ f\left(S \setminus S'\right) \leq k$).  
%As we do not know a priori which such set $S'$ is evicted by the integral optimum, we only demand that evicting a minimum number of pages from $S$ whose complement size fits in cache. 
Formally, for a set $S \subseteq \cP$, let $\cC(S) = \left\{S' \subseteq S \text{ s.t. } f\left(S \setminus S'\right) \leq k\right\}$ be the collection of subsets of $S$ such that $ f\left(S \setminus S'\right) \leq k$ and define
\begin{equation}
	\label{eq:q(S)}
	q(S) = \min_{S' \in \cC(S)} \left|S'\right|
\end{equation}
as the number of pages needed to be evicted from $S$ at any point in time. 
The LP is given below using similar notation to \eqref{eq:LP}.

\begin{equation}
	\label{eq:LPS}
	\begin{aligned}
		\textsf{Stronger-LP}:~~~~~~~~	& \min \sum_{p \in \cP~} \sum_{j \in [n_p]}  x_p(j) \cdot c(p)\\
		& ~~~~~~~~\text{s.t. }\\
		& \sum_{p \in S-p_t} x_p(r(p,t)) \geq q(S), 	~~~~~~~~~~~~~~\forall t \in T~\forall S  \in \cS(t) \\ 
		& x_p(j) \geq 0 ~~~~~~~~~~~~~~~~~~~~~~~~~~~~~~~~~~~	\forall p \in \cP~\forall j \in [n_p]
	\end{aligned}
\end{equation} For example, in classic paging, for a subset of pages $S \subseteq \cP$ it holds that $q(S) = |S|-k$, and hence the above constraints require removing at least $|S|-k$ pages from cache within $S$, coinciding with the LP used for solving (weighted) paging \cite{bansal2012primal}. % at least for classic paging, the integrality gap of the LP does not exceed $O(\log k)$. 
Clearly, LP \eqref{eq:LPS} is a relaxation of non-linear paging, since an integral solution, for any infeasible set $S$, must evict at least $q(S)$ pages from $S$. Our fractional algorithm for solving \eqref{eq:LPS} online is presented in \Cref{sec:stronger}.

}

%We leave it as an interesting open question if one can design a randomized $\textnormal{polylog}(\mu)$-competitive algorithm for non-linear paging based on the above LP. %such an algorithm using the 

\comment{
Define 
\begin{equation}
	\label{eq:mu}
	\mu = \max_{S \subseteq \cP \text{ s.t. } f(S) \leq k} |S|
\end{equation} as the maximum cardinality of a feasible set. We give the following fractional algorithm for solving \eqref{eq:LPS} online (see \Cref{sec:stronger}). %that this natural parameter %accurately describes the hardness of non-linear paging. %(rather than $k$). 

\begin{theorem}
	\label{thm:MU}
	There is a $O\left(\log \mu\right)$-competitive algorithm for obtaining a fractional solution for \textnormal{\textsf{Stronger-LP}} \eqref{eq:LPS}. 
\end{theorem}

It as an interesting open question if a randomized $\textnormal{polylog}(\mu)$-competitive algorithm for non-linear paging can be designed by rounding the fractional solution obtained in \Cref{thm:MU}. %such an algorithm using the 

}

\comment{

	The variables of the LP are $x_{p}(j)$ for every page $p \in \cP$ and the $j$-th time that $p$ is requested.   %For some page $t \in T(I)$, let $\bar{\cC}(I_t) = \{S \subseteq \om~|~|S|>k, I_t \in S\}$ be the set of all non-valid cache states of $I_t$ that contain $I_t$. These sets are called {\em violating sets}.    
	The constraints of the LP require that for every infeasible set $S$ that contains the requested page $p_t$ of some time point $t$, we must "break" this set - evicting at least one page from $S - p_t$; as we do so for every such  infeasible set, an integral solution induces a feasible set of pages in the cache at all times. %The LP is as follows. 
	We use $n_p$ to denote the number of requests for a page $p$ during the request sequence; %and let $[n_p] = \left[n_p\right]$; 
	also, %with a slight abuse of notation, 
	we use $r(p,t)$ to denote the number of requests for page $p$ until time $t$.
	
}
	
\section{A Randomized Algorithm for Supermodular Paging} 
\label{sec:randomizedALG}

We provide here an $O(\log \mu)$-competitive algorithm for a fractional version of supermodular paging. Then, we design randomized online and offline algorithms for supermodular paging. %Due to space constraints, 
Some of the proofs are given in \Cref{sec:omit}. %The algorithm initializes the (infeasible) 

\subsection{LP for Supermodular Paging}

The starting point of our fractional algorithm is earlier work on submodular cover LP \cite{wolsey1982analysis,gupta2020online,coester2022competitive}. Consider a supermodular paging instance with a set of pages $\cP$, a feasibility function $f$ inducing the submodular cover function $g$ (recall that $g(S) = f(\cP)-f(\cP \setminus S)$ for every $S \subseteq \cP$), cache threshold $k$, a set of time points $T$, and a page request $p_t \in \cP$ for every time $t \in T$. We use the following LP relaxation of supermodular paging. As in the LP introduced in \Cref{sec:LP}, the variables of the LP are $x_{p}(j)$ for every page $p \in \cP$ and the $j$-th time that $p$ is requested. We will use the same notation as in the LP of \eqref{eq:LP}. Recall that for some $S \subseteq P$ and $p \in \cP$ we use $g_S(p) = g_S(\{p\}) = g(S+p)-g(S)$. Let $N = f(\cP)-k$ be our {\em cover demand}; at any point of time, the (non-linear) total size outside of the cache must be at least $N$. 

Our LP relaxation goes as follows; the constraints of the LP require that for every time $t$ and subset of pages $S$ (assumed to already be outside of the cache) we must evict from the cache (fractionally) a total size of at least $N-g(S) = f \left(\cP\setminus S\right)-k$, since we cannot have more than a total size of $k$ in cache. This constraint is very natural in the linear case (i.e., classic paging), but more involved in the submodular cover setting. 
   
%The constraints of the LP require that for every infeasible set $S$ that contains the requested page $p_t$ of some time point $t$, we must "break" this set - evicting at least one page from $S - p_t$; as we do so for every such  infeasible set, an integral solution induces a feasible set of pages in the cache at all times. 
%We use $n_p$ to denote the number of requests for a page $p$ during the request sequence; %and let $[n_p] = \left[n_p\right]$; 
%also, %with a slight abuse of notation, 
%we use $r(p,t)$ to denote the number of requests for page $p$ until time $t$. We also use the technical assumption that $x_p(0)$ is a variable always set to $0$, for all $p \in \cP$. To simplify the notation, for every $t \in T$ let $$\cS(t) = \left\{ S \subseteq \cP ~|~  p_t \in S \textnormal{ and } f(S)>k \right\}$$

%be the set of infeasible sets containing $p_t$. Our LP relaxation is as follows. 

%Finally, let $T$ be the set if all time points. %(including $t$). %variable $x_p(j)$ where there are exactly   %require that the total fractional amount taken from each violating set $S \in \bbc$ is at least $|S|-k$; this is the minimum number of pages  from $S$ that are required to be outside of the cache in any valid cache state of $I_t$. %For the corner case where $x_p(r(p,t)) = x_p(-1)$, i.e., for $R(p,t) = \emptyset$, assume that $x_p(-1) = 1$. This ensures the feasibility of all constraints and does not incur an additional cost for the LP. The LP is defined as follows. 

\begin{equation}
	\label{eq:LP2}
	\begin{aligned}
%		\textsf{Primal-LP}:
~~~~~~~~	& \min \sum_{p \in \cP~} \sum_{j \in [n_p]}  x_p(j) \cdot c(p)\\
		& ~~~~~~~~\text{s.t. }\\
		& \sum_{p \in \cP-p_t} x_p(r(p,t)) \cdot g_S(p) \geq N-g(S), 	~~~~~~~~~\forall t \in T~\forall S  \subseteq \cP \\ 
		& x_p(j) \geq 0 ~~~~~~~~~~~~~~~~~~~~~~~~~~~~~~~~~~~~~~~~~~~~~~	\forall p \in \cP~\forall j \in [n_p]
	\end{aligned}
\end{equation} 

In the following we define the {\em dual} LP of \eqref{eq:LP2}. 

%For a page $p \in \cP$ and $j \in [n_p]$ let $$I(p,j) = \{t \in T~|~j = r(p,t)\}$$ be the {\em interval} of all time points between the $j$-th request to $p$ to the last time point before the $(j+1)$-th request for page $p$. %$I(p,j)$ be the interval of time points from the $j-1$-th request to $p$ to the $j$-th request to $p$.   %For short, given $p \in \om$, variables $y_t(S)$ for $t \in T(I)$ and $S \in \bbc$, let $$y_t(p) = \sum_{S \in \bbc~|~p \in S-I_t} y_t(S)$$
%%For every $p \in \om$ and $j \in R(p,T(I))$ let $R(p,j) = \left|\left\{ t' \in [t]~|~I_{t'} = p \right\}\right|$. 
%The {\em dual} of \eqref{eq:LP} is the following. 
\begin{equation}
	\label{eq:dual}
	\begin{aligned}
	%	\textsf{Dual-LP}:
 ~~~~~~~~	& \max \sum_{t \in T~} \sum_{S  \subseteq \cP} y_t(S) \cdot \left( N-g(S) \right)\\
		& ~~~~~~~~\text{s.t. }\\
		& \sum_{t \in I(p,j)~} \sum_{S \subseteq \cP \big|p \in S-p_t} y_t(S) \leq c(p), 	~~~~~~~\forall p \in \cP~\forall j \in [n_p]\\ 
			& y_t(S) \geq 0 ~~~~~~~~~~~~~~~~~~~~~~~~~~~~~~~~~~~~~~	\forall t \in T ~\forall S \subseteq \cP. 
	\end{aligned}
\end{equation}

%To further simplify the notations, for $p \in \om$ and $j \in R(p,T(I))$ let $y_p(j) = \sum_{t \in R(p,j)} y_t(p)$ (i.e., the left side of the dual constraints). 
We use $\textsf{Primal-LP}$ and $\textsf{Dual-LP}$ to denote the values of the optimal solutions for the primal and dual programs, respectively, and by $\OPT$ the offline (integral) optimum. 

Before we describe our fractional algorithm, we show that it is sufficient to satisfy only {\em minimal constraints} rather than all constraints of the LP. Formally,
a primal constraint of \eqref{eq:LP2} corresponding to $t \in T$ and $S \subseteq \cP$ is called {\em minimal} for some solution $x'$ if for all $p \in \cP$ where $x'_p(r(p,t)) = 1$ it holds that $p \in S$. %if $\cP \setminus S$ is a minimally infeasible set.\footnote{See \Cref{def:Mininfeasible} for a reminder of what minimally infeasible sets are.}

	\begin{lemma}
	\label{claim:z}
	If $x'$ satisfies all minimal constraints of \eqref{eq:LP2}, then $x'$ is feasible for \eqref{eq:LP2}.
\end{lemma}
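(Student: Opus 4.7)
The plan is to show that every constraint $(t, S^*)$ of LP~\eqref{eq:LP2} is implied by the minimal constraint obtained from augmenting $S^*$ by all pages fully evicted by $x'$ at time $t$, using submodularity of the cover function $g$. Fix an arbitrary time $t \in T$ and set $S^* \subseteq \cP$. Let $E := \{p \in \cP : x'_p(r(p,t)) = 1\}$ and put $S := S^* \cup E$. By construction $S$ contains every page $p$ with $x'_p(r(p,t)) = 1$, so $(t, S)$ is a minimal constraint and, by hypothesis, is satisfied. Using that $g_S(p) = 0$ for $p \in S$, this minimal constraint can be rewritten as
\[
\sum_{p \in \cP \setminus (S \cup \{p_t\})} x'_p(r(p,t)) \cdot g_S(p) \;\geq\; N - g(S).
\]

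Next, I split the left-hand side of the $(t, S^*)$ constraint into contributions from $p \in E \setminus S^*$ (on which $x'_p(r(p,t)) = 1$) and from $p \in \cP \setminus (E \cup S^*)$ (on which $x'_p$ is kept as a factor); the terms with $p \in S^*$ drop out because $g_{S^*}(p) = 0$. This yields
\[
\sum_{p \in (E \setminus S^*) \setminus \{p_t\}} g_{S^*}(p) \;+\; \sum_{p \in \cP \setminus (S \cup \{p_t\})} x'_p(r(p,t)) \cdot g_{S^*}(p).
\]
For the first sum I would invoke subadditivity of the marginals, a consequence of submodularity: for any $A$ disjoint from $S^*$, $\sum_{p \in A} g_{S^*}(p) \geq g(S^* \cup A) - g(S^*)$. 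Applied with $A = (E \setminus S^*) \setminus \{p_t\}$, this gives a lower bound of $g(S^* \cup A) - g(S^*)$, which equals $g(S) - g(S^*)$ in the natural case $p_t \notin E$. For the second sum I use diminishing returns: since $S^* \subseteq S$ and $p \notin S$, $g_{S^*}(p) \geq g_S(p)$; hence this second sum is at least $\sum_{p \in \cP \setminus (S \cup \{p_t\})} x'_p(r(p,t)) \cdot g_S(p) \geq N - g(S)$ by the rewritten minimal constraint.

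Combining the two lower bounds gives
\[
\sum_{p \in \cP - p_t} x'_p(r(p,t)) \cdot g_{S^*}(p) \;\geq\; \bigl(g(S) - g(S^*)\bigr) + \bigl(N - g(S)\bigr) \;=\; N - g(S^*),
\]
which is exactly the $(t, S^*)$ constraint, establishing feasibility of $x'$.

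The main subtlety I foresee is the bookkeeping around $p_t$. Although $p_t$ is absent from every LP summation (indexed over $p \neq p_t$), it may a priori belong to $E$, in which case $S^* \cup A$ equals $S \setminus \{p_t\}$ rather than $S$ and one loses $g_{S \setminus \{p_t\}}(p_t) \geq 0$ in the subadditivity step. The natural interpretation -- and what arises for any $x'$ produced by an online algorithm -- is that $x'_{p_t}(r(p_t,t))$ represents eviction strictly after $t$ and is zero at the moment $t$ is processed, so $p_t \notin E$ and the identity $S^* \cup A = S$ holds exactly. The remainder of the argument is a routine combination of subadditivity and diminishing returns as outlined.
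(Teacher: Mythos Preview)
Your proof is correct and takes essentially the same route as the paper's: both exploit submodularity of $g$ to show that the constraint $(t,S^*)$ is dominated by the minimal constraint $(t,S)$ with $S = S^* \cup E$. The only organizational difference is that the paper argues contrapositively and adds one fully-evicted page $q$ at a time (showing that a violated non-minimal constraint stays violated after enlarging $S$ by $q$), whereas you add all of $E$ in one shot via the subadditivity inequality $\sum_{p \in A} g_{S^*}(p) \geq g(S^* \cup A) - g(S^*)$; both uses of submodularity are the same underneath. The $p_t \in E$ subtlety you flag is present in the paper's argument as well (its split-out of the $q$-term tacitly requires $q \neq p_t$) and is resolved by the same convention, namely that $x'_{p_t}(r(p_t,t)) = 0$ at time $t$.
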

\comment{
\begin{proof}
	The proof resembles the proof of Claim 3.10 in \cite{coester2022competitive}. Assume that there are $t \in T$ and $S \subseteq \cP$ such that $x'$ violates the constraint of the LP \eqref{eq:LP2} corresponding to $t$ and  $S$ and that there is $q \in \cP \setminus S$ such that $x'_q(r(q,t)) = 1$. We show that $x'$ also violates the constraint corresponding to $t$ and $S+q$.  %As $x'$ satisfies all minimal constraints, there is 
	\begin{equation*}
		\label{eq:PQPQPQ}
		\begin{aligned}
	 N-g(S) >{} & \sum_{p \in \cP-p_t} x'_p(r(p,t)) \cdot g_S(p)
	 = x'_q(r(q,t)) \cdot g_S(q)+\sum_{p \in \cP-p_t-q} x'_p(r(p,t)) \cdot g_S(p)\\
	 ={} & g_S(q)+\sum_{p \in \cP-p_t-q} x'_p(r(p,t)) \cdot g_S(p)
	  \geq g_S(q)+\sum_{p \in \cP-p_t-q} x'_p(r(p,t)) \cdot g_{S+q}(p).\\
		\end{aligned}
	\end{equation*} The first inequality holds since we assume that $x'$ violates the primal constraint corresponding to $t$ and $S$. The second equality follows because $x'_q(r(q,t)) = 1$. The second inequality follows from the submodularity of $g$. As $g(S+q) = g(S)+g_{S}(q)$, by the above inequality we conclude that $x'$ also violates the constraint corresponding to $t$ and $S+q$. Hence, if $x'$ satisfies all minimal constraints it also satisfies all non-minimal constraints.
 \end{proof}
 }
 %\begin{equation}
	%	\label{eq:PQPQPQ}
		%\begin{aligned}
	 %N-g(S) >{} & \sum_{p \in \cP-p_t} x'_p(r(p,t)) \cdot g_S(p)\\
	 %={} & x'_q(r(q,t)) \cdot g_S(q)+\sum_{p \in \cP-p_t-q} x'_p(r(p,t)) \cdot g_S(p)\\
	 %={} & g_S(q)+\sum_{p \in \cP-p_t-q} x'_p(r(p,t)) \cdot g_S(p)\\
	 % \geq{} & g_S(q)+\sum_{p \in \cP-p_t-q} x'_p(r(p,t)) \cdot g_{S+q}(p)\\
		%\end{aligned}
	%\end{equation}
Our fractional algorithm initializes the (infeasible) primal solution $x$ and the (feasible) dual solution $y$ both as vectors of zeros $\bar{0}$. When the requested page $p_t$ at time $t$ arrives, we do the following til $x$ satisfies all LP constraints  \eqref{eq:LP2} up to time~$t$. 

At time $t$, the algorithm considers a {\em minimally violating set} of pages $Q \subseteq \cP$. This set is a minimal set for our solution $x$ violating the primal constraint in \eqref{eq:LP2} corresponding to $t$ and $Q$. %and is of maximum cardinality amongst all such sets.   %We remove all pages in $Q$ except $p_t$ from cache and 
We increase variable $y_{t}(\cP \setminus Q)$ continuously and at the same time increase variables $x_p(r(p,t))$ for all pages in $(\cP-p_t) \setminus Q$ except $p_t$. The increasing rate is a function of $Y_p(r(p,t))$, where %recall that $Y_p(r(p,t))$
\begin{equation}
	\label{eq:Y2}
	Y_p(j) =\sum_{t \in I(p,j)~} \sum_{S \subseteq \cP \big|p \in S-p_t} y_t(S)
\end{equation}  is the left hand side of the corresponding dual constraint of $p$ and $j = r(p,t)$ in \eqref{eq:LP2} (analogously to \eqref{eq:Y}). This growth function has an exponential dependence on $Y_p(r(p,t))$ scaled by the cost of the page $c(p)$ and the number of pages possible in cache - the parameter $\mu$. %, and the total cover demand. 
The growth of the variable $x_p(j)$ stops once it reaches $\frac{1}{2}$.

%To compute $x_p(j)$, we continuously increase the (dual) variable $y_t(Q)$, corresponding to the non-valid set of pages in the cache at time $t$. At the same time, we increase the probability $x_p(t)$ (and equivalently, $X_p(t)$) according to an exponential function that depends on $y_p(j)$, where recall that $y_p(j)$ is the sum over all dual variables corresponding to $p$ in the interval $R(p,j)$. Once a valid distribution is constructed, we sample a page according to $x(t)$ and achieving a valid cache state once again. We also remove from the cache pages $p'$ for which $X_{p'}(t) \geq 1$. The pseudocode of the algorithm is given in \Cref{alg:fractional}. 

\begin{algorithm}[h]
	\caption{$\textsf{Fractional}$}
	\label{alg:fractional}
	
%	Initialize the cache $\cG_0 \leftarrow \emptyset$.
	Initialize (infeasible) primal solutions $x,z \leftarrow \bar{0}$ 
	
	Initialize (feasible) dual solution $y \leftarrow \bar{0}$.

	\For{$t \in T$}{
		
		\While{$x$ is not feasible for $t$\label{step:Whilefeasible}}{
		
		%	Let $\textsf{Frac}_t = \left\{p \in \cP~|~r(p,t) \geq 1 \textnormal{ and } x_p(r(p,t)) < 1\right\}$ be pages not fully evicted.   
		
		Find a minimal set $Q \subseteq \cP$ for $x$ %of maximum cardinality 
		such that $\sum_{p \in \cP-p_t} x_p(r(p,t)) \cdot g_Q(p) < N-g(Q)$.\label{step:fQ}

		\While{$\sum_{p \in \cP-p_t} x_p(r(p,t)) \cdot g_Q(p) < N-g(Q)$ \textnormal{ and $Q$ is minimal for $x$}\label{step:InWhile}}{
		
		Increase $y_t(\cP \setminus Q)$ continously.\label{step:fy}

			\ForAll{$p \in (\cP-p_t) \setminus Q$ }{
			
					increase $x_p(r(p,t))$ according to\label{step:fx}
			%$$\frac{\ln (k+1) \cdot }{c(p)}$$
			$$x_p(r(p,t)) \leftarrow \frac{1}{\mu} \cdot \left(     \exp \left(  \frac{\ln (\mu+1)}{c(p)} \cdot Y_p(r(p,t))\right)-1\right).$$

			\If{$x_p(r(p,t))-\frac{z_p(r(p,t))}{2}\geq \frac{1}{4 \cdot N \cdot \mu}$}{
				
				$z_p(r(p,t)) \leftarrow 2 \cdot x_p(r(p,t))$.\label{step:z}
				
			}

				\If{$x_p(r(p,t)) \geq \frac{1}{2}$}{
				
				$z_p(r(p,t)),x_p(r(p,t))  \leftarrow 1$.\label{step:1/2}
				
			}
			
			}

		}
				
			}

		}
		Return the (primal) solution $x$ and the (dual) solution $y$. 
\end{algorithm}

Once the primal constraint corresponding to $t,Q$ is satisfied, or $Q$ is no longer minimal for $x$ (a page $p \in \cP \setminus Q$ reaches $1$), there are two cases. If $x$ is feasible, the algorithm proceeds to the next time step. Otherwise, the algorithm repeats the above process with a new minimal violating set $Q'$. An additional property that will be useful in the analysis is that all non-zero entries of the obtained solution will be larger than $\Omega\left( \frac{1}{N \cdot \mu}\right)$ and there will be no fractional entries larger than $\frac{1}{2}$. Thus, we update through the algorithm another primal solution $z$; we update an entry $z_p(j)$ to be the value of $2 \cdot x_p(j)$ whenever the difference $x_p(j)-z_p(j)$ becomes larger than  $\Omega\left( \frac{1}{N \cdot \mu}\right)$. Moreover, we increase $z_p(j)$ %whenever they exceed the fraction of $\frac{1}{2}$. 
immediately to $1$ once $x_p(j)$ reaches $\frac{1}{2}$. The pseudocode of the algorithm is given in \Cref{alg:fractional}. 
%We first prove feasibility. 

\subsection{Analysis of \Cref{alg:fractional}}

We now analyze the competitive ratio of the fractional algorithm. We start by claiming the feasibility of the solutions obtained throughout the execution of the algorithm. 

	\begin{lemma}
	\label{lem:primalX}
	The primal solution $x$ defined by 	\Cref{alg:fractional} is feasible for the \textnormal{LP} \eqref{eq:LP2}. 
\end{lemma}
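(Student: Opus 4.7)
The plan is to decompose the argument into two parts: termination of the outer while loop at each time $t$, and persistence of feasibility across later time steps. The exit condition of the outer while loop in \Cref{step:Whilefeasible} is precisely that $x$ is feasible for $t$, so once termination is established, feasibility for the current time step is automatic. To handle the fact that the algorithm only acts on minimal violating constraints rather than arbitrary ones, I would invoke \Cref{claim:z}: if $x$ is not feasible for $t$ then a minimal violating constraint must exist (obtained, for instance, by enlarging any violating $S$ to include all pages with $x_p(r(p,t))=1$), so \Cref{step:fQ} can always find a suitable $Q$.

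For termination at time $t$, I would argue that each iteration of the outer loop yields one of two kinds of progress. Either (a) the constraint for the chosen $Q$ becomes satisfied, in which case it remains satisfied forever --- the LHS $\sum_{p \in \cP - p_t} x_p(r(p,t)) \cdot g_Q(p)$ is nondecreasing in $x$ (using $g_Q(p) \geq 0$, which follows from monotonicity of $g$) while the RHS is constant; or (b) the inner while loop exits because some $x_p(r(p,t))$ for $p \in (\cP-p_t) \setminus Q$ reaches $1/2$ and is then rounded up to $1$ in \Cref{step:1/2}, so $Q$ is no longer minimal and the count of fully-evicted pages at time $t$ strictly increases. The inner loop itself must hit one of these two events in finite time, because \Cref{step:fx} grows each $x_p(r(p,t))$ exponentially in $Y_p(r(p,t))$ and therefore forces the $1/2$ threshold to be crossed if the constraint remains violated. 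Since case (a) can occur at most once per subset $Q \subseteq \cP$ (once satisfied, a constraint stays satisfied) and case (b) can occur at most $|\cP|$ times, the outer loop terminates.

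For persistence across later time steps, I would fix a constraint indexed by $(t, S)$ and note that it involves only the variables $x_p(r(p,t))$ for $p \in \cP - p_t$. When processing a later time $t' > t$, the algorithm modifies only variables of the form $x_p(r(p,t'))$. If $p$ was not requested during $(t, t']$ then $r(p,t') = r(p,t)$ and the variable can only grow, which only increases the LHS (again using $g_S(p) \geq 0$); otherwise $r(p,t') > r(p,t)$ and $x_p(r(p,t))$ is left untouched. Either way the LHS does not decrease, so the constraint remains satisfied.

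The main obstacle I anticipate is correctly linking the algorithm's ``minimal-constraints only'' behavior with feasibility of the full LP \eqref{eq:LP2}; this is exactly the role played by \Cref{claim:z}. Combining termination at each time $t$ with persistence across later times then yields, by induction on $t \in T$, that the final $x$ returned by \Cref{alg:fractional} is feasible for \eqref{eq:LP2}.
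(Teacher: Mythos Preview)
Your proof is correct and follows the same route as the paper: both invoke \Cref{claim:z} to reduce feasibility of \eqref{eq:LP2} to satisfying only minimal constraints, and both observe that the exit condition of the outer while loop together with the inner loop's two stopping criteria (constraint satisfied or $Q$ no longer minimal) guarantees exactly this. Your write-up is more thorough than the paper's terse version --- you add explicit arguments for termination of the outer loop (finitely many subsets for case (a), at most $|\cP|$ events for case (b)) and for persistence of feasibility at later time steps (variables $x_p(r(p,t))$ are monotone and $g_S(p)\ge 0$) --- but these are elaborations of the same skeleton rather than a different approach.
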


\comment{
\begin{proof}
By \Cref{claim:z} we only need to show that $x$ satisfies all minimal constraints. %Observe that \Cref{step:Whilefeasible} along with the submodularity of $g$ ensures that the algorithm returns a feasible (fractional) solution $x$ to~\eqref{eq:LP2}. Specifically, 
If $x$ is not feasible at time $t$, the algorithm is guaranteed to find a minimal set $Q$ in \Cref{step:fQ} such that the primal constraint for $t$ and $Q$ is not satisfied. Then, the algorithm increases the variables of pages in $(\cP-p_t) \setminus Q$ until they satisfy the constraint of $Q,t$ by  \Cref{step:Whilefeasible} or until $Q$ is no longer minimal; as we only need to prove that $x$ satisfies all minimal constraints by \Cref{claim:z}, either the constraint corresponding to $t,Q$ is satisfied, or this constraint is not minimal anymore. %(i.e., $Q$ is not minimal). 
Thus, $x$ satisfies all minimal constraints, yielding the proof. 
\comment{
; assume for the following that $Q$ remains minimal, which can be assume by \Cref{claim:z}. If $x_p(r(p,t)) = 1$ for all $p \in (\cP-p_t) \setminus Q$ it holds that $$\sum_{p \in \cP-p_t} x_p(r(p,t)) \cdot g_Q(p) = \sum_{p \in \cP-p_t} g_Q(p) \geq g_Q(\cP-p_t) \geq N-g(Q).$$ 
The first inequality follows from the submodularity of $g$. The second inequality holds since the value $g_Q(\cP-p_t)$ takes all pages to the cover besides $p_t$; since $f(p_t) \leq k$, it follows that $g_Q(\cP-p_t) \geq N-g(Q)$. Thus, if all $x_p(r(p,t)) = 1$ for all $p \in (\cP-p_t) \setminus Q$ the constraint of $Q$ and $t$ is satisfied; otherwise, the algorithm can always increase one of the variables $x_p(r(p,t)) = 1$ for all $p \in (\cP-p_t) \setminus Q$ until reaching feasibility for $x$. 
}
\end{proof}
}
%Next, we claim that $z$ is also feasible as a solution for the LP. %The proof relies on \Cref{lem:primalX} and the following auxiliary claim. 
%Using \Cref{lem:primalX} and \Cref{claim:z} 
%We now prove the feasibility of the solution $z$ returned by the algorithm. 

	\begin{lemma}
	\label{lem:zFeasible}
	The solution $z$ returned by \Cref{alg:fractional} is feasible for \eqref{eq:LP2}. 
\end{lemma}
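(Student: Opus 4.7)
The plan is to use the update rules for $z$ in \Cref{alg:fractional} to derive a pointwise comparison between $z$ and $x$, and then to leverage the already-established feasibility of $x$ from \Cref{lem:primalX} to transfer feasibility over to $z$. Since by \Cref{claim:z} it suffices to verify only the constraints of \eqref{eq:LP2} that are minimal for $z$, I would fix such a minimal constraint $(t,S)$ and show that $\sum_{p \in \cP - p_t} z_p(r(p,t)) \cdot g_S(p) \geq N - g(S)$.

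First, I would establish the following invariant, valid at every point during the execution and in particular at termination: for each page $p$ and each $j \in [n_p]$, exactly one of the following holds, namely (i) $z_p(j) = x_p(j) = 1$; or (ii) $z_p(j) < 1$ and $x_p(j) - z_p(j)/2 < \frac{1}{4N\mu}$. This is a direct induction on the execution: the trigger in \Cref{alg:fractional} fires precisely when the gap $x_p - z_p/2$ would otherwise hit $\frac{1}{4N\mu}$, and upon firing $z_p$ is reset to $2x_p$, so the gap resets to $0$; between firings, $z_p$ stays constant while $x_p$ grows continuously but by less than $\frac{1}{4N\mu}$; and case~(i) is only entered when $x_p$ crosses $1/2$, in which case both variables jump to $1$ together. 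An immediate consequence is the pointwise inequality $z_p(j) \geq 2 x_p(j) - \frac{1}{2N\mu}$ for every $p$ and $j$.

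Armed with this bound and the feasibility of $x$ for the constraint $(t,S)$, for the fixed minimal constraint I would compute
\begin{align*}
\sum_{p \in \cP - p_t} z_p(r(p,t)) \cdot g_S(p)
&\geq 2 \sum_{p \in \cP - p_t} x_p(r(p,t)) \cdot g_S(p) - \frac{1}{2N\mu} \sum_{p \in \cP} g_S(p) \\
&\geq 2\bigl(N - g(S)\bigr) - \frac{1}{2N\mu} \sum_{p \in \cP} g_S(p).
\end{align*}
To conclude $\sum_p z_p g_S(p) \geq N - g(S)$, it would then suffice to establish the loss bound $\frac{1}{2N\mu} \sum_{p} g_S(p) \leq N - g(S)$, equivalently $\sum_{p \in \cP} g_S(p) \leq 2N\mu \cdot (N-g(S))$.

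The main obstacle I anticipate is precisely this last inequality on $\sum_p g_S(p)$, which calls for a structural argument combining the submodularity of $g$ with the definition of $\mu$ as the maximum number of pages that fit together in cache. The intuition I would pursue is that, by submodularity, the pages with nontrivial marginal $g_S(p)$ can be organized into at most $O(\mu)$-size feasible groups, so the aggregate marginal mass is controlled by $\mu$ times the residual cover demand $N-g(S)$; this is the natural place where the parameter $\mu$ enters and the exact constants in the $\frac{1}{4N\mu}$ threshold are calibrated. If the direct bound fails by a constant factor, I would absorb the discrepancy by replacing the ``$2$'' in the $z$-update with a slightly larger constant, which yields a stronger pointwise inequality of the form $z_p(j) \geq c \cdot x_p(j) - O(1/(N\mu))$ and only inflates the competitive ratio of \Cref{thm:randomized} by a constant.
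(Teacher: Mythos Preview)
Your plan mirrors the paper's proof almost exactly: reduce to minimal constraints via \Cref{claim:z}, derive a pointwise comparison between $z$ and $x$ on the pages outside $S$, and invoke \Cref{lem:primalX}. One cosmetic fix: your ``pointwise inequality $z_p\ge 2x_p-\tfrac{1}{2N\mu}$ for every $p,j$'' is false in your case~(i), since $z_p=x_p=1$ gives $1\not\ge 2-\tfrac{1}{2N\mu}$. This is harmless once you use minimality correctly (as the paper does): for $p\notin S$ minimality forces $z_p<1$, hence your case~(ii) applies and the bound holds; for $p\in S$ one has $g_S(p)=0$, so those terms drop out regardless.

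Where you diverge is in the loss term, and there you are overcomplicating. The paper does not attempt anything like your ``$\mu$-sized feasible groups'' decomposition. It uses two elementary moves you omit. First, if $g(S)\ge N$ the constraint is vacuous, so one may assume $N-g(S)\ge 1$; this means it suffices that the loss be bounded by an absolute constant strictly below~$1$, not by $N-g(S)$. Second, the paper bounds each marginal by $g_S(p)\le N$ and asserts $\sum_{p\in\cP-p_t}g_S(p)\le N\mu$, making the loss at most a fixed constant; then $2(N-g(S))-\text{const}\ge N-g(S)$ follows directly from $N-g(S)\ge 1$. So the obstacle you flag is handled in the paper by the crude route ``loss $\le$ constant, right-hand side $\ge 1$'' rather than by any refined combinatorial control tying $\sum_p g_S(p)$ to $N-g(S)$, and your fallback of inflating the factor~$2$ in the $z$-update is unnecessary.
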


\comment{
\begin{proof}
	By \Cref{claim:z} we only need to show that $z$ satisfies all minimal constraints of \eqref{eq:LP2}. Let $t \in T$ and let $S \subseteq \cP$ be such that $t$ and $S$ form a minimal constraint for $z$ in  \eqref{eq:LP2}. If $g(S) = N$ then the constraint of $t$ and $S$ is trivially satisfied by $z$; thus, assume that $N-g(S) \geq 1$. As the constraint of $t$ and $S$ is minimal for $z$, for all $p \in \cP\setminus S$ it holds that $z_p(r(p,t)) < 1$; thus, by \Cref{step:1/2} it follows that $z_p(r(p,t)) \leq \frac{1}{2}$. Hence,  by \Cref{step:z} for all $p \in \cP\setminus S$ it holds that 
 $z_p(r(p,t)) \geq 2 \cdot x_p(r(p,t))-\frac{1}{4 \cdot N \cdot \mu}$.
 %$z_p(r(p,t)) \leq 2 \cdot x_p(r(p,t))-\frac{1}{4 \cdot N \cdot \mu}$ %; thus, %By \Cref{step:z} it follows that 
	%\begin{equation}
		%\label{eq:A1}
		%z_p(r(p,t)) \geq 2 \cdot x_p(r(p,t))-\frac{1}{4 \cdot N \cdot \mu}
		%x_p(r(p,t)) - z_p(r(p,t)) \leq x_p(r(p,t))-\frac{z_p(r(p,t))}{2} \leq \frac{1}{4 \cdot N \cdot \ell}. 
	%\end{equation}  
 Therefore,
%	\begin{equation}
%		\label{eq:x-z}
%		\sum_{p \in \cP-p_t} g_S(p) \cdot z_p(r(p,t)) \geq N \cdot \mu\cdot \frac{1}{4 \cdot N \cdot \mu} = \frac{1}{4}.  
%	\end{equation} The inequality holds by \eqref{eq:A1} and since the marginal contribution of any page $p$ to $S$ is bounded by the total cover demand $N$, i.e., $g_S(p) \leq N-g(S) \leq N$. Thus, by \eqref{eq:x-z} we have 
%	\begin{equation}
%		\label{eq:z2}
%		\begin{aligned}
%			& \sum_{p \in \cP-p_t} g_S(p) \cdot z_p(r(p,t)) \\
%			={} &  \sum_{p \in \cP-p_t} g_S(p) \cdot z_p(r(p,t))+\sum_{p \in \cP-p_t} g_S(p) \cdot x_p(r(p,t))-\sum_{p \in \cP-p_t} g_S(p) \cdot x_p(r(p,t))\\
%			={} & \sum_{p \in \cP-p_t} g_S(p) \cdot x_p(r(p,t))+\sum_{p \in \cP-p_t} g_S(p) \cdot \left(x_p(r(p,t)) - z_p(r(p,t)) \right)\\
%				={} & \sum_{p \in \cP-p_t} g_S(p) \cdot x_p(r(p,t))+\sum_{p \in \cP-p_t} g_S(p) \cdot \left(z_p(r(p,t)) - x_p(r(p,t)) \right)\\
%				={} & \sum_{p \in \cP-p_t} g_S(p) \cdot x_p(r(p,t))-\sum_{p \in \cP-p_t} g_S(p) \cdot \left(x_p(r(p,t)) - z_p(r(p,t)) \right)\\
%					\geq{} & N-g(S)-\frac{1}{4}.%\sum_{p \in \cP-p_t} g_S(p) \cdot \left(x_p(r(p,t)) - z_p(r(p,t)) \right)\\
%		\end{aligned}
%	\end{equation} 
\begin{equation}
	\label{eq:z2}
	\begin{aligned}
		 \sum_{p \in \cP-p_t} g_S(p) \cdot z_p(r(p,t))
		\geq{} &   \sum_{p \in \cP-p_t} g_S(p) \cdot \left( 2 \cdot x_p(r(p,t))-\frac{1}{4 \cdot N \cdot \mu} \right)\\
		={} & 2 \cdot \sum_{p \in \cP-p_t} g_S(p) \cdot x_p(r(p,t))-\sum_{p \in \cP-p_t} g_S(p) \cdot \frac{1}{4 \cdot N \cdot \mu} \\
		\geq{} & 2 \cdot \sum_{p \in \cP-p_t} g_S(p) \cdot x_p(r(p,t))-N \cdot \mu \cdot \frac{1}{4 \cdot N \cdot \mu} \\
			\geq{} & 2 \cdot \left(N-g(S)\right)-\frac{1}{4} 
				\geq N-g(S). 
	\end{aligned}
\end{equation}
%The first inequality holds by \eqref{eq:A1}. 
The second inequality holds since the marginal contribution of any page $p$ to $S$ is bounded by the total cover demand $N$, i.e., $g_S(p) \leq N-g(S) \leq N$. The third inequality holds since $x$ is a feasible solution for the LP by \Cref{lem:primalX}.   The last inequality follows from the assumption $N-g(S) \geq 1$. By \eqref{eq:z2} the proof follows. 
\end{proof}
}

	\begin{lemma}
	\label{lem:FFeasible}
	\Cref{alg:fractional} returns a feasible dual solution to the \textnormal{LP} \eqref{eq:LP2}. 
\end{lemma}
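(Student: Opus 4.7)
The goal is to verify that $Y_p(j) \leq c(p)$ holds for every $p \in \cP$ and every $j \in [n_p]$, where $Y_p(j)$ is defined in \eqref{eq:Y2}. The first step in my plan is the structural observation that the only dual increments performed by \Cref{alg:fractional} are of the form $y_t(\cP \setminus Q)$ for some time $t$ and some minimally violating set $Q$ chosen in \Cref{step:fQ}. Such an increment contributes to $Y_p(j)$ if and only if $t \in I(p,j)$, $p \notin Q$, and $p \ne p_t$. Therefore it suffices to bound $Y_p(j)$ during the inner loops in which $p$ lies outside the current $Q$.

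Next, I will exploit the explicit update rule of \Cref{step:fx}, which maintains the equality
\[
x_p(r(p,t)) \;=\; \frac{1}{\mu}\left(\exp\!\left(\frac{\ln(\mu+1)}{c(p)}\cdot Y_p(r(p,t))\right) - 1\right)
\]
for each $p$ while it lies outside the current $Q$. This invariant is preserved across successive inner iterations (even when $Q$ is replaced by a new $Q'$) and across consecutive time points in $I(p,j)$, because both sides evolve only while $p$ sits outside the current minimally violating set. Inverting the formula at the boundary $x_p(r(p,t)) = \tfrac{1}{2}$ yields
\[
Y_p(r(p,t)) \;=\; c(p)\cdot \frac{\ln((\mu+2)/2)}{\ln(\mu+1)} \;<\; c(p),
\]
where the strict inequality uses $(\mu+2)/2 < \mu+1$, valid for every $\mu \geq 1$. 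At the exact instant the threshold $\tfrac{1}{2}$ is reached, \Cref{step:1/2} sets $x_p(r(p,t)) \leftarrow 1$, so the condition ``$Q$ is minimal for $x$'' in \Cref{step:InWhile} fails (since $p \notin Q$ but $x_p(r(p,t)) = 1$) and the inner loop exits. Any subsequent minimally violating set $Q'$ picked later, either in the same time step or in another $t' \in I(p,j)$, must contain $p$ by minimality, so no further dual increment ever contributes to $Y_p(j)$. The remaining case, in which $x_p(r(p,t))$ never reaches $\tfrac{1}{2}$ inside $I(p,j)$, leaves $Y_p(j)$ strictly below the same value and hence below $c(p)$.

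The main subtlety I will have to verify is the interaction between the continuous growth of the dual variable and the discrete check of the inner while loop: I must ensure that no extra growth of $Y_p(j)$ slips in between the instant $x_p(r(p,t))$ hits $\tfrac{1}{2}$ and the moment \Cref{step:InWhile} observes loss of minimality. Following the standard primal--dual convention (see, e.g., \cite{bansal2012primal}), I will interpret the inner loop as the limit of infinitesimal increments, so that the firing of \Cref{step:1/2} and the failure of minimality are simultaneous. Under this interpretation, the bound $Y_p(j) \leq c(p)\cdot \frac{\ln((\mu+2)/2)}{\ln(\mu+1)} < c(p)$ holds at every moment of the execution, establishing dual feasibility of $y$ for \eqref{eq:dual} as claimed.
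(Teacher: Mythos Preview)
Your proof is correct and follows essentially the same route as the paper: invert the explicit update formula of \Cref{step:fx} and use that once $x_p(r(p,t))=1$ the set $Q$ is no longer minimal, so $Y_p(j)$ stops growing. The only difference is cosmetic---you invoke the $\tfrac12$ threshold of \Cref{step:1/2} to obtain the sharper bound $Y_p(j)\le c(p)\cdot\frac{\ln((\mu+2)/2)}{\ln(\mu+1)}<c(p)$, whereas the paper simply uses $x_p(j)\le 1$ to conclude $Y_p(j)\le c(p)$.
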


\comment{
\begin{proof}
To prove that the dual $y$ is also feasible, consider some $p \in \cP$ and $j \in [n_p]$. By the update rate of $x_p(j)$ in \Cref{step:fx} it follows that 
		\begin{equation}
		\label{eq:FromSAP}
		\frac{1}{\mu} \cdot \left(     \exp \left(  \frac{\ln (\mu+1)}{c(p)} \cdot Y_p(j)\right)-1\right) = x_p(j) \leq 1.
	\end{equation} 
	The inequality holds since $Y_p(j)$ does not increase in time $t$ if $p$ is already fully evicted at this time by \Cref{step:InWhile}; that is, once $x_p(r(p,t)) = 1$ it holds that $Q$ is no longer minimal and the algorithm chooses a new minimal set $Q'$ that includes $p$. Thus, by simplifying the expression in \eqref{eq:FromSAP} it holds that 
	$$\exp \left( \frac{\ln (\mu+1)}{c(p)} \cdot Y_p(j) \right) \leq \mu+1.$$
	by taking logarithms from both sides $Y_p(j) \leq c(p)$. Hence, $y$ is a feasible dual solution. 
\end{proof}
}
It remains to prove that the algorithm is $O(\log (\mu))$-competitive. To do so, we bound the increase in the primal $x$ by a $O(\log \mu)$ factor of the dual increase, at any time.

	\begin{lemma}
	\label{thm:logL1}
%	\Cref{alg:fractional} is $O(\log (n))$-\textnormal{competitive}. 
The cost of $x$ is bounded by $O(\log (\mu))$ times the value of the dual $y$. 
\end{lemma}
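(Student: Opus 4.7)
The plan is to carry out a continuous primal–dual analysis, tracking the instantaneous rates of the primal cost and the dual objective and arguing that they stay within an $O(\log\mu)$ factor at every instant; the lemma then follows by integrating over the execution. Fix a moment at which the algorithm raises $y_t(\cP\setminus Q)$ at unit rate, and let $V := (\cP-p_t)\setminus Q$. Since $Y_p(r(p,t))$ grows at unit rate for every $p\in V$, differentiating the update rule of \Cref{step:fx} gives
$$\frac{d\,x_p(r(p,t))}{d\tau}\;=\;\frac{\ln(\mu+1)}{c(p)}\left(x_p(r(p,t))+\frac{1}{\mu}\right),$$
so the primal cost grows at rate $\ln(\mu+1)\sum_{p\in V}\bigl(x_p(r(p,t))+\tfrac{1}{\mu}\bigr)$, while the dual objective grows at rate $N-g(Q)$.

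It therefore suffices to establish the pointwise bound $\sum_{p\in V}\!\bigl(x_p(r(p,t))+\tfrac{1}{\mu}\bigr) = O(N-g(Q))$. For the $\sum x_p$ contribution, I will invoke the violated primal constraint $\sum_{p\in V} x_p\cdot g_Q(p) < N-g(Q)$: using the submodularity of $g$, one checks that if $g_Q(p)=0$ for some $p\in V$ then replacing $Q$ by $Q\cup\{p\}$ preserves both the violation and the minimality required by \Cref{claim:z} (exactly as in its proof), so it may be assumed that $g_Q(p)\ge 1$ for every $p\in V$; hence $\sum_{p\in V} x_p \le \sum_{p\in V} x_p\cdot g_Q(p) < N-g(Q)$. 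For the $|V|/\mu$ contribution, I plan to show $|V|=O\bigl(\mu\cdot(N-g(Q))\bigr)$ through a counting/charging argument that exploits the definition $\mu=\max\{|S|:f(S)\le k\}$: the pages in $V$ are those still fractionally in cache, and the cap $k$ on cache capacity, combined with the violation of the $Q$-constraint, should limit how many of them can coexist at a single moment.

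Combining Steps 1 and 2, at every instant the primal rate is at most $O(\log\mu)\cdot(N-g(Q))$, i.e.\ $O(\log\mu)$ times the dual rate; since both the primal cost and the dual objective change only through dual increments, integrating over the whole execution yields $c(x)\le O(\log\mu)\cdot v(y)$. The main obstacle I anticipate is the bound $|V|=O\bigl(\mu\cdot(N-g(Q))\bigr)$: the $\sum x_p$ part is standard primal–dual bookkeeping, but the $\tfrac{1}{\mu}$ part is what genuinely uses the new parameter $\mu$ and may require either tightening \Cref{step:fQ} so that $Q$ is chosen set-maximal among minimally violating sets, or introducing a potential function that tracks the fractional cache occupancy $\sum_p(1-x_p)$ against the capacity~$k$.
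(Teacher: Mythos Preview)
Your overall framework is exactly the paper's: differentiate the update rule, compare the primal rate $\ln(\mu+1)\sum_{p\in V}\bigl(x_p+\tfrac1\mu\bigr)$ to the dual rate $N-g(Q)$, and integrate. Your treatment of the $\sum_{p\in V} x_p$ term is fine (indeed, your remark that one may assume $g_Q(p)\ge 1$ for $p\in V$ by enlarging $Q$ is a detail the paper glosses over).

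The genuine gap is the $|V|/\mu$ term. You correctly target $|V|\le O\bigl(\mu\,(N-g(Q))\bigr)$, but neither of your two proposed routes leads there. Making $Q$ set-maximal only enforces $g_Q(p)\ge 1$ on $V$; it says nothing about $|V|$. A potential based on $\sum_p(1-x_p)$ is also a dead end: the fractional solution $x$ does \emph{not} obey any occupancy constraint like $\sum_p(1-x_p)\le k$, only the knapsack-cover inequalities of~\eqref{eq:LP2}, so there is no obvious way to tie that sum to $k$ or~$\mu$.

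The missing idea is a direct decomposition of $V$. Pick $S'\subseteq V$ of minimum cardinality with $g_Q(S')\ge N-g(Q)$; such an $S'$ exists because $(\cP-p_t)\setminus Q$ covers the remaining demand. Minimality of $S'$ forces every element of $S'$ to have positive (hence $\ge 1$) marginal, so $|S'|\le N-g(Q)$. On the other hand, $g(Q\cup S')\ge N$, which means $f\bigl(\cP\setminus(Q\cup S')\bigr)\le k$, and therefore $|\cP\setminus(Q\cup S')|\le\mu$ by the definition of $\mu$. Since $V\cup\{p_t\}=S'\,\cup\,\bigl(\cP\setminus(Q\cup S')\bigr)$, you get
\[
|V|\;\le\;|S'|+\mu\;\le\;(N-g(Q))+\mu,
\]
hence $\tfrac{|V|}{\mu}\le (N-g(Q))+1\le 2\,(N-g(Q))$, using $N-g(Q)\ge 1$. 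This is the piece you are missing; once you have it, your rate comparison goes through. (Do not forget the extra factor~$2$ from \Cref{step:1/2}, where $x_p$ is jumped from $\tfrac12$ to $1$; this is not captured by the differential argument and must be accounted for separately.)
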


\begin{proof}
	Consider an infinitesimal increase in the value of the dual solution $y$. Specifically, assume that the algorithm chooses a minimal set $Q$ in \Cref{step:fQ} for time step $t$ and that the dual variable $y_t(\cP \setminus Q)$ increases infinitesimally by $d y_t(\cP \setminus Q)$. Let $dx$ and $dy$ denote the infinitesimal change in the objective value of $x$ and $y$, respectively. We bound the increase $dx$ in $x$ as a result of the increase $dy$.   
\begin{equation}
	\label{eq:dx}
	\begin{aligned}
		dx ={} & \sum_{p \in \cP-p_t} d x_p(r(p,t)) \cdot c(p) \\
		={} & \sum_{p \in (\cP-p_t) \setminus Q}  \frac{d x_p(r(p,t)) \cdot c(p) \cdot d y_t(\cP \setminus Q)}{d y_t(\cP \setminus Q)} \\
		={} & \sum_{p \in (\cP-p_t) \setminus Q} \ln \left(\mu+1 \right)\ \cdot \left( x_p(r(p,t))+\frac{1}{\mu} \right) \cdot d y_t(\cP \setminus Q)
	\end{aligned}
\end{equation} The first equality holds since the increase in $y_t(\cP \setminus Q)$ induces an increase only on the primal variables corresponding to pages in $(\cP-p_t) \setminus Q$ by \eqref{eq:LP2}. The last equality follows from the growth rate of a variable $x_p(r(p,t))$, for some $p \in \left(\cP-p_t\right) \setminus Q$, as a result of the growth in $y_t(\cP \setminus Q)$. %(for more details see~\cite{bansal2012primal,bansal2012randomized}). 
We separately analyze two of the expressions in \eqref{eq:dx}. First, since $y$ changes as a result of the increase in the variable $y_t(\cP \setminus Q)$, by \Cref{step:InWhile} it implies that 
\begin{equation}
	\label{eq:InwhileF}
	\sum_{p \in \left(\cP-p_t\right) \setminus Q} x_p(r(p,t)) \cdot g_Q(p) \leq 	\sum_{p \in \cP-p_t} x_p(r(p,t)) \cdot g_Q(p) < N-g(Q)
	%\sum_{p \in Q-p_t} x_p(r(p,t)) < 1.
\end{equation} %Second, 

For the second expression, let $S' \in \left(\cP-p_t\right) \setminus Q$ such that (i) $g_Q(S') \geq N-g(Q)$ and (ii) $S'$ is of minimum cardinality of all such sets. Clearly, there is such $S'$ as $S'' = \left(\cP-p_t\right) \setminus Q$ satisfies the first condition ($p_t$ is feasible alone in cache). Since $S'$ satisfies the cover constraints it holds that $f(\cP \setminus (S' \cup Q) \leq k$; thus, by the definition of $\mu$ it holds that $|\cP \setminus (S' \cup Q| \leq \mu$. Therefore,

\begin{equation}
\begin{aligned}
    	\label{eq:WidthIn}
		\sum_{p \in \left(\cP-p_t \right) \setminus Q} \frac{1}{\mu} ={} & \frac{ \left| \left(\cP-p_t \right) \setminus Q \right|-1}{\mu} 
  \\={} &
  \frac{|S'|+\left|\cP \setminus (S' \cup Q)\right|-1}{\mu} 
  \\\leq{} &
  \frac{N-g(Q)+\mu-1}{\mu} 
  \\\leq{} &
  N-g(Q)+1\leq 2 \cdot \left(N-g(Q)\right). 
  \end{aligned}
\end{equation} 
The first inequality holds since $|S'| \leq N-g(Q)$ because $S'$ is the minimum cardinality set that covers the demand of $N-g(Q)$; thus, the marginal contribution of any page in $S'$ to the cover is at least $1$ implying the inequality. The first inequality also uses $|\cP \setminus (S' \cup Q| \leq \mu$ as explained above.
For the last inequality, note that $N-g(Q) \geq 1$ since we assume that $y_t(\cP \setminus Q)$ increases at this time, implying that the corresponding constraint of $t$ and $S$ is not trivially satisfied. Therefore, by \eqref{eq:dx}, \eqref{eq:InwhileF}, and \eqref{eq:WidthIn},
\begin{equation}
	\label{eq:PDE}
	\begin{aligned}
	dx \leq{} &  \ln \left(\mu+1 \right) \cdot d y_t(\cP \setminus Q) \cdot \left(  	\sum_{p \in \left(\cP-p_t\right) \setminus Q} x_p(r(p,t)) + 	\sum_{p \in \left(\cP-p_t\right) \setminus Q} \frac{1}{\mu} \right) \\
	={} &  \ln \left(\mu+1 \right) \cdot d y_t(\cP \setminus Q) \cdot 3 \cdot \left( N-g(Q)\right)\\
	={} & O(\log (\mu)) \cdot dy.
	\end{aligned}
\end{equation} Thus, by \eqref{eq:PDE}, every increase in $y$ incurs an increase of at most a factor $O(\log (\mu))$ in $x$. Finally, note that if $x_p(j) \geq \frac{1}{2}$ then we immediately increase $x_p(j)$ to $1$; this increase the total cost of $x$ by a factor of $2$ w.r.t. the value of $y$. 
%Since $x$ and $y$ are feasible primal and dual solutions by \Cref{lem:FFeasible}, the proof follows from \Cref{lem:relaxation2}. 
\end{proof}

To conclude, by \Cref{step:z} and \Cref{step:1/2} we can trivially bound the cost of $z$ by a constant factor of the cost of $x$.

\begin{obs}
	\label{obs:z}
	The cost of $z$ is bounded by $4$ times the cost of $x$. 
\end{obs}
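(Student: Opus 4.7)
The plan is to prove the stronger pointwise invariant
\[
z_p(j) \leq 2 \cdot x_p(j) \quad \text{for every } p \in \cP \text{ and } j \in [n_p]
\]
at every point during the execution of \Cref{alg:fractional}. Summing this inequality over all $(p,j)$, weighted by $c(p)$, immediately gives $\sum_{p}\sum_{j} z_p(j) c(p) \leq 2\sum_{p}\sum_{j} x_p(j) c(p)$, i.e., the cost of $z$ is bounded by twice (and in particular by four times) the cost of $x$, yielding the observation.

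To establish the invariant, I would argue by induction on the continuous execution of the algorithm, examining the three event types that can affect either variable. The base case is trivial since initially $x_p(j) = z_p(j) = 0$. For the inductive step, continuous growth of $x_p(j)$ through \Cref{step:fx} only weakly increases $x_p(j)$ while leaving $z_p(j)$ fixed, so the invariant $z_p(j) \leq 2 x_p(j)$ is preserved. The assignment in \Cref{step:z} explicitly sets $z_p(j) \leftarrow 2 x_p(j)$, restoring the invariant with equality at the instant of the update. Finally, \Cref{step:1/2} sets both $z_p(j), x_p(j) \leftarrow 1$ simultaneously, giving $z_p(j) = 1 = x_p(j) \leq 2 x_p(j)$.

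The only subtlety lies in verifying that \Cref{step:1/2} itself does not violate the invariant; that is, just before the jump, one must have $z_p(j) \leq 1$ so that setting $z_p(j) \leftarrow 1$ is consistent (and is in fact a weak increase of $z_p(j)$). This follows from the inductive hypothesis together with the continuous nature of the growth in \Cref{step:fx}: the trigger for \Cref{step:1/2} fires the instant $x_p(j)$ reaches $\tfrac{1}{2}$, and at that instant the inductive invariant still gives $z_p(j) \leq 2 \cdot \tfrac{1}{2} = 1$.

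The argument contains no serious obstacle; it is essentially bookkeeping over the update rules of the algorithm. The only point worth flagging is the handling of the discrete jump in \Cref{step:1/2}, which is resolved by the continuity of the preceding growth of $x_p(j)$ and the fact that all updates weakly increase $z_p(j)$. The factor $4$ stated in the observation is thus a loose statement of the tighter bound of $2$ that naturally falls out of the invariant.
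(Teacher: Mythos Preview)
Your proof is correct and in fact establishes the tighter bound of $2$ rather than $4$. The paper does not give a detailed argument for this observation, merely noting that it follows trivially from the update rules in \Cref{step:z} and \Cref{step:1/2}; your pointwise invariant $z_p(j)\le 2\,x_p(j)$ is exactly the natural way to make this precise, and the ``subtlety'' you flag at \Cref{step:1/2} is handled correctly.
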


Finally, using the above we summarize the properties of $z$.

	\begin{lemma}
	\label{lem:ALGf}
	\Cref{alg:fractional} returns a feasible primal solution $z$ to \eqref{eq:LP2} such that the following holds. 
	\begin{enumerate}
		\item For all $p \in \cP$ and $j \in [n_p]$ it holds that either $z_p(j) \in \{0,1\}$ or that $z_p(j) \in \left[\frac{1}{4 \cdot N \cdot \mu},\frac{1}{2}\right]$.
		\item The cost of $z$ is bounded by $O(\log (\mu))$ times the cost of $\OPT$. 
	\end{enumerate}
\end{lemma}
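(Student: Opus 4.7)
The plan is to establish the two stated properties separately, leveraging the lemmas already proved about $x$ and $y$. The feasibility of $z$ as a solution to \eqref{eq:LP2} is already given by \Cref{lem:zFeasible}, so what remains is the range property and the cost bound.

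For the range property, I would trace through the update rules of $z_p(j)$ in \Cref{alg:fractional}. The variable $z_p(j)$ is initialized to $0$ and only modified in \Cref{step:z} or \Cref{step:1/2}. \Cref{step:1/2} sets $z_p(j)\leftarrow 1$ whenever $x_p(j)\geq \tfrac{1}{2}$, accounting for the value $1$. For the nonzero, non-$1$ case, \Cref{step:z} fires only when $x_p(j)-z_p(j)^{\text{prev}}/2\geq \tfrac{1}{4N\mu}$, so after that update $z_p(j)=2x_p(j)\geq z_p(j)^{\text{prev}}+\tfrac{1}{2N\mu}$. In particular, starting from $z_p(j)^{\text{prev}}=0$, the first such update already yields $z_p(j)\geq \tfrac{1}{2N\mu}\geq \tfrac{1}{4N\mu}$, and subsequent updates only grow $z_p(j)$, giving the lower bound. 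For the upper bound, once $x_p(j)\geq \tfrac{1}{2}$ the update in \Cref{step:1/2} moves $z_p(j)$ to $1$, so any state in which $z_p(j)\notin\{0,1\}$ corresponds to $x_p(j)<\tfrac{1}{2}$ and hence to $z_p(j)=2x_p(j)^{\text{prev update}}$ with the previous $x$-value strictly less than $\tfrac{1}{2}$; a careful bookkeeping of the order of the two checks pins down the stated upper bound.

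For the cost bound, I would simply chain the previous results. First, \Cref{thm:logL1} gives that the cost of $x$ is at most $O(\log\mu)$ times the value $v(y)$ of the dual solution produced by \Cref{alg:fractional}. Next, \Cref{obs:z} gives that the cost of $z$ is at most $4$ times the cost of $x$. Finally, since \eqref{eq:LP2} is a relaxation of supermodular paging (an integral feasible schedule induces a feasible primal setting variables to $0$ or $1$, just as in the proof of \Cref{lem:relaxation2}), weak duality yields $v(y)\leq \textsf{Primal-LP}\leq \OPT$. Combining the three inequalities gives $\textnormal{cost}(z)\leq 4\cdot O(\log\mu)\cdot \OPT=O(\log\mu)\cdot\OPT$.

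I expect the main obstacle to be the careful verification of the upper endpoint of the range in property (1): one must argue about the exact ordering of the updates in Lines~\ref{step:z} and \ref{step:1/2} within a single continuous growth of $x_p(j)$, since a naive reading allows $z_p(j)=2x_p(j)$ to briefly approach $1$ before the $\tfrac{1}{2}$-threshold check rounds it up. The cost bound, by contrast, is essentially a one-line consequence of the already-established \Cref{thm:logL1}, \Cref{obs:z}, and weak duality, so no new analysis is needed there.
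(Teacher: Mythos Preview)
Your proposal is correct and follows essentially the same approach as the paper: feasibility from \Cref{lem:zFeasible}, property~(1) from tracing Steps~\ref{step:z} and~\ref{step:1/2}, and property~(2) by chaining \Cref{thm:logL1}, \Cref{obs:z}, dual feasibility (\Cref{lem:FFeasible}), and weak duality. The paper's own proof is considerably terser---it simply asserts that property~(1) ``follows from \Cref{step:1/2,step:z}'' without the bookkeeping you outline, and for property~(2) it cites \Cref{lem:primalX}, \Cref{lem:FFeasible}, and \Cref{obs:z} in one sentence---so your caution about the ordering of the two threshold checks and the exact upper endpoint $\tfrac{1}{2}$ is in fact more careful than what the paper itself provides.
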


\comment{
\begin{proof}
	The feasibility of $z$ follows from \Cref{lem:zFeasible}. Moreover, the first property follows from \Cref{step:1/2,step:z} of the algorithm. For the second property, since $x$ and $y$ are feasible primal and dual solutions by \Cref{lem:primalX} and \Cref{lem:FFeasible}, it follows that the cost of $x$ is bounded by $O(\log (\mu))$ times the cost of $\OPT$. Thus, the proof follows from Observation~\ref{obs:z}.  
\end{proof}
}

\subsection{Randomized Rounding}

In this section, we construct a randomized algorithm for supermodular paging based on an online rounding scheme of the solution $z$ to the LP \eqref{eq:LP2} obtained by \Cref{alg:fractional}. Let $$C = \max_{p,q \in \cP} \frac{c(p)}{c(q)}$$ be the maximum ratio of costs taken over all pairs pages; we assume without the loss of generality that all costs are strictly positive. As a scaling factor for our algorithm, let $\alpha = \log \left(4 \cdot C \cdot N^2 \cdot \mu^2\right)$ and let $z'$ be the solution obtained by augmenting $z$ by a factor of $\alpha$. That is, for all $p \in \cP$ and $j \in [n_p]$ define $z'_p(j) = \min \left( 1, \alpha \cdot z_p \right)$.  

Our algorithm computes $z'$ in an online fashion. At every time $t$, after updating $z'$, the algorithm evicts each page $p \in \cP$ from the cache with probability chosen carefully so that the total probability that $p$ is missing from cache after this moment is exactly $z'_p(r(p,t))$. %$\Delta_p$, which is the difference in the value of $z'_p(r(p,t))$ not yet exploited for the randomized rounding. 
If the cache is not feasible after this randomized rounding procedure, we evict pages that increase the total cover until we reach feasibility. The pseudocode of the algorithm is given in \Cref{alg:randomizedRounding}.

\begin{algorithm}[h]
	\caption{$\textsf{Randomized Rounding}$}
	\label{alg:randomizedRounding}

	\For{$t \in T$}{
		{\bf If} $p_t$ is missing from cache: fetch $p_t$. 
		
		Initialize $\Delta_{p_t} \leftarrow 0$. 
		
Update the solution $z$ according to \Cref{alg:fractional}.

Define $z'_p(r(p,t)) \leftarrow \min \left( 1, \alpha \cdot z_p \right)$ for all $p \in \cP$. 
		
			\For{$p \in \cP$}{
			Evict $p$ from cache with probability $\frac{z'_p(r(p,t))-\Delta_p}{1-\Delta_p}$.\label{step:Revict}
			
			Update $\Delta_p = z'_p(r(p,t))$.  
		}
		
		Let $\cF$ be the pages outside of the cache (part of the cover).\label{step:F}
		
		\While{$g(\cF)<N$\label{step:whileFe}}{
		Evict a page $p \in \cP \setminus \cF$ such that $g_{\cF}(p)>0$.\label{step:reach}
		}
		
	}
	
	Return the integral solution. 
\end{algorithm}

Observe that we evict a page $p$ at time $t$ with probability $\frac{z'_p(r(p,t))-\Delta_p}{\Delta_p}$, conditioned on the event that $p$ is still in the cache. Thus, the probability that $p$ belongs to the cache at the beginning of time $t$ is $\Delta_p$, and is $z'_p(r(p,t))$ after \Cref{step:Revict}. Thus, we have the following observation.

\begin{obs}
	\label{obs:z'}
	For all $p \in \cP$ and $t \in T$ the probability that $p$ is missing from the cache at time $t$ is at least $z'_p(r(p,t))$. 
\end{obs}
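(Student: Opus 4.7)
The plan is to prove the observation by maintaining the following invariant throughout the execution of \Cref{alg:randomizedRounding}: immediately after the for-loop at \Cref{step:Revict} processes time $t$, we have $\Pr[p \notin \text{cache}] = \Delta_p = z'_p(r(p,t))$ for every page $p$. Once this invariant is established, the observation follows because the subsequent while loop at \Cref{step:whileFe} only evicts pages and can therefore only increase the probability that any page is absent from the cache.

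First I would verify that the invariant is preserved across the transition from time $t-1$ to time $t$. For a page $p \neq p_t$ we have $r(p,t) = r(p,t-1)$ and nothing about $p$'s physical presence changes before \Cref{step:Revict}, so the inductive assumption carries over; in addition $z'_p(r(p,t))$ can only have grown (since \Cref{alg:fractional} monotonically increases $z$ for a fixed $r$-index), hence $z'_p(r(p,t))^{\text{new}} \geq \Delta_p$, so the eviction probability $\frac{z'_p(r(p,t)) - \Delta_p}{1-\Delta_p}$ in \Cref{step:Revict} lies in $[0,1]$. For $p = p_t$, the algorithm deterministically fetches $p_t$ and resets $\Delta_{p_t} \leftarrow 0$; this is consistent because $r(p_t,t) = r(p_t,t-1)+1$ is a freshly indexed request whose variable $z_{p_t}(r(p_t,t))$ is initialized to $0$ by \Cref{alg:fractional}, and because $p_t$ is in cache with certainty after fetching.

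Next I would handle the inductive step within \Cref{step:Revict}. Conditioned on $p$ being in cache (an event of probability $1 - \Delta_p$ by induction), the algorithm evicts $p$ with probability $\frac{z'_p(r(p,t)) - \Delta_p}{1-\Delta_p}$. By the law of total probability, the updated probability that $p$ is not in cache becomes
\begin{equation*}
\Delta_p + (1-\Delta_p) \cdot \frac{z'_p(r(p,t)) - \Delta_p}{1-\Delta_p} = z'_p(r(p,t)),
\end{equation*}
which matches the updated value $\Delta_p \leftarrow z'_p(r(p,t))$, so the invariant is maintained. (If $\Delta_p = 1$ then $p$ is absent with certainty and no update is needed, so division by zero is not a concern.)

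Finally, the while loop in \Cref{step:whileFe} performs only evictions, so for every page $p$ the event $\{p \notin \text{cache}\}$ can only become more likely. Combining this with the invariant established above gives $\Pr[p \notin \text{cache at time } t] \geq z'_p(r(p,t))$, as required. The main obstacle I anticipate is simply the bookkeeping around the ``reset'' that happens when $p = p_t$ is fetched: one has to confirm that $\Delta_{p_t} \leftarrow 0$ is consistent both with the fact that $p_t$ is physically in cache and with the fact that the fractional variable for the new request index starts at zero; the rest is a straightforward marginal-probability calculation.
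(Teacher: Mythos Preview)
Your proposal is correct and follows exactly the paper's reasoning: the paper justifies the observation in one sentence before stating it, noting that the conditional eviction probability $\frac{z'_p(r(p,t))-\Delta_p}{1-\Delta_p}$ makes the marginal absence probability equal to $z'_p(r(p,t))$ after \Cref{step:Revict}, and you have simply spelled out this invariant argument (plus the monotonicity check and the handling of the fetch of $p_t$) in more detail than the paper does.
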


To analyze the performance of the algorithm, we use the following lemma. The proof follows from Lemma 2.5 in \cite{gupta2020online} combined with Observation~\ref{obs:z'}. 

	\begin{lemma}
	\label{lem:Gupta}
	Let $\cF$ be the set of pages outside cache at \Cref{step:F} at time $t$. Then, 
	$$\E \left[ g(\cF)\right] \geq N-e^{-\alpha} \cdot N \geq N-\frac{1}{2 \cdot \mu^2 \cdot C \cdot N}.$$
\end{lemma}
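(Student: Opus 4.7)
The plan is to combine the marginal-probability guarantee of Observation~\ref{obs:z'} with a standard submodular-cover randomized-rounding inequality, namely Lemma~2.5 of \cite{gupta2020online}, which bounds $\E[g(R)]$ for a random set $R$ whose marginals satisfy a scaled family of knapsack-cover constraints for a monotone submodular $g$.

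First I would assemble the two ingredients that are the hypotheses of that lemma. By Observation~\ref{obs:z'}, each page $p$ belongs to $\cF$ at \Cref{step:F} with probability at least $z'_p(r(p,t)) = \min(1,\alpha \cdot z_p(r(p,t)))$. Moreover, at the fixed time $t$ the events ``$p \in \cF$'' across distinct pages $p$ are mutually independent, since the conditional eviction coins flipped in \Cref{step:Revict} for each page depend only on that page's own eviction history and not on the coins used for other pages. By \Cref{lem:ALGf}, $z$ is feasible for LP~\eqref{eq:LP2}, so it satisfies $\sum_{p \neq p_t} z_p(r(p,t)) \cdot g_S(p) \geq N - g(S)$ for every $S \subseteq \cP$; scaling by $\alpha$ and truncating at $1$ only strengthens these constraints and so passes them up to $z'$.

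Second I would invoke Lemma~2.5 of \cite{gupta2020online}: for a monotone submodular $g$ with cover target $N$, an independent random set $R$ whose marginals satisfy the $\alpha$-scaled knapsack-cover constraints achieves $\E[g(R)] \geq N(1-e^{-\alpha})$. Applying this to $R = \cF$ gives the first inequality $\E[g(\cF)] \geq N - e^{-\alpha} N$. The second inequality is then just arithmetic: substituting $\alpha = \log(4 \cdot C \cdot N^2 \cdot \mu^2)$ yields $e^{-\alpha} = \tfrac{1}{4 C N^2 \mu^2}$, so that $N \cdot e^{-\alpha} = \tfrac{1}{4 C N \mu^2} \leq \tfrac{1}{2 \mu^2 C N}$.

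The only step that genuinely requires care is verifying that the hypotheses of Lemma~2.5 are met in our setting. Specifically, one must confirm that (i) the incrementally rounded random set $\cF$ really does have per-page independent marginals at time $t$, despite the sequential appearance of \Cref{step:Revict}, and (ii) LP-feasibility of $z$ survives the truncation $z' = \min(1, \alpha z)$ in the definition of $z'$. Both are routine but should be explicitly noted before appealing to the black-box rounding lemma; after that, the bound in the lemma statement is obtained in one line of algebra.
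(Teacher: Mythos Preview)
Your proposal is correct and follows essentially the same approach as the paper, which simply states that the lemma follows from Lemma~2.5 in \cite{gupta2020online} combined with Observation~\ref{obs:z'}. You have merely spelled out the hypotheses that the black-box rounding lemma needs (marginal lower bounds, independence of the per-page coins in \Cref{step:Revict}, and feasibility of $z$ for \eqref{eq:LP2}) and carried out the final arithmetic with $\alpha=\log(4CN^2\mu^2)$; the paper leaves all of this implicit.
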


Using \Cref{lem:Gupta}, we bound the expected cost of the algorithm.%If after the randomized algorithm

	\begin{lemma}
	\label{lem:ALG3}
	\Cref{alg:randomizedRounding} returns a feasible integral solution for \textnormal{supermodular paging} with expected cost $O\left(\log (\mu) \cdot \alpha \right) \cdot \OPT = O\left(\log^2 (\mu) \cdot\log \left(C \cdot N\right)\right) \cdot \OPT$.  
\end{lemma}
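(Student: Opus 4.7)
The plan is to decompose the cost of \Cref{alg:randomizedRounding} into (i) the randomized evictions performed in \Cref{step:Revict} and (ii) the correctional evictions performed in \Cref{step:reach}, and to bound each separately against $\OPT$. Feasibility will follow directly from the inner while-loop at \Cref{step:whileFe}.

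For feasibility, I first observe that the while-loop terminates only when $g(\cF) \geq N$, which by definition of $g$ is equivalent to $f(\cP \setminus \cF) \leq k$, i.e., the cache is feasible. Since $g$ is monotone and $g(\cP - p_t) = f(\cP) - f(\{p_t\}) \geq N$ (using $f(\{p_t\}) \leq k$), whenever $g(\cF) < N$ one can always find a page $p \in \cP \setminus \cF$ with $g_\cF(p) > 0$ to evict; as each such eviction strictly grows the integer-valued $g(\cF)$, the loop must terminate within at most $N$ iterations.

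For the randomized eviction cost, I will apply \Cref{obs:z'} together with linearity of expectation. Within a single interval $I(p,j)$, the total probability that page $p$ is evicted at \Cref{step:Revict} telescopes (via the conditional probability rule used to set each eviction probability) to at most $z'_p(j) = \min(1,\alpha \cdot z_p(j)) \leq \alpha \cdot z_p(j)$. Summing over all pages and intervals gives expected randomized cost at most
\[
\alpha \cdot \sum_{p \in \cP} \sum_{j \in [n_p]} z_p(j) \cdot c(p) = O(\alpha \cdot \log \mu) \cdot \OPT,
\]
where the last equality invokes \Cref{lem:ALGf}.

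The main obstacle will be controlling the correctional cost. By \Cref{lem:Gupta}, the expected deficit $\E[N - g(\cF)]$ at \Cref{step:F} is at most $\frac{1}{2\mu^2 C N}$ per time step, and since each correctional eviction strictly decreases an integer-valued deficit, the expected number of correctional evictions per step is at most $\frac{1}{2\mu^2 C N}$, each costing at most $c_{\max} = C \cdot c_{\min}$. The subtle step is converting this per-step bound into an aggregate bound in terms of $\OPT$ rather than the length of the request sequence. I plan to argue that correctional evictions are triggered only at time steps in which either the fractional solution $z$ received a nontrivial update (so that the cost of $z$ has grown by a measurable amount) or the cache became infeasible for a structural reason that can be charged to an action of $\OPT$. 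By charging each correctional action against either the corresponding increase in $\mathrm{cost}(z)$ (itself bounded by $O(\log \mu) \cdot \OPT$ via \Cref{lem:ALGf}) or an associated $\OPT$ cost, the aggregate correctional cost is absorbed into $O(\log \mu) \cdot \OPT$. Combining with the randomized-cost bound yields total expected cost $O(\alpha \log \mu) \cdot \OPT = O(\log^2 \mu \cdot \log(C \cdot N)) \cdot \OPT$, as claimed.
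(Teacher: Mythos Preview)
Your decomposition into feasibility, randomized evictions, and correctional evictions is exactly right, and your treatment of the first two parts is correct and matches the paper. The gap is in the correctional-cost argument.

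You correctly derive a per-time-step bound on the expected correctional cost (roughly $\frac{c_{\min}}{2\mu^2 N}$, or the paper's cruder $\frac{c_{\min}}{2\mu N}$), but your plan for aggregating this over time is too vague to constitute a proof. Saying that correctional evictions happen only when ``$z$ received a nontrivial update'' or for ``a structural reason that can be charged to $\OPT$'' does not pin down any actual charging scheme: you never specify what ``nontrivial'' means quantitatively, and at a time step where the integral cache misses $p_t$ but the fractional $z$ was already feasible for $t$'s constraints, there is no update to $z$ at all, so the first branch of your either-or has nothing to charge to. The second branch (``structural reason charged to $\OPT$'') is never made concrete.

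The paper closes this gap using property~1 of \Cref{lem:ALGf}: every nonzero entry of $z$ is either $1$ or at least $\frac{1}{4N\mu}$. The argument is then a per-step comparison against the \emph{fetching} cost of $z'$. If $z_{p_t}$ (at its previous interval) is $0$, then $z'_{p_t}$ is $0$, the randomized step never evicted $p_t$, and the cache stays feasible, so no cost is incurred at time $t$. Otherwise $z_{p_t}\geq \frac{1}{4N\mu}$, so $z'$ pays at least $\frac{c_{\min}}{4N\mu}$ to fetch $p_t$ at time $t$; meanwhile the expected correctional cost at time $t$ is at most $\frac{c_{\min}}{2N\mu}$, i.e., a constant factor of what $z'$ pays at the same step. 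Summing over $t$ bounds the aggregate correctional cost by $O(1)\cdot\textnormal{cost}(z')\le O(\alpha)\cdot\textnormal{cost}(z)\le O(\alpha\log\mu)\cdot\OPT$. You should replace your hand-wave by explicitly invoking this granularity property of $z$ and carrying out this per-step charge against $\textnormal{cost}(z')$.
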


\comment{
\begin{proof}
	The feasibility of the algorithm follows from \Cref{step:whileFe}. To bound the expected cost of the algorithm, consider some time step $t \in T$. We bound the expected cost paid at time $t$ by the algorithm on evictions, w.r.t. the cost paid by $z'$ on fetching. Let $z'_p(t-1)$ be the value of the variable $z'_{p_t}(r(p_t,t))$ at time $t-1$, i.e., indicating the portion of $p$ missing from cache at time $t-1$. If $z'_p(t-1) = 0$, i.e., $p_t$ is in the cache when it is requested at time $t$, then our integral solution obtained in \Cref{alg:randomizedRounding} also does not evict $p_t$. It follows that we do not incur a cost at time $t$ if $z'_p(t-1)) = 0$. Otherwise, by \Cref{lem:ALGf} it holds that $z'_p(t-1) \geq \frac{1}{4 \cdot N \cdot \mu}$. 
	
	Let $c_{\min}, c_{\max}$ be the minimum and maximum costs of pages, thus $C = \frac{c_{\max}}{c_{\min}}$.  As we assume that $z'_p(t-1) \geq \frac{1}{4 \cdot N \cdot \mu}$, it follows that the cost of the solution $z'$ increases by at least $\frac{c_{\min}}{4 \cdot N \cdot \mu}$ at time $t$ by fetching the fraction of $p_t$ missing from cache. At the same time, observe that each time that the algorithm enters \Cref{step:reach}, the total cost it pays is bounded by $c_{\max} \cdot \mu$ as it can evict at most $\mu$ pages from cache as the cache has been feasible until this time.  Moreover, by \Cref{lem:Gupta}, the probability that the algorithm enters \Cref{step:reach} in time $t$ is bounded by $\frac{1}{2 \cdot C \cdot N \cdot \mu^2}$. Thus, the expected cost paid by the algorithm at time $t$ is bounded by $\frac{c_{\max} \cdot n}{2 \cdot C \cdot N \cdot \mu^2} \leq \frac{c_{\min}}{2 \cdot N \cdot \mu}$. Therefore, at each time step the expected cost of the algorithm is bounded by a constant factor of the expected cost of $z'$. In addition, the cost of $z'$ is bounded by $\alpha$ times the cost of $z$. Therefore, the proof follows from the competitive ratio of $z$ described in \Cref{lem:ALGf}.
\end{proof}
}

The proof of \Cref{thm:randomized} follows from \Cref{lem:ALG3}. Moreover, the proof of \Cref{thm:approximation} follows using the ``round-or-separate''
approach of \cite{gupta2020online}.

\newpage

\comment{
\section{Hardness Results}
\label{sec:hardness}

In this section we give hardness results for non-linear paging. In particular, we give a lower bound for supermodular paging based on a reduction from online set cover. This yields the proofs of \Cref{lem:Set Cover,thm:LB,thm:hardness}. We remark that the proof can be strengthened by providing a reduction from online submodular cover, however, we omit the details as we are not aware of stronger lower bounds for online submodular cover (compared to online set cover). In addition, in \Cref{sec:hardRestricted} we give a lower bound for solving non-linear paging in restricted cases. %online set cover and online submodular cover admit similar lower bounds \cite{alon2003online,korman2004use}.

 Recall that in online set cover we are given  a ground set $X = \{1,2,\ldots, n\} = [n]$ and a family $\cS = \{S_1,\ldots, S_m\}$ of subsets of $X$ (note that in this section only $n$ does not describes the number of pages). Requests for certain elements from $X$ arrive online; let $i_t$ be the requested element in time $t$ for a set of time steps $T$. If the requested element is not covered by a previously chosen set, we choose a set $S \in \cS$ containing the element to cover it, paying a cost $c(S)$. The goal is to minimize the cost of selected sets. Throughout this section, we fix the arbitrary online set cover instance $I$ described above. 

We construct a reduced supermodular paging instance $R_I$. Define the set of pages as $\cP = X \cup \cS$; that is, we define a page for every element and every set of the online set cover instance $I$. Define a cost $c(i) = \infty$ for every element $i \in [n]$ and for every set $S \in \cS$ we keep the original cost $c(S)$ from the set cover instance $I$ also as the cost for the supermodular paging instance $R_I$. Finally, define the cover function $g: 2^{\cP} \rightarrow \mathbb{N}$ such that for all $F \subseteq \cP$ 
\begin{equation}
	\label{eq:g}
	g(F) = \left| \{i \in X~|~ \exists S \in \cS \cap F \text{ s.t. } i \in S\} \cup \left(X \cap F\right) \right|. 
\end{equation} In simple words, $g(F)$ describes the number of elements $i$ in $X$ {\em covered} either by a set $S$ (i.e., $i \in S$) that belongs to $F$ or by the actual corresponding element that belongs to $F$ (i.e., $i \in F$). %An illustration of the construction is given in \Cref{fig:Y}. 

 Clearly, $g$ is a cover function implying that it is submodular. Therefore, define the corresponding feasibility of $R_I$ as function $f(A) = n-g(\cP \setminus A)~\forall A \subseteq \cP$. It follows that $f$ is supermodular. Define the cache threshold as $k = 0$. Thus, a feasible set of pages $F$ can be in the cache if and only if we cover all elements (either by a set or by the element itself).

\begin{obs}
	\label{obs:F}
	For every $F \subseteq \cP$, it holds that $f(F) \leq k$ if and only if for every $x \in X$ there is $p \in \cP \setminus F$ such that (i) $p \in \cS$ and $x \in p$ or (ii) $p \in X$ and $x = p$. 
\end{obs}

 %$f(F) \leq k$; this happens if and only if $g(F) \geq n$; that is, if all elements in $X$ are covered. 

We are left with setting the sequence of page requests. We define two consecutive subsequences of requests. The first subsequence $Q_1 = (p_1,\ldots, p_m)$ requests at time $1 \leq j \leq m$ the page $p_j = S_j$. Observe that for every time $1 \leq j \leq m$ it holds that the set of pages $F_j$ in cache is feasible because $F_j \cap X = \emptyset$; thus, every element $x \in X$ covers itself w.r.t. $g$, i.e., $g(F_j) \geq g(X) = n$. 
%recall that $x \in X$ is corresponding page to the element $x$ from the set cover instance $I$. 
For the second subsequence, assume without the loss of generality that $T \cap [m] = \emptyset$, i.e., the set of time points requested by the set cover instance is disjoint to the set of time points used for our first subsequence $Q_1$. Now, define the second subsequence $Q_2 = \left(i_t  ~|~ t \in T\right)$ as the set of element requests by the set cover instance $I$. %for every $t \in T$ define $p_t = i_t$
An illustration of the construction is given in \Cref{fig:Y}.

We use the following result. In the proof, we construct a solution for the set cover instance $I$ based on a solution for $R_I$.

%\begin{lemma}
%	\label{lem:Set Cover}
%	For any $\rho \geq 1$, if there is a $\rho$-competitive algorithm for \textnormal{supermodular paging} then there is a $\rho$-competitive algorithm for \textnormal{online set cover} of the same running time up to a polynomial factor.  
%\end{lemma}

\subsubsection*{Proof of \Cref{lem:Set Cover}}

For some $\rho \geq 1$,	let $\cA$ be a $\rho$-competitive algorithm for \textnormal{supermodular paging}. Given instance $I$ of online set cover, construct the reduced instance $R_I$ and apply algorithm $\cA$ on $R_I$ in an online fashion. At time $t \in T$, define $H_t \subseteq \cS$ as the collection of sets that are not in the cache at time $t$. As $\cP \setminus H_t$ must be feasible in the cache by the feasibility of $\cA$, it holds that $f(\cP \setminus H_t) \leq k$; thus, by \Cref{obs:F} every element in $X$ is either (i) covered by some set $S \in \cS$ or (ii) not yet requested by the set cover instance. Hence, $H = (H_t~|~t \in T)$ forms a feasible solution for $I$.  
	
	As $\cA$ is $\rho$-competitive for a constant $\rho$, it holds that $\cA$ does not evict a page $x \in X$ as it would pay an infinite cost. In addition, as the costs of sets are the same in $I$ and $R_I$, we conclude that the cost paid by $\cA$ on $R_I$ is the same as the cost paid by the solution $H$ for $I$. Moreover, the optimum of $R_I$ is at least as large as the optimum of $I$: at every moment $t \in T$, since $\cA$ does not evict pages in $X$, it must use only sets to cover the requested elements from $Q_2$, incuring the costs of the sets.  Thus, the above gives a $\rho$-competitive algorithm for online set cover with the same running time up to a polynomial factor for constructing the reduction. \qed

 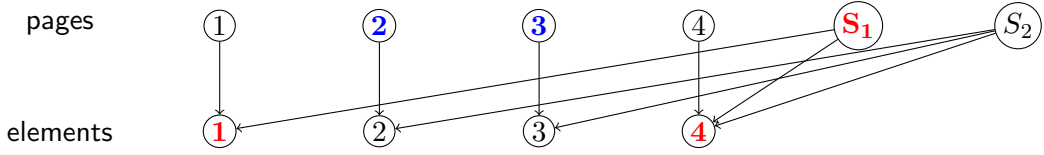
\begin{figure}
	%	\hspace{4cm}{
		\centering
		\begin{tikzpicture}[scale=1.4, every node/.style={draw, circle, inner sep=1pt}]
			% first bipartite graph
			\node (p2) at (5.5,-0.5) {$\bf \textcolor{blue}{2}$};
			\node (p1) at (4,-0.5) {$\textcolor{black}{1}$};
			\node (p3) at (7,-0.5) {$\bf \textcolor{blue}{3}$};
			\node (a1) at (4,-1.5) {$\bf \textcolor{red}{1}$};
			
			\node (a2) at (5.5,-1.5) {$\textcolor{black}{2}$};
			\node (a3) at (7,-1.5) {$\textcolor{black}{3}$};
			\node (a4) at (8.5,-1.5) {$\bf \textcolor{red}{4}$};
			\node (p4) at (8.5,-0.5) {$4$};
			
				\node (S1) at (10,-0.5) {$\bf \textcolor{red}{S_1}$};
				
					\node (S2) at (11.5,-0.5) {$S_2$};
	\draw[->] (S1) -- (a1);
	
		\draw[->] (S2) -- (a3);
			\draw[->] (S2) -- (a4);
				\draw[->] (S2) -- (a2);
				
		\draw[->] (S1) -- (a4);			
			\draw[->] (p1) -- (a1);
				\draw[->] (p2) -- (a2);
					\draw[->] (p3) -- (a3);
						\draw[->] (p4) -- (a4);
			%		\draw (p3) -- (a2);
			%	\draw[line width=2pt, color=red] (p3) -- (a3);
			
%			\draw[->] (p3) -- (a3);
%			\draw[->] (p2) -- (a1);
%			\draw[->] (p1) -- (a2);
%			\draw[->] (p2) -- (a3);
%			
%			\draw[->] (p4) -- (a2);
%			\draw[->] (p4) -- (a3);
%			\draw[->] (p4) -- (a4);

			\node[draw=none] at (2.5, -0.5) {$\textsf{pages}$};
			
			\node[draw=none] at (2.5, -1.5) {$\textsf{elements}$};
			
		\end{tikzpicture}
		%\vspace{-1.5cm} 
		\caption{\label{fig:Y} An illustration of the construction for $X = \{1,\ldots,4\}$ and $\cS = \{S_1,S_2\}$. The considered moment in time takes place after a request for element-pages $1,4$; to serve these requests, the set-page $S_1$ (in red) is evicted from cache (taken to the cover). Observe that $\{2,3,S_1\}$, the pages outside of the cache, form a feasible cover of the elements $\{1,\ldots,4\}$.}
	\end{figure}
 
%\begin{lemma}
%	\label{lem:Set Cover}
%	For any $\rho \geq 1$, if there is a $\rho$-competitive algorithm for \textnormal{supermodular paging} then there is a $\rho$-competitive algorithm for \textnormal{online set cover} of the same running time up to a polynomial factor.  
%\end{lemma}
Observe that in our reduction $|\cP| = m+n$ and that $f(\cP) = n$. Thus, by \Cref{lem:Set Cover} and the results of \cite{alon2003online,korman2004use}, we have the statement of \Cref{thm:hardness}. %(for $f(\cP) = \Omega(n)$).

\subsubsection*{Proof of \Cref{thm:LB}}

By the results of \cite{korman2004use}, there are constants $a,b>0$ such that there are online set cover instances with $n$ elements, $m = n^a$ sets such that the minimum number of sets required to form a feasible cover at the end of the algorithm is $K = n^b$ and no randomized online algorithm is $\Omega(\log^2\left(n\right))$-competitive on these instances unless $\textnormal{NP}\subseteq \textnormal{BPP}$. Moreover, note that the cardinality of the maximum minimally infeasible set (see \Cref{def:Mininfeasible}) is at least $K = n^b$ and at most $n$; thus, $\ell$ is polynomial in $n$ i.e., there is a constant $c>0$ such that $\ell = n^c$. Therefore, the lower bound on the running time obtained by \Cref{lem:Set Cover} combined with \cite{korman2004use} is $\Omega(\log^2\left(\ell\right)) = \Omega(\log^2\left(n\right))$. \qed

\subsection{Hardness of Restricted Non-linear Paging instances}
\label{sec:hardRestricted}

We now show that even for very restricted cases of the problem, the competitive ratio is effectively unbounded (if the competitive ratio is not stated as a function of $\ell$ or $n$). 

\begin{theorem}
\label{thm:HardnessRestricted}
For every $\rho \geq 1$, there is no randomized or deterministic $\rho$-competitive algorithm for \textnormal{non-linear paging} even if $k = 0$ and the range of $f$ is $\{0,1\}$.
\end{theorem}

\begin{proof}
We give a reduction from the classic paging problem. Let $I$ be a paging instance with a set $\cP$ of $n+1$ pages, cache capacity $n$, and requests $p_t$ for every point of time $t \in T$. We define a non-linear paging instance $R$ as follows. The set of pages and requests are the same in $I$ and $R$. Define a feasibility function $f:\cP \rightarrow \{0,1\}$ such that for every $S \subseteq \cP$ it holds that $f(S) = 0$ if $|S| \leq n$ and $f(S) = 1$ otherwise. Finally, define $k = 0$ as the cache size constraint of $R$. Clearly, a subset of pages $S \in \cP$ is feasible for $I$ if and only if it is feasible for $R$. Thus, by the well-known hardness results for deterministic paging, we cannot obtain better than $n$-competitive \cite{sleator1985amortized} (recall that the instance has $n+1$ pages and the cache capacity is $n$); if randomization is allowed, we cannot obtain better than $O(\log n)$-competitive \cite{fiat1991competitive}. As $n$ can be arbitrarily large, we prove the theorem.  
    %Fix some integer $\rho \geq 1$
\end{proof}
}
%\footnote{The lower bound of \cite{alon2003online} applies for every $\log(n) \leq m \leq e^{n^{\frac{1}{2}-\delta}}$ for every fixed $\delta>0$.} 
%\Cref{thm:logL1} immediately implies the proof of \Cref{thm:logL}. 

%%%%%%%%%%%%%%%%%%%%%%%%%%%%%%%%%%%%%%%%%%%%%%%%%%%%%%%%%%%%%%%%%%%%%%%%%%%%%%%%%%%%%%%%%%%%%%%%%%%%%%%%%%%%%%%%%%%%%
\comment{

\section{A Fractional Algorithm}

In this section, we design an $O(\log (\ell))$-competitive algorithm for the fractional version of non-linear paging, or, a solution $x$ to the LP \eqref{eq:LP}). %The algorithm initializes the (infeasible) 
The algorithm initializes the (infeasible) primal solution and the (feasible) dual solutions as vectors of zeros $\bar{0}$. Upon the arrival of the requested page $p_t$ of time $t$ we do the following process until $x$ satisfies all constraints of the LP \eqref{eq:LP} until time $t$. 

In time $t$, the algorithm considers a {\em minimally violating set} of pages $Q$. This set violates the primal constraint corresponding to $t$ and $Q$ and is of minimum cardinality amongst all such sets.   %We remove all pages in $Q$ except $p_t$ from cache and 
We increase the variable $y_{t}(Q)$ continuously and at the same time, increase variables $x_p(r(p,t))$ for all pages in $Q$ except $p_t$, where the increasing rate is a function of $Y_p(r(p,t))$, where recall that $Y_p(r(p,t))$ is defined in \eqref{eq:Y} as the left hand side of the corresponding dual constraint of $p$ and $j = r(p,t)$. This growth function has an exponential dependence on $Y_p(r(p,t))$ scaled by the cost of the page $c(p)$ and by the width parameter $\ell$.  

Once the sum of fractions in $Q-p_t$ reaches $1$, the corresponding primal constraint of $t$ and $Q$ is satisfied. If $x$ becomes feasible, the algorithm proceeds to the next time step. Otherwise, the algorithm repeats the above process with a new set $Q'$ from the current set of pages that are not fully evicted. The pseudocode of the algorithm is given in \Cref{alg:deterministic}. 

%To compute $x_p(j)$, we continuously increase the (dual) variable $y_t(Q)$, corresponding to the non-valid set of pages in the cache at time $t$. At the same time, we increase the probability $x_p(t)$ (and equivalently, $X_p(t)$) according to an exponential function that depends on $y_p(j)$, where recall that $y_p(j)$ is the sum over all dual variables corresponding to $p$ in the interval $R(p,j)$. Once a valid distribution is constructed, we sample a page according to $x(t)$ and achieving a valid cache state once again. We also remove from the cache pages $p'$ for which $X_{p'}(t) \geq 1$. The pseudocode of the algorithm is given in \Cref{alg:fractional}. 

\begin{algorithm}[h]
	\caption{$\textsf{Fractional}$}
	\label{alg:fractional}
	
	%	Initialize the cache $\cG_0 \leftarrow \emptyset$.
	Initialize (infeasible) primal and (feasible) dual solutions $x \leftarrow \bar{0}$ and   $y \leftarrow \bar{0}$.

	\For{$t \in T$}{
		
		\While{$x$ is not feasible for $t$\label{step:Whilefeasible}}{
			
			%	Let $\textsf{Frac}_t = \left\{p \in \cP~|~r(p,t) \geq 1 \textnormal{ and } x_p(r(p,t)) < 1\right\}$ be pages not fully evicted.   
			
			Find $Q \in \cS(t)$ such that $\sum_{p \in Q-p_t} x_p(r(p,t)) < 1$ of minimum cardinality.\label{step:fQ}

			\While{$\sum_{p \in Q-p_t} x_p(r(p,t)) < 1$\label{step:InWhile}}{
				
				Increase $y_t(Q)$ continously.\label{step:fy}
				
				For each $p \in Q-p_t$ increase $x_p(r(p,t))$ according to\label{step:fx}
				%$$\frac{\ln (k+1) \cdot }{c(p)}$$
				$$x_p(r(p,t)) \leftarrow \frac{1}{\ell} \cdot \left(     \exp \left(  \frac{\ln (\ell+1)}{c(p)} \cdot Y_p(r(p,t))\right)-1\right).$$
				
			}
			
		}

	}
	Return the (primal) solution $x$ and the (dual) solution $y$. 
\end{algorithm}

We first prove feasibility. 

\begin{lemma}
	\label{lem:FFeasible}
	\Cref{alg:fractional} returns a feasible primal and dual solution to the \textnormal{LP} \eqref{eq:LP}. 
\end{lemma}

\begin{proof}
	Observe that \Cref{step:Whilefeasible} ensures that the algorithm returns a feasible (fractional) solution $x$ to~\eqref{eq:LP}. Specifically, if $x$ is not feasible at time $t$, the algorithm is guaranteed to find a set $Q$ in \Cref{step:fQ} such that the primal constraint for $t$ and $Q$ is not satisfied. Then, the algorithm increases the variables of pages in $Q-p_t$ until they satisfy the constraint of $Q,t$ by  \Cref{step:Whilefeasible}. To prove that the dual $y$ is also feasible, consider some $p \in \cP$ and $j \in [n_p]$. By the update rate of $x_p(j)$ in \Cref{step:fx} it follows that 
	\begin{equation}
		\label{eq:FromSAP}
		\frac{1}{\ell} \cdot \left(     \exp \left(  \frac{\ln (\ell+1)}{c(p)} \cdot Y_p(j)\right)-1\right) = x_p(j) \leq 1.
	\end{equation} 
	The inequality holds since $Y_p(j)$ does not increase in time $t$ if $p$ is already fully evicted at this time by \Cref{step:InWhile}. Thus, by simplifying the expression in \eqref{eq:FromSAP} it holds that 
	$$\exp \left( \frac{\ln (\ell+1)}{c(p)} \cdot y_p(j) \right) \leq \ell+1.$$
	by taking $\ln$ over both sides it follows that $y_p(j) \leq c(p)$. Hence, $y$ is a feasible dual solution. 
\end{proof}

It remains to prove that the algorithm is $O(\log (\ell))$-competitive. To do so, we bound the increase in the primal by a $O(\log (\ell))$ factor of the increase in the dual, in every moment in time.

\begin{lemma}
	\label{thm:logL1}
	\Cref{alg:fractional} is $O(\log (\ell))$-\textnormal{competitive}. 
\end{lemma}

\begin{proof}

	Consider an infinitesimal increase in the value of the dual solution $y$. Specifically, assume that the algorithm chooses a set $Q$ in \Cref{step:fQ} for time time step $t$, and that the dual variable $y_t(Q)$ increases infinitesimally by $d y_t(Q)$. Let $dx$ and $dy$ denote the infinitesimal change in the objective value of $x$ and $y$, respectively. We bound the increase $dx$ in $x$ as a result of the increase $dy$.   
	\begin{equation}
		\label{eq:dx}
		\begin{aligned}
			dx ={} & \sum_{p \in Q-p_t} d x_p(r(p,t)) \cdot c(p) \\
			={} & \sum_{p \in Q-p_t}  \frac{d x_p(r(p,t)) \cdot c(p) \cdot d y_t(Q)}{d y_t(Q)} \\
			={} & \sum_{p \in Q-p_t} \ln \left(\ell+1 \right)\ \cdot \left( x_p(r(p,t))+\frac{1}{\ell} \right) \cdot d y_t(Q)
		\end{aligned}
	\end{equation} The first equality holds since the increase in $y_t(Q)$ induces an increase only on the primal variables corresponding to pages in $Q-p_t$ by \Cref{step:fx}. The last equality follows from the growth rate of a variable $x_p(r(p,t))$, for some $p \in Q-p_t$, as a result of the growth in $y_t(Q)$ (for more details see~\cite{bansal2012primal,bansal2012randomized}). We separately analyze two of the expressions in \eqref{eq:dx}. First, since $y$ changes as a result of the increase in the variable $y_t(Q)$, by \Cref{step:InWhile} it implies that 
	\begin{equation}
		\label{eq:InwhileF}
		\sum_{p \in Q-p_t} x_p(r(p,t)) < 1.
	\end{equation} Second, as the width parameter (see \Cref{def:Mininfeasible} for a reminder) is $\ell$ it follows that 
	\begin{equation}
		\label{eq:WidthIn}
		\sum_{p \in Q-p_t} \frac{1}{\ell} = \frac{|Q|-1}{\ell} \leq 1. 
	\end{equation} Therefore, by \eqref{eq:dx}, \eqref{eq:InwhileF}, and \eqref{eq:WidthIn},
	\begin{equation}
		\label{eq:PDE}
		\begin{aligned}
			dx \leq  \ln \left(\ell+1 \right) \cdot d y_t(Q) \cdot \left(  	\sum_{p \in Q-p_t} x_p(r(p,t)) + 	\sum_{p \in Q-p_t} \frac{1}{\ell} \right) = 2 \cdot  \ln \left(\ell+1 \right) \cdot d y_t(Q). 
		\end{aligned}
	\end{equation} Thus, by \eqref{eq:PDE}, every increase in $y$ incurs an increase of at most a factor $O(\log (\ell))$ in $x$. Since $x$ and $y$ are feasible primal and dual solutions by \Cref{lem:FFeasible}, the proof follows from \Cref{lem:relaxation2}. 
	
\end{proof}

\Cref{thm:logL1} immediately implies the proof of \Cref{thm:logL}. 

}

\comment{\begin{algorithm}[h]
		\caption{$\textsf{Randomized-Sampling}(I)$}
		\label{alg}
		
		%	Initialize the cache $\cG_0 \leftarrow \emptyset$.
		Initialize (infeasible) primal and (feasible) dual solutions $x \leftarrow \bar{0}$ and   $y \leftarrow \bar{0}$.  
		\For{$t \in T$}{
			
			\While{condition}{while block}
			
			Add $I_t$ to cache $\cF_t \leftarrow \cG_{t-1}+I_t$ and initialize $X_{I_t}(t) \leftarrow 0$.  
			
			%Compute $x^t \leftarrow \textnormal{\textsf{Fractional}} (I \cap t)$.
			
			%		\For{$p \in \cF_t-I_t$}{
				%			
				%			With probability $\alpha \cdot \Delta x^t_p$, remove $p$ from cache $\cF_t \leftarrow \cF-p$.\label{step:RR} 
				%			
				%		}\label{step:middle}
			\If{$\cF_t \notin \cct$ \textnormal{is not a valid cache state for $I_t$} \label{if}}{
				
				Initialize $X_p(t) \leftarrow X_p(t-1)$ and $x_p(t) \leftarrow 0$ for all $p \in \cF_t - I_t$.
				
				Initialize $y_t(\cF_t) \leftarrow 0$.
				
				\While{$\sum_{p \in \cF_t-I_t} x_p(t) \leq 1$\label{while}}{
					
					Increase $y_t(\cF_t)$ continously. 
					
					For each $p \in \cF_t-I_t$ increase $X_p(t)$ and $x_p(t)$ according to\label{step:X}
					%$$\frac{\ln (k+1) \cdot }{c(p)}$$
					$$X_p(t) \leftarrow \frac{1}{k} \cdot \left(     \exp \left(  \frac{\ln (k+1)}{c(p)} \cdot y_p(r(p,t))\right)-1\right),~~~~~~x_p(t) \leftarrow X_p(t)-X_p(t-1)$$
					
				}
				
				Sample $p \sim x(t)$.  
				
				Remove $p$ and fully-selected pages from cache $\cG_t \leftarrow \left(\cF_t-p\right) \setminus \{p' \in \cF_t~|~X_{p'}(t) \geq 1\}$.\label{step:remove}
				
			}

		}
		
		Return $\cG = \left( \cG_t~|~t \in T(I) \right)$. 
\end{algorithm}}

\comment{

\subsection{Reduction to Arbitrary Costs (Weighted Non-Linear Paging)}
\label{sec:reduction}

In this section, we apply a reduction to the unweighted case, in order to use \Cref{alg:deterministic} for instances of the non-linear paging problem with arbitrary costs. Fix an instance of non-linear paging with a set $\cP$ of pages, feasibility function $f$, cache threshold $k$, and page requests $p_1,\ldots, p_T$. In addition, %except that now 
each page $p \in \cP$ has some cost $c_p$; without the loss of generality, assume that $c(p) \in \mathbb{N}$. 

To reduce this problem to the unweighted non-linear paging, consider the following {\em reduced} unweighted instance of the above instance. for each page $p \in \cP$ we create $c(p)$ (disjoint) copies $p^1,\ldots, p^{c(p)}$ of $p$. Let $\cP^* = \left\{ p^i~|~p \in \cP, i \in \left[c(p)\right]\right\}$ be the set of copies of all pages. Define a feasibility function $f^*:\cP^* \rightarrow \mathbb{N}$ for the reduced instance such that for all $S \subseteq \cP^*$ it holds that 
$$f^*(S) = f \left( \left\{  p \in \cP~|~\exists i \in \left[c(p)\right] \text{ s.t. } p^i \in S  \right\} \right).$$
In words, the value of $f^*(S)$ is determined by the value of $f$ on the set of pages that have at least one copy in $S$. We leave the cache threshold as $k$. Finally, define a request sequence for the reduced instance, in which each time $t$ is split into $c(p_t)$ {\em sub time steps}, and in each sub time step, a different copy of the requested page $p_t$ is requested. %In other words, we split each time step $t$ into $c(p_t)$ time steps %, and in each we simulate the behavior of \Cref{alg:deterministic} on one of the copies of the requested page $p_t$. 
Therefore, the request sequence of the reduced instance is 
$$p^1_1,\ldots,p^{c(p_1)}_1,\ldots, p^{1}_T,\ldots, p^{c(p_T)}_T.$$

It can be easily shown that $f^*$ is monotone if $f$ is monotone; thus, the reduced instance is a feasible non-linear paging instance. For example, if the original function $f$ is linear, then $f^*$ is submodular. To obtain an $\ell$-competitive algorithm for the weighted instance, we apply \Cref{alg:deterministic} on the reduced instance. Specifically, when we are given a request $p_t$, our algorithm simulates $c(p_t)$ consecutive requests, one for each copy of $p_t$. Using this request sequence, the algorithm updates a solution $x^*$ for the reduced instance online. Finally, we return a {\em reconstructed} integral primal solution $x$ from $x^*$ defined as follows. For every $p \in \cP$ and $j \in [n_p]$, define $x_p(j) = 1$ if and only if for all $i \in \left[c(p)\right]$ it holds that $x^*_{p^i}(j) = 1$; that is, evict a page if and only if all copies of the page are evicted. The reduction is summarized in \Cref{alg:weighted}.  We note that when pages have large costs, running time of \Cref{alg:weighted} may be only pseudo-polynomial in the number of pages.

\begin{algorithm}[h]
	\caption{$\textsf{Deterministic Weighted}$}
	\label{alg:weighted}
	
			 Construct the reduced instance 
	
	Initialize primal and dual solutions $x^*,y^* \leftarrow \bar{0}$.

	Initialize a simulation of \Cref{alg:deterministic} on the reduced instance with the initial variables $x^*,y^*$.

	\For{every time $t$}{
	
	Feed the requests $p^1_t,\ldots,p^{c(p_1)}_t$ one by one to \Cref{alg:deterministic}.\label{step:1} %applied on the reduced instance

	Update $x^*,y^*$ according to their change in \Cref{step:1}. %according to \Cref{alg:deterministic} on the reduced instance with the requests $p^1_t,\ldots,p^{c(p_1)}_t$. 
	
	}
	
%	{\bf For } 

	Return the reconstructed solution $x$ of $x^*$. 
%	Define a primal solution $x$ for the original instance such that for all %$$ $x_p(j) = 1$ if and only if $$ 
	
	%Return the (primal) solution $x$ and the (dual) solution $y$. 
\end{algorithm}

%\begin{enumerate}
%	\item Construct the reduced instance and initial primal and dual solutions $x,y \leftarrow \bar{0}$. 
%	\item For every time $t$:
%	\begin{enumerate}
%		\item 
%	\end{enumerate} 
%\end{enumerate} 

}

\comment{

\subsection{Analysis of the Reduction}
\label{sec:RedAn}
To prove that the reduction does not damage our competitive ratio, we show that the width parameter does not increase in $f^*$ w.r.t. $f$.

	\begin{lemma}
	\label{lem:widthf^*}
	$\ell(f^*) \leq \ell(f)$. 
\end{lemma}

\begin{proof}
	Let $S^*$ be a minimally infeasible set of $f^*$ (see the definition in \Cref{def:Mininfeasible}). By the definition of $f^*$, for all $p \in \cP$ there is at most one copy of $p$ in $S^*$ by the minimality of $S^*$. Thus, the set 
	$$S = \left\{ p \in \cP~| \exists i \in \left[c(p)\right] \text{ s.t. } p^i \in S^*\right\}$$
	is a minimally infeasible set for $f$ such that $|S| = \left|S^*\right|$. Therefore, the maximum cardinality of a minimally infeasible set can only be larger for $f$ (w.r.t. $f^*$). This implies the statement of the lemma. 
%	satisfies $f(S) = f^*(S^*)>k$ ; hence, $S^1$ is also feasible in $f^*$
\end{proof} 

Recall that $\OPT$ is the offline optimum cost of our instance and let $\OPT^*$ be the offline optimum cost of the reduced instance. %We show that the 

	\begin{lemma}
	\label{lem:OPT}
	$\OPT^* \leq \OPT$.\footnote{We can state a more general result that $\OPT^* = \OPT$. However, this is not necessary for the reduction.}
\end{lemma}

\begin{proof}
	Let $x$ be an (integral) optimal solution for the original instance (i.e., a primal integral solution for \eqref{eq:LP}). Define an integral solution $x^*$ of the reduced instance such that for all $p \in \cP$, $j \in [n_p]$, and $i \in \left[c(p)\right]$ we define $x^*_{p^i}(j) = 1$ if and only if $x_p(j) = 1$. By the definition of the feasibility function $f^*$ of the reduced instance it holds that $x^*$ is a feasible solution for the reduced instance. Moreover, since there are $c(p)$ copies for each page $p \in \cP$, the cost paid on a cache miss on page $p$ is $c(p)$ for the original instance; on the other hand, for the reduced instance a cache miss on $p$ induces $c(p)$ cache misses each of cost $1$. Hence, the (weighted) cost of $x$ and the (unweighted) cost of $x^*$ are the same. The proof follows as the optimum of the reduced instance can only be smaller. 
	%For all time points $t$, 
\end{proof}

%We also show that the optimum of the reduced instance is 

We are ready to prove the main result of this section. 
\subsubsection*{Proof of \Cref{thm:deterministic}}
By the definition of the reduced instance and the definition of the reconstructed solution, the returned solution is indeed a solution for the original instance. By \Cref{lem:AlgDeterministic} the solution $x^*$ for the reduced instance, maintained through \Cref{alg:weighted}, is of cost at most $\ell(f^*) \cdot \OPT^*$. Therefore, by \Cref{lem:widthf^*} and \Cref{lem:OPT} the cost paid by the returned solution is at most $\ell(f) \cdot \OPT$. 
%a deterministic $\ell(f^*)$-competitive algorithm for non-linear paging. 
Along with the tight lower bound from the special case of paging \cite{sleator1985amortized}, the statement of the theorem follows. %, which is a special case. 
\qed

}

%By \Cref{lem:AlgDeterministic} and \Cref{lem:widthf^*}, it follows that \Cref{alg:weighted} is a deterministic $\ell$-competitive algorithm for non-linear paging. The tight lower bound follows from the known result for the special case of paging \cite{sleator1985amortized} %, which is a special case. 
%\qed

%\subsection{Polynomial Running Time}

\bibliographystyle{splncs04}
%\newpage
\bibliography{bibfile}

\comment{

\appendix

%\section{A Reduction to Online Set Cover}
%\label{sec:reductions}
%
%In the online {\em set cover} problem, we are given a set $X$ of elements and a family $\cS$ of subsets of $X$, each $S \in \cS$ of cost $c(S)$. In an online fashion, elements from $X$ are requested to be covered by the algorithm; at each time, the algorithm . The set $E$ is given in advance but the sequence of cover requests is given online. At each time, 

\section{Unbounded Competitive Ratio for Non-Monotone Feasibility Functions}
\label{sec:nonMonotone}

Consider the %related variant to 
generalization of non-linear paging in %which a cache miss happens only if $f(S + p) > k$, where $f$ is 
the feasibility function is an arbitrary function $f: 2^{\cP} \rightarrow \mathbf{N}$ that is not necessarily monotone. %, $p$ is the requested page and $S$ is the set of pages in the cache at the time of the request. Call this model the {\em blind non-linear paging} model. %, where 
We show below that %if the feasibility function $f$ can be chosen arbitrarily, then 
the competitive ratio of %the blind non-linear paging 
this generalization is unbounded, even for a constant number of pages, constant cache threshold, and even if randomization is allowed.   

	\begin{lemma}
	\label{lem:Non-Monotone}
	For every $\alpha \geq 1$, there is no $\alpha$-competitive deterministic or randomized algorithm for \textnormal{non-linear paging} with a non-monotone feasibility function.
\end{lemma}

\begin{proof}
	%Assume towards a contradiction that there is $\alpha \geq 1$ and an  $\alpha$-competitive algorithm $\cA$ for non-linear paging. 
	Consider an instance with five pages $\cP = \{0,1,2,3,4\}$ with costs $c(0) = c(1) = c(2) = 0$ and $c(3) = c(4) = 1$. %Let $k \in \mathbb{N}$ be the cache threshold and 
	Let $$\cS = \big\{\emptyset,\{1\},\{1,2\}, \{0,1\}, \{0,2\}, \{0,1,3\},\{0,2,4\}, \{3\},\{4\} \big\}$$
	be the set of {\em feasible} sets of our instance. That is, we define a non-monotone feasibility function $f: 2^{\cP} \rightarrow \mathbf{N}$  such that for all $S \in \cS$ define $f(S) = 0$ and for all $S \in 2^{\cP} \setminus \cS$ define $f(S) = 1$. In addition, let $k = 0$ be the cache threshold.  
%	define $f$ as follows:
%	\begin{equation}
%		\label{eq:fmon}
%		\begin{aligned}
%		f(\emptyset), f\left(\{1\}\right),  f\left(\{2\}\right),  f\left(\{1,2\}\right), f\left(\{1,3\}\right),  f\left(\{2,4\}\right) =& {}  k\\
%			f\left(\{1,2,3,4\}\right), f\left(\{1,2,3\}\right), f\left(\{1,2,4\}\right),  f\left(\{1,3,4\}\right) =& {} k+1\\  f\left(\{2,3,4\}\right),  f\left(\{1,4\}\right),  f\left(\{2,3\}\right), f(\{3\}), f(\{4\}) =& {}  k+1.\\
%		\end{aligned}
%	\end{equation}
 Intuitively, pages $1,2$ (both together with page $0$) can be thought of as {\em precedence constraints} required for pages $3,4$ to be in the cache feasibly, %without a cache miss
  respectively. 
 
 Consider the following sequences $I_1 = (1,2,0,3),I_2 = (1,2,0,4)$ of page requests, each of length $4$. Both sequences start with the subsequence $1,2,0$ of requests. %request for $1$ and then a request for $2$. 
 Then, $I_1$ continues with a request for page $3$ and $I_2$ with a request for page $4$. Clearly, the offline optimum on both $I_1$ and $I_2$ is of cost $0$.  For $I_1$, the optimum evicts page $2$ after the third request. Similarly, for $I_2$ the optimum evicts $1$ after the subsequence $1,2,0$. Since we are in the blind non-linear caching model, In both cases the only cache misses are for pages $1,2,0$ (no cost).  In contrary, observe that if an algorithm evicts page $1$ on sequence $I_1$ or evicts $2$ on sequence $I_2$, the last request of the sequences incurs a cost, since page $3$ without page $1$ induces a cache miss and similarly page $4$ causes a cache miss without page $2$. %(in this case there is a cache miss in every request).    
 
 An online algorithm must give the same output (or the same distribution of outputs for randomized algorithms) after each prefix of the sequence. Since both sequences share the same prefix $1,2,0$, for every algorithm $\cA$ there is one sequence $I \in \{I_1,I_2\}$ such that the expected cost of $\cA$ on $I$ is at least $\frac{1}{2}$. This gives an unbounded competitive ratio. We remark that we can achieve the same result for arbitrarily long sequences, by creating instances with more pages. %one of the sequences %on the prefix $s$
\end{proof}

}

\comment{

\section{Unbounded Competitive Ratio in Alternative Models}
\label{sec:nonMonotone}

Consider the related variant to non-linear paging in which a cache miss happens only if $f(S + p) > k$, where $f$ is the feasibility function, $p$ is the requested page and $S$ is the set of pages in the cache at the time of the request. Call this model the {\em blind non-linear paging} model. %, where 
We show below that %if the feasibility function $f$ can be chosen arbitrarily, then 
the competitive ratio of the blind non-linear paging is unbounded, even for a constant number of pages, constant cache threshold, and even if randomization is allowed.   

	\begin{lemma}
	\label{lem:Non-Monotone}
	For every $\alpha \geq 1$, there is no $\alpha$-competitive algorithm for \textnormal{blind non-linear paging}, even if randomization is allowed.  
\end{lemma}

\begin{proof}
	%Assume towards a contradiction that there is $\alpha \geq 1$ and an  $\alpha$-competitive algorithm $\cA$ for non-linear paging. 
	Consider an instance with five pages $\cP = \{0,1,2,3,4\}$ with costs $c(0) = c(1) = c(2) = 0$ and $c_3 = c_4 = 1$. %Let $k \in \mathbb{N}$ be the cache threshold and 
	Let $$\cS = \big\{\emptyset,\{1\},\{1,2\}, \{0,1\}, \{0,2\}, \{0,1,3\},\{0,2,4\}, \{3\},\{4\} \big\}$$
	be the set of {\em feasible} sets of our instance. For all $S \in \cS$ define $f(S) = 0$ and for all $S \in 2^{\cP} \setminus \cS$ define $f(S) = 1$. In addition, let $k = 0$ be the cache threshold.  
%	define $f$ as follows:
%	\begin{equation}
%		\label{eq:fmon}
%		\begin{aligned}
%		f(\emptyset), f\left(\{1\}\right),  f\left(\{2\}\right),  f\left(\{1,2\}\right), f\left(\{1,3\}\right),  f\left(\{2,4\}\right) =& {}  k\\
%			f\left(\{1,2,3,4\}\right), f\left(\{1,2,3\}\right), f\left(\{1,2,4\}\right),  f\left(\{1,3,4\}\right) =& {} k+1\\  f\left(\{2,3,4\}\right),  f\left(\{1,4\}\right),  f\left(\{2,3\}\right), f(\{3\}), f(\{4\}) =& {}  k+1.\\
%		\end{aligned}
%	\end{equation}
 Intuitively, pages $1,2$ (both together with page $0$) can be thought of as {\em precedence constraints} required for pages $1,3$ to be in the cache without a cache miss, respectively. 
 
 Consider the following sequences $I_1 = (1,2,0,3),I_2 = (1,2,0,4)$ of page requests, each of length $4$. Both sequences start with the subsequence $1,2,0$ of requests. %request for $1$ and then a request for $2$. 
 Then, $I_1$ continues with a request for page $3$ and $I_2$ with a request for page $4$. Clearly, the offline optimum on both $I_1$ and $I_2$ is of cost $0$.  For $I_1$, the optimum evicts page $2$ after the third request. Similarly, for $I_2$ the optimum evicts $1$ after the subsequence $1,2,0$. Since we are in the blind non-linear caching model, In both cases the only cache misses are for pages $1,2,0$ (no cost).  In contrary, observe that if an algorithm evicts page $1$ on sequence $I_1$ or evicts $2$ on sequence $I_2$, the last request of the sequences incurs a cost, since page $3$ without page $1$ induces a cache miss and similarly page $4$ causes a cache miss without page $2$. %(in this case there is a cache miss in every request).    
 
 An online algorithm must give the same output (or the same distribution of outputs for randomized algorithms) after each prefix of the sequence. Since both sequences share the same prefix $1,2,0$, for every algorithm $\cA$ there is one sequence $I \in \{I_1,I_2\}$ such that the expected cost of $\cA$ on $I$ is at least $\frac{1}{2}$. This gives an unbounded competitive ratio. We remark that we can achieve the same result for arbitrarily long sequences, by creating instances with more pages. %one of the sequences %on the prefix $s$
\end{proof}

}

\comment{

\section{Knapsack Cover Constraints}
\label{sec:KPcoverConstraints}

In this section, we show that knapsack cover constraints cannot be used for supermodular covering. As the covering formulation is the standard method to relax paging problems \cite{bansal2012primal,bansal2012randomized,adamaszek2018log}, it forces us to use different techniques. 

It suffices to focus on an offline setting for our example; this can be thought of paging in a single time point. Let $g:2^{\cP} \rightarrow \mathbb{N}$ be a set function over a set of elements $\cP$, and let $c(p)$ be a cost for every $p \in \cP$. We consider the minimization problem of finding $S \subseteq \cP$ of minimum cost $c(S) = \sum_{p \in S} c(p)$ such that $g(S) \geq g(\cP)$. %, for some value $N \in \mathbb{N}$. %Recall the complement function $g:2^{\cP} \rightarrow \mathbb{N}$ of $f$ defined as $g(S) = f(\cP \setminus S)$ for every $S \subseteq \cP$. The above optimization problem in terms of $g$ can be expressed as: finding $S \subseteq \cP$ of minimum cost $c(S)$ such that $g(S) \geq f(\cP)-k$. 
We can now describe a relaxation of the above problem using exponentially many knapsack cover constraints. For $S,T \subseteq \cP$ %and $p \in \cP$ 
we use the notation $g_S(T) = g(T \cup S)-g(S)$. %; in addition, let $f_S(p) = f_S(\{p\})$ for simplicity.  %\footnote{To reduce the integrality gap of \eqref{eq:KP}, the appearance of $g(p)$, for all $p \in S$, in the LHS of the constraint is often replaced by $\tilde{g}(p) = \min \left(g(p), f(S)-k\right)$.}   

\begin{equation}
	\label{eq:KP}
	\begin{aligned}
	~~~~~~~~	& \min \sum_{p \in \cP~}  x_p \cdot c(p)\\
		& ~~~~~~~~\text{s.t. }\\
		& \sum_{p \in S} x_p \cdot g_S(\{p\}) \geq g_S(\cP), 	~~~~~~~\forall S  \subseteq \cP\\ 
		& ~~~~~~x_p \geq 0 ~~~~~~~~~~~~~~~~~~~~~~~~~~	\forall p \in \cP
	\end{aligned}
\end{equation} The above is indeed a relaxation of the problem if $g$ is submodular (see more details in, e.g., \cite{gupta2020online}). However, we can give the following simple example for a supermodular function $g$  such that \eqref{eq:KP} is not a relaxation of the integer problem. Define $g:2^{\cP} \rightarrow \mathbb{N}$ where for all $S \subseteq \cP$ we define
$$g(S) = \begin{cases}
	0, & ~~\textnormal{if } |S| < n \\
	1, & ~~\textnormal{otherwise}. %x < 0
\end{cases}$$
Clearly, $\cP$ is the only solution for this instance. However, $\cP$ does not satisfy the constraint of \eqref{eq:KP} for $S = \emptyset$, i.e.,  $$\sum_{p \in \cP} x_p \cdot g_{\emptyset}(\{p\}) = 0 < 1  = g_{\emptyset}(\cP) = g(\cP).$$
Hence, $\cP$ is not a solution for \eqref{eq:KP}. We conclude that \eqref{eq:KP} is not a relaxation of the integer problem, even for the above simple supermodular function. 
%is supermodula

 %Consider the  

%Knapsack cover constraints are 

}

%==================================OLD_VERSION================================================================================================================================================

\comment{

\section{The Algorithm}

\subsection{Definitions}

In the {\em weighted paging} problem we are given a set $\om = \{p_1,\ldots,p_n\}$ of $n$ pages, with fetching costs $c: \om \rightarrow \mathbb{N}$. in addition, we are given a cache that can contain a subset of pages $S \subseteq \om$ with at most $k \in \mathbb{N}$ pages $|S| \leq k$. In an online fashion, the entries of a sequence of {\em page requests} $I \subseteq \om^{T}, T \in \mathbb{N}$, are revealed one by one, where we use $T(I) = \{1,2,\ldots,T\}$ to denote the set of entries in $I$, simply called {\em time points}. In each time $t \in T(I)$, $I_t \in \om$ is the requested page in time $t$. Given a sequence $I \subseteq \om^{\mathbb{N}}$, in each time point $t \in T(I)$, the requested page $I_t$ must be either $(i)$ present in the cache at this time, or $(ii)$ brought to the cache; in case $(ii)$ we pay the fetching cost $c(I_t)$. The objective is to minimize the total cost incurred by serving the requests $I$. We proceed with some additional definitions in a slightly more formal manner.  

For the remaining of this work, fix a set of pages $\om = \{p_1,\ldots,p_n\}$ of $n$ pages, fetching costs $c: \om \rightarrow \mathbb{N}$, and cache size $k \in \mathbb{N}$. For the following definitions, fix a sequence of requests $I \subseteq \om^{\mathbb{N}}$. For a time point $t \in T(I)$, a {\em valid cache state} of $I_t$ is $S \subseteq \om$ such that $(i)$ the subsets fits into the cache $|S| \leq k$ and $(ii)$ it serves the request $I_t \in S$. Let $\cC(I_t) = \{S \subseteq \om~|~|S| \leq k,I_t \in S\}$ be the set of valid cache states of $I_t$. Finally, let $u = 2^{\om}$ be all subsets of pages and let $\cC(I) = \{\cS \in {u}^{T(I)}~|~\forall t \in T(I):~\cS_t \in \cct\}$ be the set of {\em solutions} of $I$; that is, all sequences of subsets of pages, containing a valid cache state of $I_t$ in the $t$-th entry, for each $t \in T(I)$. For the simplicity of the notations, for any set $S$, function $f:S \rightarrow \mathbb{R}$, and $X \subseteq S$ let $f(X) = \sum_{i \in X} f(i)$; in addition, for any element $i$ let $S+i = S\cup \{i\}$ and $S-i = S\setminus \{i\}$. For some solution $\cS \in \cC(I)$ let the {\em cost} of $\cS$ be $$c(\cS) = c(\cS_1)+\sum_{t \in T(I) -1~} c \left( S_t \setminus S_{t-1} \right).$$

That is, $c(\cS)$ is the total cost of fetching costs of pages fetched across all time points in $T(I)$. Let the optimum value of an {\em offline solution} for weighted paging be the cost of the solution of $I$ with minimum cost: $\OPT(I) = \min_{\cS \in \cC(I)} c(\cS)$. For every $t \in T(I)$ let $I\cap t = \left(I_1,\ldots, I_t\right)$ be the the first $t$ entries in the sequence $I$.  An {\em online algorithm} $\cA$ for weighted paging is an algorithm that given $I \cap t$ returns a valid cache state for $I_{t+1}$: $\cA(t+1) = \cA(I \cap t) \in \cC(I_{t+1})$. %, where $I \subseteq \om^{\mathbb{N}}$ is a sequence of page requests. 
Assume that $\cA(1) = \{I_1\}$, i.e., the algorithm begins with an empty cache and brings the requested page $I_1$ in the first time point $t = 1$. %.\footnote{For soundness, assume that $I(0) = ()$; i.e., the empty sequence.} 

We say that $\cA$ is {\em deterministic} if the computation of $\cA(t)$ is deterministic, and $\cA$ is {\em randomized} otherwise. Let $\cA(I) = \left(\cA(t)~|~t \in T(I) \right)$ be the solution returned by $\cA$. The {\em competitive ratio} of $\cA$ is defined as the worst case for the ratio between the solution produced by $\cA$ and the offline optimal solution, over all sequences of page requests: 
$$\textsf{CompetitiveRatio}(\cA) = \sup_{I \in \om^{\mathbb{N}}} \frac{c \left(\cA(I) \right)}{\OPT(I)}.$$

Analogously, for randomized online algorithms, the competitive ratio of a randomized algorithm $\cA$ is defined over the expectation over all random decisions that the algorithm performs:

$$\textsf{CompetitiveRatio}(\cA) = \sup_{I \in \om^{\mathbb{N}}} \frac{\E \left[ c \left(\cA(I) \right) \right]}{\OPT(I)}.$$

\subsection{LP}

We use the following {\em linear programming (LP)} relaxation of weighted paging. For the following, fix a sequence of requests $I \subseteq \om^{\mathbb{N}}$. For some $w \in \mathbb{N}$, let $[w] = \{1,2,\ldots,w\}$. For every $p \in \om$ and $t \in T(I)$ let $R(p,t) = \left\{ t' \in [t]~|~I_{t'} = p \right\}$ be the set of time points until time $t$ in which the requested page is $p$; in addition, let $r(p,t) = \max_{t' \in R(p,t)} t'$ be the time $j$ of the last request for $p$ up to time $t$; assume that $r(p,t) = -1$ if $R(p,t) = \emptyset$. We also use $R(p,j) = \{t \in T(I)~|~j = r(p,t)\}$ for all time points between the $j$-th request to $p$ to the last time point before the $(j+1)$-th request for page $p$. 

The variables of the LP are $x_{p}(j)$ such that $p \in \om$ and $j \in R(p,T(I))$ is the $j$-th time that page $p$ is requested in $I$. For some page $t \in T(I)$, let $\bar{\cC}(I_t) = \{S \subseteq \om~|~|S|>k, I_t \in S\}$ be the set of all non-valid cache states of $I_t$ that contain $I_t$. These sets are called {\em violating sets}.    The constraints of the LP require that the total fractional amount taken from each violating set $S \in \bbc$ is at least $|S|-k$; this is the minimum number of pages  from $S$ that are required to be outside of the cache in any valid cache state of $I_t$. For the corner case where $x_p(r(p,t)) = x_p(-1)$, i.e., for $R(p,t) = \emptyset$, assume that $x_p(-1) = 1$. This ensures the feasibility of all constraints and does not incur an additional cost for the LP. The LP is defined as follows.

\begin{equation}
	\label{eq:LP}
	\begin{aligned}
\textsf{Primal-LP}(I):~~~~~~~~	& \min \sum_{p \in \om~} \sum_{j \in R(p,T(I))} c(p) \cdot x_p(j)\\
	& ~~~~~~~~\text{s.t. }\\
 & \sum_{p \in S-I_t} x_p(r(p,t)) \geq |S|-k, 	~~~~~~~~~~~~~~\forall t \in T(I)~\forall S \in \bbc\\ 
& x_p(j) \geq 0 ~~~~~~~~~~~~~~~~~~~~~~~~~~~~~~~~~~~~~~	\forall p \in \om~\forall j \in R(p,T(I)).
	\end{aligned}
\end{equation} 

In the following we define the {\em dual} LP of \eqref{eq:LP}. For short, given $p \in \om$, variables $y_t(S)$ for $t \in T(I)$ and $S \in \bbc$, let $$y_t(p) = \sum_{S \in \bbc~|~p \in S-I_t} y_t(S)$$
%For every $p \in \om$ and $j \in R(p,T(I))$ let $R(p,j) = \left|\left\{ t' \in [t]~|~I_{t'} = p \right\}\right|$. 
The {\em dual} of \eqref{eq:LP} is the following. 
\begin{equation}
	\label{eq:dual}
	\begin{aligned}
		\textsf{Dual-LP}(I):~~~~~~~~	& \max \sum_{t \in T(I)~} \sum_{S \in \bbc} \left( |S|-k \right) \cdot y_t(S)\\
		& ~~~~~~~~\text{s.t. }\\
		& \sum_{t \in R(p,j)~} y_t(p) \leq c(p), 	~~~~~~~~~~~~~~\forall p \in \om~\forall j \in R(p,T(I))\\ 
		& y_t(S) \geq 0 ~~~~~~~~~~~~~~~~~~~~~~~~~~~~~	\forall t \in T(I) ~\forall S \in \bbc. 
	\end{aligned}
\end{equation}

To further simplify the notations, for $p \in \om$ and $j \in R(p,T(I))$ let $y_p(j) = \sum_{t \in R(p,j)} y_t(p)$ (i.e., the left side of the dual constraints). We use $\textsf{Primal-LP}(I)$ and $\textsf{Dual-LP}(I)$ to denote the optimum values of optimal solutions for the primal and the dual programs w.r.t. the instance $I$, respectively. 
The LP \eqref{eq:LP} is used in \cite{bansal2012randomized} in a generalization of weighted paging, and it is in particular a relaxation of a solution for weighted paging. Thus, using weak duality we have the following result. 
	\begin{lemma}
	\label{lem:relaxation}
	For every $I \subseteq \om^{\mathbb{N}}$ it holds that $\textnormal{\textsf{Dual-LP}}(I) \leq \textnormal{\textsf{Primal-LP}}(I) \leq \OPT(I)$. 
\end{lemma}

\comment{
In \cite{bansal2012randomized}, an online deterministic algorithm is presented, which finds an $O(\ln (k))$-competitive feasible (fractional) solution for \eqref{eq:LP} using the online primal-dual method. We use this algorithm here as a black box, where the next result lists the important attributes of the solution obtained in \cite{bansal2012randomized} for our purposes.  

	\begin{lemma}
	\label{lem:fractional}
	Let $I \subseteq \om^{\mathbb{N}}$. There is an algorithm \textnormal{\textsf{Fractional}} that for every $t \in T(I)$ returns a feasible solution $x^t$ for $\textnormal{\textsf{LP}}(I \cap t)$ such that the following holds. 
	\begin{enumerate}
		\item $x^t_{I_t}(r(I_t,t)) = 0$ and $x^t_p(r(p,t))\leq1$ for all $p \in \om, t \in T(I)$. 
		\item $x^t_p(r(p,t)) \geq x^{t-1}_p(r(p,t))$ for all $p \in \om-I_t, t \in T(I)$ (monotonicity). 
		
		\item $x^t_p(j) = x^{t'}_p(j)$ for all $t',t \in T(I), t'<t$, $p \in \om$, $j \in \{0,1,\ldots, r(p,t)-1\}$  (no regrets).   
		\item $\sum_{p \in \om~} \sum_{j \in r(p,T(I))} c(p) \cdot x^{T(I)}_p(j) \leq 2 \cdot \ln (k+1) \cdot \OPT(I)$ (competitive ratio).  
	\end{enumerate}
\end{lemma}

}

\subsection{The Algorithm}

For the remaining of this section, fix a sequence of requests $I \subseteq \om^{\mathbb{N}}$. Our algorithm maintains a valid cache state $\cG_t$ for every time point $t \in T(I)$. Upon the arrival of a page $I_t$ such that $\cF_t = \cG_{t-1}+I_t$ is no longer a valid cache state (i.e., $\cF_t \in \bbc$) we construct a distribution $x(t)$ over the pages $\cF_t-I_t$ and remove one page from the cache, {\em sampled} according to the distribution $x(t)$. More formally, we say that $x(t) = \left( x_p(t)~|~p \in \cF_t-I_t\right)$ is a distribution if $\sum_{p \in \cF_t-I_t} x_p(t) = 1$. We say that $p \in \cF_t-I_t$ is {\em distributed} by $x(t)$, and write $p \sim x(t)$, if $p$ is selected from $\cF_t-I_t$ with probability $x_p(t)$. 

The portion that a page $p \in \cF_t-I_t$ takes from the distribution $x(t)$ depends on the increase of $p$ in previous time points in the interval $j = r(p,t),\ldots,t$. This is modeled by a variable $X_p(t)$ which sums the total portions of $p$ in the distributions $x(t'), t' \in \{j,\ldots,t\}$. Thus, $x_p(t) = X_p(t)-X_p(t-1)$.

To compute $x_p(t)$, we continuously increase the (dual) variable $y_t(\cF_t)$, corresponding to the non-valid set of pages in the cache at time $t$. At the same time, we increase the probability $x_p(t)$ (and equivalently, $X_p(t)$) according to an exponential function that depends on $y_p(j)$, where recall that $y_p(j)$ is the sum over all dual variables corresponding to $p$ in the interval $R(p,j)$. Once a valid distribution is constructed, we sample a page according to $x(t)$ and achieving a valid cache state once again. We also remove from the cache pages $p'$ for which $X_{p'}(t) \geq 1$. The pseudocode of the algorithm is given in \Cref{alg}.

\begin{algorithm}[h]
	\caption{$\textsf{Randomized-Sampling}(I)$}
	\label{alg}
	
	Initialize the cache $\cG_0 \leftarrow \emptyset$.
	
	\For{$t \in T(I)$}{
		
		Add $I_t$ to cache $\cF_t \leftarrow \cG_{t-1}+I_t$ and initialize $X_{I_t}(t) \leftarrow 0$.  
		
		%Compute $x^t \leftarrow \textnormal{\textsf{Fractional}} (I \cap t)$.
		
%		\For{$p \in \cF_t-I_t$}{
%			
%			With probability $\alpha \cdot \Delta x^t_p$, remove $p$ from cache $\cF_t \leftarrow \cF-p$.\label{step:RR} 
%			
%		}\label{step:middle}
		\If{$\cF_t \notin \cct$ \textnormal{is not a valid cache state for $I_t$} \label{if}}{
		
		Initialize $X_p(t) \leftarrow X_p(t-1)$ and $x_p(t) \leftarrow 0$ for all $p \in \cF_t - I_t$.
		
		Initialize $y_t(\cF_t) \leftarrow 0$.
		
			\While{$\sum_{p \in \cF_t-I_t} x_p(t) \leq 1$\label{while}}{
			
		Increase $y_t(\cF_t)$ continously. 
		
		For each $p \in \cF_t-I_t$ increase $X_p(t)$ and $x_p(t)$ according to\label{step:X}
		%$$\frac{\ln (k+1) \cdot }{c(p)}$$
		$$X_p(t) \leftarrow \frac{1}{k} \cdot \left(     \exp \left(  \frac{\ln (k+1)}{c(p)} \cdot y_p(r(p,t))\right)-1\right),~~~~~~x_p(t) \leftarrow X_p(t)-X_p(t-1)$$
			
		}
		
		Sample $p \sim x(t)$.  
		
		Remove $p$ and fully-selected pages from cache $\cG_t \leftarrow \left(\cF_t-p\right) \setminus \{p' \in \cF_t~|~X_{p'}(t) \geq 1\}$.\label{step:remove}
		
		}

	}
	
	Return $\cG = \left( \cG_t~|~t \in T(I) \right)$. 
\end{algorithm}

\subsection{Analysis of \textsf{Randomized-Sampling (RS)}}

 In this section, we prove our main result. 

\begin{theorem}
	\label{theorem:main}
	$\textnormal{\textsf{CompetitiveRatio}}(\textnormal{\textsf{RS}}) = O(\ln(k))$. 
\end{theorem}

Clearly, \Cref{alg} maintains a valid cache state at all times. In the following, we focus on proving the competitive ratio of the algorithm.
For the remaining of this section, fix a sequence of requests $I \subseteq \om^{\mathbb{N}}$. For every $p \in \om$ and $t \in T(I)$, let $T_p(t) = \{t' \in [t]~|~r(p,t') = r(p,t)\}$ be all time points since the last request for $p$ until time $t$ (a prefix of the interval $R(p,j)$). Let $\op$ be an indicator for the event that $p$ is removed from the cache in one of the time points $T_p(t)$. %n \Cref{step:RR}, conditioned on the event that $p$ is not removed from the cache in \Cref{step:evict} in any time point $T_p(t)$. In addition, let $M_p(t)$ be an indicator for the event that $p$ is removed from the cache in one of the time points $T_p(t)$ in \Cref{step:evict}. 
%We start with some elementary probabilistic claims
%
%	\begin{lemma}
%	\label{lem:IID}
%	For all $p,q \in \om$ and $t,u \in T(I)$ it holds that $x^t_p$ and $x^u_q$ are independent random variables. 
%\end{lemma}
%
%\begin{proof}
%	By \Cref{step:RR}, the probability that $\op = 1$ depends only on $x^{t'}_p$ for all $t' \in T_p(t)$; as $x^{t'}_p$ is a number and not a random variable by \Cref{lem:fractional}, the proof follows.   
%\end{proof}
%
%
%For every $p \in \om$ and $t \in T(I)$, let $X_p(t) = \sum_{t' \in T_p(t)} x_p(t)$ be the total probabilities that page $p$ accumulated during times in $T_p(t)$. 
In the next result, we show that the probability of removing a page $p$ from the cache in the interval $T_p(t)$, for some $t \in T(I)$, is bounded within a constant factor from $X_p(t)$ - the random variable that is increased in the interval $T_p(t)$ as long as $p$ remains in the cache. % Observe that $X_p(t)$ and $y_p(j)$ are random variables. 

	\begin{lemma}
	\label{lem:SAP}
	For every $p \in \om$ and $t \in T(I)$ it holds that $\frac{X_p(t)}{2}\leq \Pr \left(   \op = 1 \right) \leq X_p(t)$.
\end{lemma}

\begin{proof}
	If $X_p(t)\geq1$ then by \Cref{step:remove} it holds that $p$ is removed from the cache in time $t$. Additionally, by \Cref{while} it follows that $x_p(t) \leq 1$ and $X_p(t) = x_p(t)+X_p(t-1) \leq 1+1 \leq 2$, where $X_p(t-1) < 1$ because otherwise $p$ would have been removed from the cache prior to time $t$ by \Cref{step:remove}. Thus, $$\frac{X_p(t)}{2} \leq \Pr \left(   \op = 1 ~|~ X_p(t)\geq1\right) \leq 1 \leq X_p(t).$$
	The inequalities holds since $1 \leq X_p(t)\leq 2$. For the remainder of the proof, assume that $X_p(t)\leq1$.  
	Let $T_p(t) = \{t_1,\ldots,t_{\ell}\}$ ordered such that for all $i \in [\ell-1]$ it holds that $t_{i}<t_{i+1}$ and that $t_{\ell} = t$. Observe that $\op = 1$ if and only if there is $i \in [\ell]$ such that $p$ is removed from the cache in time $t_i$. This event is the union of the disjoint events, over all $i \in [\ell]$, that $p$ is not removed from the cache in times $\{t_1,\ldots,t_{i-1}\}$ and $p$ is removed in time $t_i$. In the above event, $p$ belongs to the cache at all times $t_1,\ldots, t_i$; therefore, the probabilities $x_p(t_1),\ldots,x_p(t_i)$ are defined in \Cref{alg}. Thus, 
	\begin{equation}
		\label{eq:Nprob}
		\Pr \left(   \op = 1 \right) = \sum_{i \in [\ell]} \left( \prod_{j = 1}^{i-1} \left( 1- x_p(t_j) \right) \cdot x_p(t_i) \right) =  1- \prod_{i = 1}^{\ell} \left( 1-x_p(t_i) \right) 
	\end{equation} Note that $x_p(t_i) \in [0,1]$ by \Cref{while}. Then, the second equality follows from the complement probability, that $\op = 0$. Now, for the upper bound on $	\Pr \left(   \op = 1 \right) $ we use the following auxiliary claim. 
		\begin{claim}	
			\label{clm:a}
			\label{claim:ab}
		Let  $0 \leq a,b \leq 1$ such that $a+b \leq 1$ and let $f(x) = (1-x) \cdot (1-b-(a-x))$ for $x \in [0,a]$. Then, it holds that $ (1-a-b) \leq f(x)$ for all $x \in [0,a]$.  
	\end{claim}
	\begin{claimproof}
		Observe that $f'(x) = a+b-2x$. Thus, the only extreme points for $f$ in $x \in[0,a]$ are $x_1 = 0$, $x_2 = a$, and $x_3  = \frac{a+b}{2}$. The claim follows since $f(x_1) = 1-a-b$, $1-a-b+a \cdot b = (1-a) \cdot (1-b) = f(x_2)$, and $f(x_3) = 1-a-b+\left(\frac{a+b}{2}\right)^2 \geq 1-a-b$. 
	\end{claimproof} 
	
%	If $X_p(t)>1$ then clearly $\Pr(\op = 1) \leq X_p(t)$. Henceforth, assume that $0 \leq X_p(t) \leq 1$ and 
Recall that $0 \leq X_p(t) \leq 1$; thus, observe that $0 \leq \sum_{i \in [\ell]} x_p(t_i) = X_p(t) \leq 1$. Therefore, by repeatedly applying Claim~\ref{claim:ab} on all expression in the product in \eqref{eq:Nprob} we have 
	\begin{equation}
		\label{eq:ProdB}
	 \prod_{i = 1}^{\ell} \left( 1-x_p(t_i) \right) \geq \left( 1- \sum_{i \in [\ell]} x_p(t_i)\right). 
	\end{equation}
	Therefore, 
	\begin{equation*}
		\label{eq:UB}
			\Pr \left(   \op = 1 \right) =  1- \prod_{i = 1}^{\ell} \left( 1-x_p(t_i) \right) \leq 1-\left( 1- \sum_{i \in [\ell]} x_p(t_i)\right) =  \sum_{i \in [\ell]} x_p(t_i) = X_p(t). 
	\end{equation*} The first equality follows from \eqref{eq:Nprob}. The inequality holds by \eqref{eq:ProdB}. For the lower bound, 
	\begin{equation*}
		\label{eq:LB}
		\begin{aligned}
			\Pr \left(   \op = 1 \right) ={} & \sum_{i \in [\ell]} \left( \prod_{j = 1}^{i-1} \left( 1-x_p(t_j) \right) \cdot x_p(t_i) \right) \\
			\geq{} & \left( 1-\left( 1-\frac{1}{\ell}\right)^{\ell}\right) \cdot \sum_{i \in [\ell]}  x_p(t_i) \\
			\geq{} & \left(1-\frac{1}{e}\right) \cdot X_p(t) \\
			\geq{} & \frac{X_p(t)}{2}. 
		\end{aligned}
	\end{equation*} The first equality follows from \eqref{eq:Nprob}. The first inequality uses the arithmetic/geometric (AM/GM) inequality (see, e.g., Lemma 3.1 in \cite{goemans1994new} for more details). The last inequality holds because $\left( 1-\frac{1}{m}\right)^{m} \leq \frac{1}{e}$ for every $m \in \mathbb{N}$ and since $\sum_{i \in [\ell]} x_p(t_i) = X_p(t)$.  
\end{proof}

Let $y = \left( y_p(j)~|~p \in \om, j \in R(p,T(I))\right)$ be the dual variables defined in \Cref{alg}. For a scalar $\alpha \in \mathbb{R}$, let $y \cdot \alpha =  \left( y_p(j) \cdot \alpha~|~p \in \om, j \in R(p,T(I))\right)$. In the following we consider the feasibility of the dual solution $y$ constructed in \Cref{alg}. Although $y$ is a random vector, the following lemma holds for every realization of $y$.  

	\begin{lemma}
	\label{lem:dualF}
$\frac{1}{\ln(2)} \cdot y$ is a feasible solution for \eqref{eq:dual}. 
\end{lemma}

\begin{proof}
We first show that for all  $p \in \om$ and $j \in R(p,T(I))$ it holds that $y_p(j) \leq \ln (2) \cdot c(p)$. Recall that the variable $y_p(j)$ represents the left hand side of the dual \eqref{eq:dual} constraint corresponding to $p$ and $j$. Let $t \in R(p,j)$. By the update rate of $X_p(t)$ in \Cref{step:X} we have
	\begin{equation}
		\label{eq:FromSAP}
		  \frac{1}{k} \cdot \left(     \exp \left(  \frac{\ln (k+1)}{c(p)} \cdot y_p(j)\right)-1\right) = X_p(t) \leq  \frac{\Pr \left(   \op = 1 \right)}{\frac{1}{2}} \leq 2. 
	\end{equation} 	The first inequality follows from \Cref{lem:SAP}. The second inequality holds since $\Pr \left(   \op = 1 \right)$ is a probability. Thus, by simplifying the expression in \eqref{eq:FromSAP} and taking $\ln$ over both sides it follows that 
	\begin{equation}
		\label{eq:ln}
		y_p(j) \leq \ln \left( \frac{2k}{k+1} \right) c(p) \leq \ln (2) \cdot c(p).  
	\end{equation}  Hence, by \eqref{eq:ln} the proof follows. 
\end{proof}

Let $X = \left( X_p(j-1)~|~p \in \om, j \in R(p,T(I))\right)$ be all random variables $X_p(t)$ at the end of the intervals $R(p,j)$, for $j \in R(p,T(I))$. Note that $X$ is not necessarily a feasible solution for the primal LP \eqref{eq:LP}. Recall that \Cref{lem:SAP} guarantees that the total cost that \Cref{alg} accumulates is only a constant from the total cost of $X$, that is $$c(X) = \sum_{p \in \om~} \sum_{j \in R(p,T(I))} c(p) \cdot X_p(j-1).$$
Next, we wish to bound the expected cost $c(X)$ w.r.t. the expected value of $y$, that is:
$$v(y) = \sum_{t \in T(I)~} \sum_{S \in \bbc} \left( |S|-k \right) \cdot y_t(S).$$
However, as both $X$ and $y$ are random variables, we only use a bound in expectation. The expectation $\E$ refers to the random sampling performed in \Cref{alg}. In the following claim, we show that in expectation the set $\cF_t$ of pages in the cache in time $t$ have a small total weight w.r.t. $X$. Intuitively, this holds since $X$ is a measure for the removal of pages from the cache (\Cref{lem:SAP}). This result will later be used to bound $c(X)$ w.r.t. $v(y)$. 

	\begin{lemma}
	\label{lem:sumX}
	For all $t \in T(I)$ it holds that $\E \left[ \sum_{p \in \cF_t-I_t} X_p(t) ~\big|~\cF_t \in \bbc \right] \leq 15$. 
\end{lemma}

\begin{proof}
	First, observe that the added value to $X$ in time $t$ is bounded by $1$ using \Cref{while}:
	\begin{equation}
		\label{eq:X1}
	\sum_{p \in \cF_t-I_t} X_p(t)-X_p(t-1) = 	\sum_{p \in \cF_t-I_t} x_p(t) \leq 1.
	\end{equation} For short, let $Q = \sum_{p \in \cF_t-I_t} X_p(t-1)$. Now, we bound $\mathcal{Q} = \E \left[ Q ~\big|~\cF_t \in \bbc \right]$. By \eqref{eq:FromSAP} it holds that $$0 \leq Q\leq \sum_{p \in \cF_t-I_t} 2 \leq 2 \cdot k.$$
	The last inequality holds since $|\cF_t-I_t| = |\cG_{t-1}| \leq k$ by the feasibility of the cache state $\cG_{t-1}$. Thus, to simplify the computation, we can upper bound the expectation by a discrete sum over all integral values $m \in \{1,\ldots, 2k\}$ that upper bound the sum $Q$, and taking the probability that $m-1 \leq Q \leq m$:  

\begin{equation}
		\label{eq:sum}
	\begin{aligned}
	\mathcal{Q} = {} &	\E \left[ Q ~\big|~\cF_t \in \bbc \right] \\
	\leq{} & \sum_{m \in \left[ 2 \cdot k\right]} m \cdot \Pr \left(m-1 \leq Q \leq m \right) \\
		\leq{} &  \sum_{m \in \left[ 2 \cdot k\right]} m \cdot \prod_{m=1}^{k} \max \left\{0,\left( 1-\frac{X_p(t-1)}{2}\right) \right\} \cdot \mathbb{I}_{m-1 \leq Q \leq m},
	\end{aligned}
\end{equation}
 where $\mathbb{I}_{m-1 \leq Q \leq m}$ is an indicator for the event $m-1 \leq Q \leq m$. The last inequality holds since in order for $Q$ to satisfy $m-1 \leq Q \leq m$, each page $p \in \cF_t-I_t$ needs to be in the cache; the probability for page $p$ to be in the cache is $1-\Pr\left( \mathbb{I}_{p,{t-1}}\right) \leq \max \left\{0,\left( 1-  \frac{X_p(t-1)}{2}\right)\right\}$ by \Cref{lem:SAP}. In the following we assume that for each $p \in \cF_t-I_t$ it holds that $1-\frac{X_p(t-1)}{2} \geq 0$ or the claim is trivial. We use the following auxiliary claim. 
 
 	\begin{claim}
\label{claim:MAX}
 	Let $n,m \in \mathbb{N}$ and let $x_1,\ldots , x_n \in [0,1]$ such that $\sum_{i \in [n]} x_i = m$. Then, $$\prod^{n}_{i = 1} (1-x_i) \leq \left(1-\frac{m}{n}\right)^n.$$ 
 \end{claim}
 \begin{claimproof}
 	%We aim to find the maximum value of the product $\prod_{i=1}^n (1-x_i)$ subject to the constraints $\sum_{i=1}^n x_i = m$ and $0 \leq x_i \leq 1$ for all $i$. We employ 
 	Using the AM-GM inequality:
 	
 	\[
 	\frac{1-x_1 + 1-x_2 + \ldots + 1-x_n}{n} \geq \sqrt[n]{(1-x_1)(1-x_2)\ldots(1-x_n)}
 	\]
 	
 	Simplifying the left side yields:
 	
 	\[
 	\frac{n - m}{n} \geq \sqrt[n]{(1-x_1)(1-x_2)\ldots(1-x_n)}
 	\]
 	
 The proof follows by taking $n$ to the exponent in the two sides: 
 	
 	\[
 \prod^{n}_{i = 1} (1-x_i) = 	(1-x_1)(1-x_2)\ldots(1-x_n) \leq \left(\frac{n - m}{n}\right)^n = \left(1-\frac{m}{n}\right)^n.
 	\]
% 	The maximum product occurs when $x_1 = x_2 = \ldots = x_n = \frac{m}{n}$.
% 	
% 	Hence, the maximum value of $\prod_{i=1}^n (1-x_i)$ is $\left(\frac{n - m}{n}\right)^n$.
 \end{claimproof}
 
 By Claim~\ref{claim:MAX} the right expression in \eqref{eq:sum} is maximized if the values of $X_p(t)$, for $p \in \cF_t-I_t$, are distributed equally among the $k$ pages in $\cF_t-I_t$. Moreover, we can lower bound $Q$ by $m-1$ in each expression in the summation. Thus, by \eqref{eq:sum}
 \begin{equation}
 	\label{eq:sum2}
 	\begin{aligned}
 	\mathcal{Q}  \leq  \sum_{m \in \left[ 2 \cdot k\right]} m \cdot \left(1-\frac{\frac{m-1}{2}}{k}\right)^k \leq \sum_{m \in \left[ 2 \cdot k\right]} m \cdot e^{-\left(\frac{m-1}{2} \right)} \leq \sum_{m \in \left[ 2 \cdot k\right]} 3 \cdot e^{-\frac{m}{4}} = 3 \cdot \frac{1-e^{-\frac{k}{2}}}{1-e^{-\frac{1}{4}}} \leq 14.  
 	\end{aligned}
 \end{equation} The second inequality holds since $(1-\frac{q}{\ell})^{\ell} \leq e^{-q}$ for every $\ell,q \in \mathbb{N}$. The third inequality holds since $m \cdot e^{-\left(\frac{m-1}{2} \right)}  \leq 3 \cdot e^{-\frac{m}{4}}$ for every $m \in \mathbb{N}$. The equality follows by the sum of a geometric sequence. By \eqref{eq:X1} and \eqref{eq:sum2} it holds that 
% \begin{equation}
% 	\label{eq:t=r}
% 	\E \left[ \sum_{p \in \cF_t-I_t} X_p(t)  ~\big|~\cF_t \in \bbc \right] = 	\mathcal{Q} +	\E \left[ \sum_{p \in \cF_t-I_t} \left(X_p(t)-X_p(t-1) \right) ~\big|~\cF_t \in \bbc \right] \leq 14+1 = 15. 
% \end{equation}
 \begin{equation}
  	\label{eq:t=r}
 	\begin{aligned}
 		\E \left[ \sum_{p \in \cF_t-I_t} X_p(t)  ~\big|~\cF_t \in \bbc \right] ={} & 	\mathcal{Q} +	\E \left[ \sum_{p \in \cF_t-I_t} \left(X_p(t)-X_p(t-1) \right) ~\big|~\cF_t \in \bbc \right] \\
 		\leq{} & 14+1 \\
 		={} & 15. 
 	\end{aligned}
 \end{equation}
\end{proof}

We now give an upper bound on $c(X)$ conditioned on the value of the dual $v(y)$.  

	\begin{lemma}
	\label{lem:c(X)}
	$\E \left[ c(X) ~\big| y \right] \leq 16 \ln (k+1) \cdot v(y)$. 
\end{lemma}

\begin{proof}
	Consider some time point $t \in T(I)$ and consider an increase of $d y_t(\cF_t)$ to the value of $y_t(\cF_t)$, which implies an increase of $(|\cF_t|-k) \cdot d y_t(\cF_t)$ in $v(y)$. We upper bound the expected increase in $c(X)$ as a result of the increase in $v(y)$. 
	\begin{equation}
		\label{eq:IncX}
		\begin{aligned}
	{} &	\E \left[ \sum_{p \in \cF_t-I_t} c(p) \cdot d X_p(t)~\bigg|~d y_t(\cF_t) \right] \\
		={} & 	\E \left[ \sum_{p \in \cF_t-I_t} c(p) \cdot \frac{d X_p(t)}{d y_t(\cF_t)} \cdot d y_t(\cF_t) ~\bigg|~d y_t(\cF_t) \right] \\
		={} &  \E \left[ \sum_{p \in \cF_t-I_t} \left( \ln (k+1) \cdot \left( X_p(t)+\frac{1}{k}\right)\right) \cdot d y_t(\cF_t) ~\bigg|~d y_t(\cF_t)\right]\\
		={} &  \ln (k+1) \cdot d y_t(\cF_t) \cdot \left( \E \left[ \sum_{p \in \cF_t-I_t} X_p(t)  ~\bigg|~d y_t(\cF_t) \right] + \E \left[\sum_{p \in \cF_t-I_t} \frac{1}{k} ~\bigg|~d y_t(\cF_t) \right]\right)\\
		\leq {} &   16 \cdot \ln (k+1) \cdot d y_t(\cF_t) \\
		= {} &  16 \cdot \ln (k+1) \cdot \left( |\cF_t|-k \right)d y_t(\cF_t)
		\end{aligned}
	\end{equation} The second equality follows by the induced increasing rate of $X$ w.r.t. $y$ as analyzed in \cite{bansal2012randomized}. Note that the increase in $y_p(r(p,t))$ at the moment where $y_t(\cF_t)$ is increased is exactly $d y_t(\cF_t)$. The third equality follows by the linearity of the expectation. The inequality holds by \Cref{lem:sumX} and since $|\cF_t-I_t| = |\cG_{t-1}| \leq k$ by the feasibility of the cache state $\cG_{t-1}$. Observe that $$\E \left[ \sum_{p \in \cF_t-I_t} X_p(t)  ~\bigg|~d y_t(\cF_t) \right] = \E \left[ \sum_{p \in \cF_t-I_t} X_p(t)  ~\bigg|~\cF_t \in \bbc \right]$$ since the change in $y_t(\cF_t)$ implies that $\cF_t$ is not valid in the cache but since the change is infinitesimal this does not give additional knowledge on the growth of $y_t(\cF_t)$; thus, we can indeed apply \Cref{lem:sumX}. Observe that the above bounds the expected growth of $ \frac{c(X)}{d y_t(\cF_t)}$ conditioned on $d y_t(\cF_t)$. More generally, we can consider the growth of $ \frac{d c(X)}{d y}$, which is the growth of $c(X)$ w.r.t. the growth of $y$. Then, 
	
	\begin{equation}
		\label{eq:Growth}
		\begin{aligned}
		\E \left[ c(X) ~\bigg|~ y \right] ={} & 	\E \left[\int \frac{d c(X)}{d y} \cdot d y ~\bigg|~ y \right] \\
		={} &  \int \E \left[ \frac{d c(X)}{d y}  ~\bigg|~ y \right] \cdot d y \\
		\leq{} & \int 16 \cdot \ln (k+1) \cdot  \frac{d v(y)}{d y} d y \\
		={} & 16 \cdot \ln (k+1) \cdot v(y). 
		\end{aligned}
	\end{equation}
	
	The second equality follows by the linearity of the expectation (changing the order of summation). The inequality follows from \eqref{eq:IncX}, considering the general expression $\frac{d v(y)}{d y}$ and  $\frac{d c(X)}{d y}$ for the changes in $v(y)$ and $c(X)$ w.r.t. $y$ during the course of \Cref{alg}. The proof follows from \eqref{eq:Growth}. 
	%, the expected growth rate of $X$ w.r.t. $y$ implies that $\E \left[ c(X) ~\big| y \right] \leq 16 \ln (k+1) \cdot v(y)$.  
%	\begin{equation}
%		\label{eq:Ex}
%		\begin{aligned}
%		\E \left[ c(X) ~\big| v(y) \right] =	\E \left[ \sum_{p \in \cF_t-I_t} c(p) \cdot d X_p(t)~\bigg|~d y_t(\cF_t) \right] 
%		\end{aligned}
%	\end{equation}
\end{proof}

We can finally prove our main result.

\noindent {\bf Proof of \Cref{theorem:main}:} For every $p \in \om$ and $j \in R(p,T(I))$ let $ \mathbb{I}_{p,j}$ be the indicator for the event that $p$ is removed from the cache after the $j$-th request for $p$ and before the $(j+1)$-th request for $p$. Then, the total cost paid by the algorithm is $$A = c \left( \cA (I) \right) = \sum_{p \in \om~} \sum_{j \in R(p,T(I))} c(p) \cdot \mathbb{I}_{p,j}.$$
% Recall that for every $p \in \om$ and $j \in R(p,T(I))$ we use $X_p(j-1) = X_p(t)$ for all $t \in T(I)$ such that $r(p,t) = j$ and $r(p,t+1) \neq j$; that is, the value of $X_p(t)$ at the end of the corresponding interval of $j$ and $p$. 
We use $\E_{X}$ to denote an expectation over all values of the vector $X$, w.r.t. to the probability that each vector is realized depending on the sampling of the algorithm. Then,
\begin{equation}
	\label{eq:main}
	\begin{aligned}
	\E \left[ A \right] ={} & \E \left[ \sum_{p \in \om~} \sum_{j \in R(p,T(I))} c(p) \cdot \mathbb{I}_{p,j} \right]\\
	={} & \E_{X} \left[ \E \left[ \sum_{p \in \om~} \sum_{j \in R(p,T(I))} c(p) \cdot \mathbb{I}_{p,j} \right] ~\bigg| ~X~\right]\\
	={} &  \E_{X} \left[ \sum_{p \in \om~} \sum_{j \in R(p,T(I))} c(p) \cdot \Pr \left( \mathbb{I}_{p,j} = 1 \right) ~\bigg| ~X~\right] \\
	\leq{} &   \E_{X} \left[  \sum_{p \in \om~} \sum_{j \in R(p,T(I))} c(p) \cdot X_p(j-1) ~\bigg| ~X~\right] \\
		={} &   \E \left[  c(X) \right] \\
	\end{aligned}
\end{equation} The second equality follows by the law of total expectation. The third equality follows by the linearity of the expectation. The inequality holds by \Cref{lem:SAP}. Hence, by \eqref{eq:main}
\begin{equation}
	\label{eq:final}
	\begin{aligned}
		\E \left[ A \right] 	\leq{} &   \E \left[  c(X) \right] \\
		={} &  \E_y \left[ \E \left[  c(X)~\big|~y \right]\right] \\
			\leq{} &  		16 \cdot \ln (k+1) \cdot \E_{y} \left[v(y)\right] \\
		={} &  		16 \cdot \ln (2) \cdot \ln (k+1) \cdot \E_{y} \left[\frac{1}{\ln (2)} \cdot v(y)\right] \\
		\leq{} & 16 \cdot \ln (2) \cdot \ln (k+1) \cdot \textnormal{\textsf{Dual-LP}}(I) \\
		\leq{} & 16 \cdot \ln (2) \cdot \ln (k+1) \cdot \OPT(I)\\
		={} & O \left( \ln (k)\right) \cdot \OPT(I)
	\end{aligned}
\end{equation} The first equality follows by the law of total expectation. The second inequality follows from \Cref{lem:c(X)}. The third inequality holds since $\frac{1}{\ln(2)} \cdot y$ is always a feasible solution for \eqref{eq:dual} by \Cref{lem:dualF}; therefore, it is smaller or equal to the optimum for the dual, that is $\textnormal{\textsf{Dual-LP}}(I)$. The last inequality holds by \Cref{lem:relaxation}. By \eqref{eq:final} we conclude that \Cref{alg} is $O \left( \ln (k)\right)$-competitive.

\section{Preliminaries}

\noindent {\bf Submodular Paging:} In the online (unweighted){\em submodular paging (SP)} problem we are given a set of {\em atoms} $A = \{a_1,\ldots,a_m\}$ and a set of pages $P = \{p^1,\ldots,p^n\}$, where each page $p^i, i \in [n]$ has an associated subset of atoms: $A(p_i) \subseteq A$.\footnote{Note that each page has an {\em associated} set of atoms rather than that the page {\em is} a set of atoms because in this setting we can have an arbitrary high number of pages that use the exact same subset of atoms, i.e., $2^m \ll n$.} We are also given a cache that can store at most $k$ atoms simultaneously and assume that $|A(p_i)| \leq k$ for all $i \in [n]$ (the set of atoms of each page fits in the cache). 

Let $T \subseteq \mathbb{N}_{>0}$ be a set of time points. For all $t \in T$, let $\cC_A(t) \subseteq A,\cC_P(t) \subseteq P$ be the set of atoms and pages that appear in the cache at time $t$, respectively.  A page $p \in P$ can appear in the cache $p \in \cC_P(t)$ in time $t$ only if all of the associated atoms of $p$ are present in the cache at this time: $A(p) \subseteq \cC_A(t)$. We relate to the space that the atoms require as the valuable resource: for all time point $t \in T$ it must hold that $|\cC_A(t)| \leq k$; on the other hand, the number  of pages in the cache may be arbitrarily high: $|\cC_P(t)| \approx n \gg k$. One may think of $\cC_P(t)$ as the small metadata of the pages and consider $\cC_A(t)$ as the actual (shared) resource required by the pages. 

In an online fashion, we receive page requests; for each time point $t \in T$ there is a page request for some $p_t \in P$. At time $t \in T$ all atoms that belong to the requested page $p_t$ must be brought to the cache: $A(p_t) \subseteq \cC(t)$. For some $t \in T$, we say that there is a {\em cache miss} at time $t$ if $A(p_t) \subseteq \cC(t-1)$, where the cache is empty before the first request $\cC(0) = \emptyset$ (note that $0 \notin T$). The objective is to decide which atoms to evict at each time point such that the number of cache misses is minimized.

}

\appendix

\section{Hardness Results}
\label{sec:hardness}

In this section we give hardness results for non-linear paging. In particular, we give a lower bound for supermodular paging based on a reduction from online set cover. This yields the proofs of \Cref{lem:Set Cover,thm:LB,thm:hardness}. We remark that the proof can be strengthened by providing a reduction from online submodular cover, however, we omit the details as we are not aware of stronger lower bounds for online submodular cover (compared to online set cover). In addition, in \Cref{sec:hardRestricted} we give a lower bound for solving non-linear paging in restricted cases. %online set cover and online submodular cover admit similar lower bounds \cite{alon2003online,korman2004use}.

 Recall that in online set cover we are given  a ground set $X = \{1,2,\ldots, n\} = [n]$ and a family $\cS = \{S_1,\ldots, S_m\}$ of subsets of $X$ (note that in this section only $n$ does not describes the number of pages). Requests for certain elements from $X$ arrive online; let $i_t$ be the requested element in time $t$ for a set of time steps $T$. If the requested element is not covered by a previously chosen set, we choose a set $S \in \cS$ containing the element to cover it, paying a cost $c(S)$. The goal is to minimize the cost of selected sets. Throughout this section, we fix the arbitrary online set cover instance $I$ described above. 

We construct a reduced supermodular paging instance $R_I$. Define the set of pages as $\cP = X \cup \cS$; that is, we define a page for every element and every set of the online set cover instance $I$. Define a cost $c(i) = \infty$ for every element $i \in [n]$ and for every set $S \in \cS$ we keep the original cost $c(S)$ from the set cover instance $I$ also as the cost for the supermodular paging instance $R_I$. Finally, define the cover function $g: 2^{\cP} \rightarrow \mathbb{N}$ such that for all $F \subseteq \cP$ 
\begin{equation}
	\label{eq:g}
	g(F) = \left| \{i \in X~|~ \exists S \in \cS \cap F \text{ s.t. } i \in S\} \cup \left(X \cap F\right) \right|. 
\end{equation} In simple words, $g(F)$ describes the number of elements $i$ in $X$ {\em covered} either by a set $S$ (i.e., $i \in S$) that belongs to $F$ or by the actual corresponding element that belongs to $F$ (i.e., $i \in F$). %An illustration of the construction is given in \Cref{fig:Y}. 

 Clearly, $g$ is a cover function implying that it is submodular. Therefore, define the corresponding feasibility of $R_I$ as function $f(A) = n-g(\cP \setminus A)~\forall A \subseteq \cP$. It follows that $f$ is supermodular. Define the cache threshold as $k = 0$. Thus, a feasible set of pages $F$ can be in the cache if and only if we cover all elements (either by a set or by the element itself).

\begin{obs}
	\label{obs:F}
	For every $F \subseteq \cP$, it holds that $f(F) \leq k$ if and only if for every $x \in X$ there is $p \in \cP \setminus F$ such that (i) $p \in \cS$ and $x \in p$ or (ii) $p \in X$ and $x = p$. 
\end{obs}

 %$f(F) \leq k$; this happens if and only if $g(F) \geq n$; that is, if all elements in $X$ are covered. 

We are left with setting the sequence of page requests. We define two consecutive subsequences of requests. The first subsequence $Q_1 = (p_1,\ldots, p_m)$ requests at time $1 \leq j \leq m$ the page $p_j = S_j$. Observe that for every time $1 \leq j \leq m$ it holds that the set of pages $F_j$ in cache is feasible because $F_j \cap X = \emptyset$; thus, every element $x \in X$ covers itself w.r.t. $g$, i.e., $g(F_j) \geq g(X) = n$. 
%recall that $x \in X$ is corresponding page to the element $x$ from the set cover instance $I$. 
For the second subsequence, assume without the loss of generality that $T \cap [m] = \emptyset$, i.e., the set of time points requested by the set cover instance is disjoint to the set of time points used for our first subsequence $Q_1$. Now, define the second subsequence $Q_2 = \left(i_t  ~|~ t \in T\right)$ as the set of element requests by the set cover instance $I$. %for every $t \in T$ define $p_t = i_t$
An illustration of the construction is given in \Cref{fig:Y}.

We use the following result. In the proof, we construct a solution for the set cover instance $I$ based on a solution for $R_I$.

%\begin{lemma}
%	\label{lem:Set Cover}
%	For any $\rho \geq 1$, if there is a $\rho$-competitive algorithm for \textnormal{supermodular paging} then there is a $\rho$-competitive algorithm for \textnormal{online set cover} of the same running time up to a polynomial factor.  
%\end{lemma}

\subsubsection*{Proof of \Cref{lem:Set Cover}}

For some $\rho \geq 1$,	let $\cA$ be a $\rho$-competitive algorithm for \textnormal{supermodular paging}. Given instance $I$ of online set cover, construct the reduced instance $R_I$ and apply algorithm $\cA$ on $R_I$ in an online fashion. At time $t \in T$, define $H_t \subseteq \cS$ as the collection of sets that are not in the cache at time $t$. As $\cP \setminus H_t$ must be feasible in the cache by the feasibility of $\cA$, it holds that $f(\cP \setminus H_t) \leq k$; thus, by \Cref{obs:F} every element in $X$ is either (i) covered by some set $S \in \cS$ or (ii) not yet requested by the set cover instance. Hence, $H = (H_t~|~t \in T)$ forms a feasible solution for $I$.  
	
	As $\cA$ is $\rho$-competitive for a constant $\rho$, it holds that $\cA$ does not evict a page $x \in X$ as it would pay an infinite cost. In addition, as the costs of sets are the same in $I$ and $R_I$, we conclude that the cost paid by $\cA$ on $R_I$ is the same as the cost paid by the solution $H$ for $I$. Moreover, the optimum of $R_I$ is at least as large as the optimum of $I$: at every moment $t \in T$, since $\cA$ does not evict pages in $X$, it must use only sets to cover the requested elements from $Q_2$, incuring the costs of the sets.  Thus, the above gives a $\rho$-competitive algorithm for online set cover with the same running time up to a polynomial factor for constructing the reduction. \qed

 \begin{figure}
	%	\hspace{4cm}{
		\centering
		\begin{tikzpicture}[scale=1.4, every node/.style={draw, circle, inner sep=1pt}]
			% first bipartite graph
			\node (p2) at (5.5,-0.5) {$\bf \textcolor{blue}{2}$};
			\node (p1) at (4,-0.5) {$\textcolor{black}{1}$};
			\node (p3) at (7,-0.5) {$\bf \textcolor{blue}{3}$};
			\node (a1) at (4,-1.5) {$\bf \textcolor{red}{1}$};
			
			\node (a2) at (5.5,-1.5) {$\textcolor{black}{2}$};
			\node (a3) at (7,-1.5) {$\textcolor{black}{3}$};
			\node (a4) at (8.5,-1.5) {$\bf \textcolor{red}{4}$};
			\node (p4) at (8.5,-0.5) {$4$};
			
				\node (S1) at (10,-0.5) {$\bf \textcolor{red}{S_1}$};
				
					\node (S2) at (11.5,-0.5) {$S_2$};
	\draw[->] (S1) -- (a1);
	
		\draw[->] (S2) -- (a3);
			\draw[->] (S2) -- (a4);
				\draw[->] (S2) -- (a2);
				
		\draw[->] (S1) -- (a4);			
			\draw[->] (p1) -- (a1);
				\draw[->] (p2) -- (a2);
					\draw[->] (p3) -- (a3);
						\draw[->] (p4) -- (a4);
			%		\draw (p3) -- (a2);
			%	\draw[line width=2pt, color=red] (p3) -- (a3);
			
%			\draw[->] (p3) -- (a3);
%			\draw[->] (p2) -- (a1);
%			\draw[->] (p1) -- (a2);
%			\draw[->] (p2) -- (a3);
%			
%			\draw[->] (p4) -- (a2);
%			\draw[->] (p4) -- (a3);
%			\draw[->] (p4) -- (a4);

			\node[draw=none] at (2.5, -0.5) {$\textsf{pages}$};
			
			\node[draw=none] at (2.5, -1.5) {$\textsf{elements}$};
			
		\end{tikzpicture}
		%\vspace{-1.5cm} 
		\caption{\label{fig:Y} An illustration of the construction for $X = \{1,\ldots,4\}$ and $\cS = \{S_1,S_2\}$. The considered moment in time takes place after a request for element-pages $1,4$; to serve these requests, the set-page $S_1$ (in red) is evicted from cache (taken to the cover). Observe that $\{2,3,S_1\}$, the pages outside of the cache, form a feasible cover of the elements $\{1,\ldots,4\}$.}
	\end{figure}
 
%\begin{lemma}
%	\label{lem:Set Cover}
%	For any $\rho \geq 1$, if there is a $\rho$-competitive algorithm for \textnormal{supermodular paging} then there is a $\rho$-competitive algorithm for \textnormal{online set cover} of the same running time up to a polynomial factor.  
%\end{lemma}
Observe that in our reduction $|\cP| = m+n$ and that $f(\cP) = n$. Thus, by \Cref{lem:Set Cover} and the results of \cite{alon2003online,korman2004use}, we have the statement of \Cref{thm:hardness}. %(for $f(\cP) = \Omega(n)$).

\subsubsection*{Proof of \Cref{thm:LB}}

By the results of \cite{korman2004use}, there are constants $a,b>0$ such that there are online set cover instances with $n$ elements, $m = n^a$ sets such that the minimum number of sets required to form a feasible cover at the end of the algorithm is $K = n^b$ and no randomized online algorithm is $\Omega(\log^2\left(n\right))$-competitive on these instances unless $\textnormal{NP}\subseteq \textnormal{BPP}$. Moreover, note that the cardinality of the maximum minimally infeasible set (see \Cref{def:Mininfeasible}) is at least $K = n^b$ and at most $n$; thus, $\ell$ is polynomial in $n$ i.e., there is a constant $c>0$ such that $\ell = n^c$. Therefore, the lower bound on the running time obtained by \Cref{lem:Set Cover} combined with \cite{korman2004use} is $\Omega(\log^2\left(\ell\right)) = \Omega(\log^2\left(n\right))$. \qed

\subsection{Hardness of Restricted Non-linear Paging instances}
\label{sec:hardRestricted}

We now show that even for very restricted cases of the problem, the competitive ratio is effectively unbounded (if the competitive ratio is not stated as a function of $\ell$ or $n$). 

\begin{theorem}
\label{thm:HardnessRestricted}
For every $\rho \geq 1$, there is no randomized or deterministic $\rho$-competitive algorithm for \textnormal{non-linear paging} even if $k = 0$ and the range of $f$ is $\{0,1\}$.
\end{theorem}

\begin{proof}
We give a reduction from the classic paging problem. Let $I$ be a paging instance with a set $\cP$ of $n+1$ pages, cache capacity $n$, and requests $p_t$ for every point of time $t \in T$. We define a non-linear paging instance $R$ as follows. The set of pages and requests are the same in $I$ and $R$. Define a feasibility function $f:\cP \rightarrow \{0,1\}$ such that for every $S \subseteq \cP$ it holds that $f(S) = 0$ if $|S| \leq n$ and $f(S) = 1$ otherwise. Finally, define $k = 0$ as the cache size constraint of $R$. Clearly, a subset of pages $S \in \cP$ is feasible for $I$ if and only if it is feasible for $R$. Thus, by the well-known hardness results for deterministic paging, we cannot obtain better than $n$-competitive \cite{sleator1985amortized} (recall that the instance has $n+1$ pages and the cache capacity is $n$); if randomization is allowed, we cannot obtain better than $O(\log n)$-competitive \cite{fiat1991competitive}. As $n$ can be arbitrarily large, we prove the theorem.  
    %Fix some integer $\rho \geq 1$
\end{proof}

\section{Fractional Algorithm for Strengthened LP}%\textsf{Stronger-LP}}
\label{sec:stronger}

The integrality gap example shows that LP \eqref{eq:LP} is not sufficient for obtaining a randomized $\textnormal{polylog}(\ell)$-competitive algorithm for general non-linear paging. Instead, we describe a stronger version of our LP \eqref{eq:LP}, %
in which we require to remove from each infeasible set $S$ a set of pages $S'$, so that the complement of $S'$ in $S$, $S \setminus S'$, will be feasible in the cache (i.e., $ f\left(S \setminus S'\right) \leq k$).  
%As we do not know a priori which such set $S'$ is evicted by the integral optimum, we only demand that evicting a minimum number of pages from $S$ whose complement size fits in cache. 
Formally, for a set $S \subseteq \cP$, let $\cC(S) = \left\{S' \subseteq S \text{ s.t. } f\left(S \setminus S'\right) \leq k\right\}$ be the collection of subsets of $S$ such that $ f\left(S \setminus S'\right) \leq k$ and define
\begin{equation}
	\label{eq:q(S)}
	q(S) = \min_{S' \in \cC(S)} \left|S'\right|
\end{equation}
as the number of pages needed to be evicted from $S$ at any point in time. 
The LP is given below using similar notation to \eqref{eq:LP}.

\begin{equation}
	\label{eq:LPS}
	\begin{aligned}
		\textsf{Stronger-LP}:~~~~~~~~	& \min \sum_{p \in \cP~} \sum_{j \in [n_p]}  x_p(j) \cdot c(p)\\
		& ~~~~~~~~\text{s.t. }\\
		& \sum_{p \in S-p_t} x_p(r(p,t)) \geq q(S), 	~~~~~~~~~~~~~~\forall t \in T~\forall S  \in \cS(t) \\ 
		& x_p(j) \geq 0 ~~~~~~~~~~~~~~~~~~~~~~~~~~~~~~~~~~~	\forall p \in \cP~\forall j \in [n_p]
	\end{aligned}
\end{equation} For example, in classic paging, for a subset of pages $S \subseteq \cP$ it holds that $q(S) = |S|-k$, and hence the above constraints require removing at least $|S|-k$ pages from cache within $S$, coinciding with the LP used for solving (weighted) paging \cite{bansal2012primal}. % at least for classic paging, the integrality gap of the LP does not exceed $O(\log k)$. 
Clearly, LP \eqref{eq:LPS} is a relaxation of non-linear paging, since an integral solution, for any infeasible set $S$, must evict at least $q(S)$ pages from $S$.

In this section, we give a fractional $O\left(\log \mu\right)$-competitive algorithm for solving \textsf{Stronger-LP} given in \eqref{eq:LPS}, which yields the proof of \Cref{thm:MU}. The dual LP of \eqref{eq:LPS} is as follows. 

\begin{equation}
	\label{eq:dualS}
	\begin{aligned}
		%\textsf{Dual-LP}:
		~~~~~~~~	& \max~~~ \sum_{t \in T~} \sum_{S  \in \cS(t)} y_t(S) \cdot q(S)\\
		& ~~~~~~~~\text{s.t. }\\
		& \sum_{t \in I(p,j)~} \sum_{S \in \cS(t)\big|p \in S-p_t} y_t(S) \leq c(p), 	~~~~~~~~~~~~~~\forall p \in \cP~\forall j \in [n_p]\\ 
		%	& y_t(S) \geq 0 ~~~~~~~~~~~~~~~~~~~~~~~~~~~~~	\forall t \in T(I) ~\forall S \in \bbc. 
	\end{aligned}
\end{equation}

Let $\textsf{Primal-LP}$ and $\textsf{Dual-LP}$  denote the optimum values of optimal solutions for the primal and the dual programs, respectively.  %, and by $\OPT$ the offline (integral) optimum.  
Our fractional algorithm initializes the (infeasible) primal solution and the (feasible) dual solutions as vectors of zeros $\bar{0}$. Upon the arrival of the requested page $p_t$ at time $t$ we do the following process until $x$ satisfies all constraints of the LP \eqref{eq:LPS} up to time $t$.

\begin{algorithm}[h]
	\caption{$\textsf{Fractional}$}
	\label{alg:fractionalS}
	
	%	Initialize the cache $\cG_0 \leftarrow \emptyset$.
	Initialize (infeasible) primal solutions $x,z \leftarrow \bar{0}$ 
	
	Initialize (feasible) dual solution $y \leftarrow \bar{0}$.

	\For{$t \in T$}{
		
		\While{$x$ is not feasible for $t$\label{step:Whilefeasible}}{
			
			%	Let $\textsf{Frac}_t = \left\{p \in \cP~|~r(p,t) \geq 1 \textnormal{ and } x_p(r(p,t)) < 1\right\}$ be pages not fully evicted.   
			
			Find $S \subseteq \cS(t)$ %of maximum cardinality 
			such that $\sum_{p \in S-p_t} x_p(r(p,t)) < q(S)$ and $x_p(r(p,t))<1~\forall p \in S-p_t $.\label{step:fQS}

			\While{$\sum_{p \in S-p_t} x_p(r(p,t)) < q(S)$ \textnormal{and} $x_p(r(p,t))<1~\forall p \in S-p_t $\label{step:InWhileS}}{
				
				Increase $y_t(S)$ continuously.\label{step:fyS}

				\ForAll{$p \in S-p_t$}{
					
					increase $x_p(r(p,t))$ according to\label{step:fxS}
					%$$\frac{\ln (k+1) \cdot }{c(p)}$$
					$$x_p(r(p,t)) \leftarrow \frac{1}{\mu} \cdot \left(     \exp \left(  \frac{\ln (n+1)}{c(p)} \cdot Y_p(r(p,t))\right)-1\right).$$

%					\If{$x_p(r(p,t))-\frac{z_p(r(p,t))}{2}\geq \frac{1}{4 \cdot N \cdot n}$}{
%						
%						$z_p(r(p,t)) \leftarrow 2 \cdot x_p(r(p,t))$.\label{step:zS}
%						
%					}
%					
%					
%					\If{$x_p(r(p,t)) \geq \frac{1}{2}$}{
%						
%						$z_p(r(p,t)),x_p(r(p,t))  \leftarrow 1$.\label{step:1/2S}
%						
%					}
					
				}

			}
			
		}

	}
	Return the (primal) solution $x$ and the (dual) solution $y$. 
\end{algorithm}

In time $t$, the algorithm considers a {\em violating set} of pages $S \subseteq \cP$. This set violates the primal constraint in \eqref{eq:LP2} corresponding to $t$ and $S$. %and is of maximum cardinality amongst all such sets.   %We remove all pages in $Q$ except $p_t$ from cache and 
We increase the variable $y_{t}(S)$ continuously and at the same time increase variables $x_p(r(p,t))$ for all pages in $S$ (except for $p_t$) that are not fully evicted yet. The increase rate is a function of $Y_p(r(p,t))$, where %recall that $Y_p(r(p,t))$
\begin{equation}
	\label{eq:Y2}
	Y_p(j) =\sum_{t \in I(p,j)~} \sum_{S \subseteq \cP \big|p \in S-p_t} y_t(S)
\end{equation}  is the left hand side of the corresponding dual constraint of $p$ and $j = r(p,t)$ in \eqref{eq:LP2} (analogously to \eqref{eq:Y}). This growth function has an exponential dependence on $Y_p(r(p,t))$, scaled by the cost of page $c(p)$, the number of pages, and the total cover demand. The growth of variable $x_p(j)$ stops once it reaches $1$. %$\frac{1}{2}$.   

%To compute $x_p(j)$, we continuously increase the (dual) variable $y_t(Q)$, corresponding to the non-valid set of pages in the cache at time $t$. At the same time, we increase the probability $x_p(t)$ (and equivalently, $X_p(t)$) according to an exponential function that depends on $y_p(j)$, where recall that $y_p(j)$ is the sum over all dual variables corresponding to $p$ in the interval $R(p,j)$. Once a valid distribution is constructed, we sample a page according to $x(t)$ and achieving a valid cache state once again. We also remove from the cache pages $p'$ for which $X_{p'}(t) \geq 1$. The pseudocode of the algorithm is given in \Cref{alg:fractional}. 

Once the primal constraint corresponding to $t,S$ is satisfied, there are two cases. If $x$ becomes feasible, the algorithm proceeds to the next time step. Otherwise, the algorithm repeats the above process with a new set $S'$ for which $x$ does not satisfy its constraint. The pseudocode of the algorithm is given in \Cref{alg:fractionalS}.

\subsubsection*{Analysis of \Cref{alg:fractionalS}}
In the following we analyze the competitive performance of the algorithm.  We first prove the feasibility of the primal and dual solutions $x,y$. %Note that the algorithm returns a feasible primal solution $x$ by \Cref{step:InWhileS}. 

	\begin{lemma}
	\label{lem:PFeasibleS}
	\Cref{alg:fractionalS} returns a feasible primal solution $x$ to \eqref{eq:LPS}. 
\end{lemma}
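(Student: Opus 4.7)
The plan is to establish feasibility of the returned $x$ in three steps: first, show that the inner while loop at \Cref{step:InWhileS} always terminates; second, prove a structural claim that whenever $x$ is not feasible at time $t$, a valid choice for \Cref{step:fQS} exists; third, combine these facts to argue termination of the outer loop and conclude that $x$ satisfies every constraint of \eqref{eq:LPS}.

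For the inner loop, continuously increasing $y_t(S)$ in \Cref{step:fyS} causes each $x_p(r(p,t))$ with $p \in S-p_t$ to grow strictly according to the exponential rule in \Cref{step:fxS}. Since all these variables are bounded by $1$, within finite time either the constraint $\sum_{p \in S-p_t} x_p(r(p,t)) \geq q(S)$ becomes satisfied or some $x_p(r(p,t))$ reaches $1$, triggering the exit condition.

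The crux of the proof is the structural claim: whenever $x$ violates some constraint of \eqref{eq:LPS} at time $t$, there exists $S \in \cS(t)$ with $\sum_{p \in S-p_t} x_p(r(p,t)) < q(S)$ and $x_p(r(p,t)) < 1$ for all $p \in S-p_t$. I would prove this by a reduction argument. Starting from any violated $S^* \in \cS(t)$, let $T = \{p \in S^*-p_t : x_p(r(p,t)) = 1\}$ and set $S' = S^* \setminus T$. The key inequality is $q(S^*) \leq q(S') + |T|$, which follows because if $E \subseteq S'$ is an optimal eviction set for $S'$ (so $f(S' \setminus E) \leq k$ and $|E| = q(S')$), then $E \cup T \subseteq S^*$ is a valid eviction set for $S^*$: $f(S^* \setminus (E \cup T)) = f(S' \setminus E) \leq k$, and $E, T$ are disjoint. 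Combined with $\sum_{p \in S^*-p_t} x_p(r(p,t)) = \sum_{p \in S'-p_t} x_p(r(p,t)) + |T|$, this yields that $S'$ is also violated. Furthermore $S' \in \cS(t)$: for otherwise $f(S') \leq k$ would make $T$ itself a valid eviction of $S^*$, giving $q(S^*) \leq |T| \leq \sum_{p \in S^*-p_t} x_p(r(p,t))$ and contradicting the violation. Iterating strictly decreases the number of fully evicted pages in the violated set, producing the required $S$.

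Finally, the outer loop terminates: by the structural claim, \Cref{step:fQS} succeeds whenever $x$ is infeasible, and each outer iteration either satisfies a previously violated constraint (which remains satisfied thanks to monotonicity of $x$) or sets another $x_p(r(p,t))$ to $1$; both progress measures are bounded by polynomially many options. When the outer loop exits, its guard directly certifies that $x$ satisfies all constraints of \eqref{eq:LPS} at time $t$, and repeating the argument over every $t \in T$ yields the lemma. The main obstacle is the structural claim, and within it the subadditivity-type bound $q(S^*) \leq q(S \setminus T) + |T|$, which drives the entire correctness argument and is the step that justifies why greedily increasing primal variables on sets without fully-evicted pages suffices to cover all violating sets.
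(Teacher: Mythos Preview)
Your proof is correct and follows essentially the same approach as the paper: the key step in both is the inequality $q(S) \leq |T| + q(S \setminus T)$ for $T$ the set of fully-evicted pages in $S$ (this is exactly the paper's auxiliary claim), used to reduce any violated constraint to one with no fully-evicted pages that \Cref{step:fQS} can select. One minor caveat: your assertion that the progress measures are ``bounded by polynomially many options'' is not justified, since there are exponentially many sets in $\cS(t)$; however, the paper does not address termination either, and this point does not affect the feasibility argument itself.
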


\begin{proof}
	Let $t \in T$ and $S \in \cS(t)$. We show that $x$ satisfies the corresponding primal constraint of $t,S$. Let $U = \left\{p \in S~|~x_p(r(p,t)) = 1\right\}$ be the set of fully-evicted pages in $S$ (observe that $p_t \notin U$ since $x_{p_t}(r(p,t)) = 0$). We use the following auxiliary claims. 
	
		\begin{claim}
		\label{claim:r<44}
		%$\cS\left(S \setminus U\right) \subseteq \left\{S' \in \cS(S)~|~S' \setminus U\right\}$
		For every $S' \in \cC(S \setminus U)$ it holds that $S' \cup U \in \cC(S)$.
	\end{claim}
	\begin{claimproof}
		It holds that 
		$$f \left(S \setminus \left(S' \cup U\right)\right) = f \left((S \setminus U) \setminus S'\right) \leq k.$$
		The inequality holds since $S' \in \cC(S \setminus U)$ (see the definition of the sets $\cC(S),\cC(S \setminus U)$ above \eqref{eq:q(S)}). By the definition of $\cC(S)$ it holds that $S' \cup U \in \cC(S)$.
	\end{claimproof}
	The following claim relies on Claim~\ref{claim:r<44}. 
		\begin{claim}
		\label{claim:r<45}
		%$\cS\left(S \setminus U\right) \subseteq \left\{S' \in \cS(S)~|~S' \setminus U\right\}$
	$q(S) \leq |U|+q(S \setminus U)$. 
	\end{claim}
	\begin{claimproof}
		It holds that 
		\begin{equation*}
			\begin{aligned}
			q(S) ={} & \min_{S' \in \cC(S)} |S'| 
			\\={} & \min_{S' \in \cC(S)} \left( |S' \cap U|+|S' \setminus U| \right) 
			\\\leq{} &  
			\min_{S' \in \cC(S) \text{ s.t. } U \subseteq S'} \left( |S' \cap U|+|S' \setminus U| \right) 
				\\={} &  
				|U|+\min_{S' \in \cC(S) \text{ s.t. } U \subseteq S'}|S' \setminus U| 
					\\\leq{} &  
				|U|+\min_{S' \in \cC(S \setminus U)} |S'| 
					\\={} &  
				|U|+q(S \setminus U). 
			\end{aligned}
		\end{equation*} The first inequality holds by restricting the minimum to a subset of the set $\cC(S)$; note that there is $S' \in \cC(S)$ such that $U \subseteq S'$ since $S \in \cC(S)$ and $U \subseteq S$. The second inequality follows from Claim~\ref{claim:r<44}. 
	\end{claimproof}
	
	Therefore, 
	$$\sum_{p \in S-p_t} x_p(r(p,t)) = \sum_{p \in U} x_p(r(p,t))+\sum_{p \in (S\setminus U)-p_t} x_p(r(p,t)) \geq |U|+q(S \setminus U) \geq q(S).$$
	The first inequality holds by the following. First, since every page in $U$ is fully evicted. Second, since $$\sum_{p \in (S\setminus U)-p_t} x_p(r(p,t)) \geq q(S \setminus U)$$ as non of the pages in $S \setminus U$ is fully-evicted; thus, the constraint of $t,S-U$ must be satisfied by the stopping condition of \Cref{step:InWhileS}.  The last inequality follows from Claim~\ref{claim:r<45}. 
\end{proof}

We now show the feasibility of the dual solution $y$ using the growth rate of the variables in the algorithm. 
	\begin{lemma}
	\label{lem:FFeasibleS}
	\Cref{alg:fractionalS} returns a feasible dual solution to \eqref{eq:LPS}. 
\end{lemma}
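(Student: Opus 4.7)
The plan is to mirror the argument used for \Cref{lem:FFeasible}, leveraging the exponential update rule in \Cref{step:fxS} together with the stopping condition of the inner loop in \Cref{step:InWhileS}. I want to show that for every $p \in \cP$ and every $j \in [n_p]$, the quantity $Y_p(j) = \sum_{t \in I(p,j)} \sum_{S \in \cS(t) \mid p \in S - p_t} y_t(S)$ never exceeds $c(p)$, which is exactly the dual constraint in \eqref{eq:dualS}.

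First I would observe that $Y_p(j)$ can only grow at times $t \in I(p,j)$ (so $r(p,t) = j$) while the algorithm is executing the inner while loop at \Cref{step:InWhileS} for some set $S \in \cS(t)$ with $p \in S - p_t$; these are the only dual variables that appear in $Y_p(j)$. By the guard of \Cref{step:InWhileS}, throughout such an increase we must have $x_p(r(p,t)) < 1$ for every $p \in S - p_t$, and in particular for our fixed $p$.

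Next, invoking the update rule at \Cref{step:fxS}, the value of $x_p(j)$ is always equal to the deterministic function $\frac{1}{\mu}\left(\exp\!\left(\frac{\ln(n+1)}{c(p)} \cdot Y_p(j)\right)-1\right)$ of $Y_p(j)$. Plugging the bound $x_p(j) < 1$ into this identity yields
\[
\exp\!\left(\frac{\ln(n+1)}{c(p)} \cdot Y_p(j)\right) < \mu + 1,
\]
and taking logarithms gives $Y_p(j) \le \frac{\ln(\mu+1)}{\ln(n+1)} \cdot c(p)$. Since $\mu \le |\cP| = n$, the ratio $\frac{\ln(\mu+1)}{\ln(n+1)}$ is at most $1$, hence $Y_p(j) \le c(p)$ at every moment in which $Y_p(j)$ is being increased. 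As $Y_p(j)$ is monotone nondecreasing and bounded by $c(p)$ throughout, the bound persists until termination, so the returned $y$ satisfies all dual constraints.

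I do not anticipate a real obstacle here: the argument is a direct one-line inversion of the exponential growth formula, whose purpose is precisely to enforce dual feasibility. The only subtle point is to notice that the exponent uses $\ln(n+1)$ rather than $\ln(\mu+1)$ as in \Cref{alg:fractional}, but this only strengthens the bound by the factor $\frac{\ln(\mu+1)}{\ln(n+1)} \le 1$, which I would note in a single sentence at the end.
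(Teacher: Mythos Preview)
Your proposal is correct and follows the same inversion-of-the-exponential argument as the paper: bound $x_p(j)\le 1$ via the guard of \Cref{step:InWhileS}, then solve the update formula for $Y_p(j)$ to get $Y_p(j)\le c(p)$. The only cosmetic difference is that the paper's proof tacitly treats the exponent in \Cref{step:fxS} as $\ln(\mu+1)$ (so the inequality $Y_p(j)\le c(p)$ falls out directly), whereas you take the pseudocode's $\ln(n+1)$ at face value and close the gap with the observation $\mu\le n$; either way the conclusion is identical.
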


\begin{proof}
	%To prove that the dual $y$ is feasible, 
	Consider some $p \in \cP$ and $j \in [n_p]$. By the update rate of $x_p(j)$ in \Cref{step:fxS} it follows that 
	\begin{equation}
		\label{eq:FromSAPS}
		\frac{1}{\mu} \cdot \left(     \exp \left(  \frac{\ln (\mu+1)}{c(p)} \cdot Y_p(j)\right)-1\right) = x_p(j) \leq 1.
	\end{equation} 
	The inequality holds since $Y_p(j)$ does not increase in time $t$ if $p$ is already fully evicted at this time by \Cref{step:InWhileS}. Thus, by simplifying the expression in \eqref{eq:FromSAPS} it holds that 
	$$\exp \left( \frac{\ln (\mu+1)}{c(p)} \cdot Y_p(j) \right) \leq \mu+1.$$
	by taking $\ln$ over both sides it follows that $Y_p(j) \leq c(p)$. Hence, $y$ is a feasible dual solution. 
\end{proof}

We finally prove the competitive ratio of the algorithm.

%	\begin{lemma}
%	\label{thm:logL1S}
%	%	\Cref{alg:fractional} is $O(\log (n))$-\textnormal{competitive}. 
%	The cost of $x$ is bounded by $O(\log (\mu))$ times the value of the dual $y$.  
%\end{lemma}

\subsubsection*{Proof of \Cref{thm:MU}}

\begin{proof}
	Consider an infinitesimal increase in the value of the dual solution $y$. Specifically, assume that the algorithm chooses a set $S$ in \Cref{step:fQS} for time time step $t$, and that the dual variable $y_t(S)$ increases infinitesimally by $d y_t(S)$. Let $dx$ and $dy$ denote the infinitesimal change in the objective value of $x$ and $y$, respectively. We bound the increase $dx$ in $x$ as a function of the increase $dy$.   
	\begin{equation}
		\label{eq:dxS}
		\begin{aligned}
			dx ={} & \sum_{p \in S-p_t} d x_p(r(p,t)) \cdot c(p) \\
			={} & \sum_{p \in S-p_t}  \frac{d x_p(r(p,t)) \cdot c(p) \cdot d y_t(S)}{d y_t(S)} \\
			={} & \sum_{p \in S-p_t} \ln \left(\mu+1 \right)\ \cdot \left( x_p(r(p,t))+\frac{1}{\mu} \right) \cdot d y_t(S). 
		\end{aligned}
	\end{equation} The first equality holds since the increase in $y_t(S)$ induces an increase only for the primal variables corresponding to pages in $S-p_t$. The last equality follows from the growth rate of a variable $x_p(r(p,t))$, for some $p \in S-p_t$ as a result of the growth in $y_t(S)$. %(for more details see~\cite{bansal2012primal,bansal2012randomized}). 
	We separately analyze the two expressions in \eqref{eq:dxS}. First, since at the considered moment in time $y_t(S)$ increases, by \Cref{step:InWhileS} it holds that 
	\begin{equation}
		\label{eq:InwhileFS}
		\sum_{p \in S-p_t} x_p(r(p,t)) \leq q(S)%\cdot g_Q(p) \leq 	\sum_{p \in \cP-p_t} x_p(r(p,t)) \cdot g_Q(p) < N-g(Q)
		%\sum_{p \in Q-p_t} x_p(r(p,t)) < 1.
	\end{equation} 
	
	For the second expression, let $S' \in \cC(S)$ such that $|S'| = q(S)$; clearly, by \eqref{eq:q(S)} there is such $S'$. By the definition of $\cC(S)$ it holds that $f(S \setminus S') \leq k$; thus, $|S \setminus S'| \leq \mu$. Therefore,  
	\begin{equation}
		\label{eq:WidthInS}
		\sum_{p \in S-p_t} \frac{1}{\mu} = \frac{ \left| S \right|-1}{\mu} = \frac{ \left| S' \right|+\left|S \setminus S'\right|-1}{\mu} \leq \frac{ q(S)+\mu-1}{\mu} \leq q(S)+1 \leq 2 \cdot q(S).  
	\end{equation} For the last inequality, note that $q(S) \geq 1$ since we assume that $y_t(S)$ increases implying that the corresponding constraint of $t$ and $S$ is not satisfied. Therefore, by \eqref{eq:dxS}, \eqref{eq:InwhileFS}, and~\eqref{eq:WidthInS},
	\begin{equation}
		\label{eq:PDESS}
		\begin{aligned}
			dx \leq{} &  \ln \left(\mu+1 \right) \cdot d y_t(S) \cdot  \left(	\sum_{p \in S-p_t} x_p(r(p,t)) + 	\sum_{p \in S-p_t} \frac{1}{\mu} \right) \\
			={} & 3 \cdot \ln \left(\mu+1 \right) \cdot d y_t(S) \cdot q(S)\\
			={} & O(\log (\mu)) \cdot dy.
		\end{aligned}
	\end{equation} Thus, by \eqref{eq:PDESS}, every increase in $y$ increases $x$ by a factor of at most $O(\log (\mu))$. By \Cref{lem:FFeasibleS,lem:PFeasibleS} the algorithm also returns feasible primal and dual solutions, implying the statement of the theorem.  
	%Finally, note that if $x_p(j) \geq \frac{1}{2}$ then we immediately increase $x_p(j)$ to $1$; this increase the total cost of $x$ by a factor of $2$ w.r.t. the value of $y$. 
	%Since $x$ and $y$ are feasible primal and dual solutions by \Cref{lem:FFeasible}, the proof follows from \Cref{lem:relaxation2}. 
\end{proof}

\section{Omitted Proofs}
\label{sec:omit}

\subsubsection*{Proof of \Cref{claim:z}}
	The proof resembles the proof of Claim 3.10 in \cite{coester2022competitive}. Assume that there are $t \in T$ and $S \subseteq \cP$ such that $x'$ violates the constraint of the LP \eqref{eq:LP2} corresponding to $t$ and  $S$ and that there is $q \in \cP \setminus S$ such that $x'_q(r(q,t)) = 1$. We show that $x'$ also violates the constraint corresponding to $t$ and $S+q$.  %As $x'$ satisfies all minimal constraints, there is 
	\begin{equation*}
		\label{eq:PQPQPQ}
		\begin{aligned}
	 N-g(S) >{} & \sum_{p \in \cP-p_t} x'_p(r(p,t)) \cdot g_S(p)
	 = x'_q(r(q,t)) \cdot g_S(q)+\sum_{p \in \cP-p_t-q} x'_p(r(p,t)) \cdot g_S(p)\\
	 ={} & g_S(q)+\sum_{p \in \cP-p_t-q} x'_p(r(p,t)) \cdot g_S(p)
	  \geq g_S(q)+\sum_{p \in \cP-p_t-q} x'_p(r(p,t)) \cdot g_{S+q}(p).\\
		\end{aligned}
	\end{equation*} The first inequality holds since we assume that $x'$ violates the primal constraint corresponding to $t$ and $S$. The second equality follows because $x'_q(r(q,t)) = 1$. The second inequality follows from the submodularity of $g$. As $g(S+q) = g(S)+g_{S}(q)$, by the above inequality we conclude that $x'$ also violates the constraint corresponding to $t$ and $S+q$. Hence, if $x'$ satisfies all minimal constraints it also satisfies all non-minimal constraints. \qed

\subsubsection*{Proof of \Cref{lem:primalX}}
 By \Cref{claim:z} we only need to show that $x$ satisfies all minimal constraints. %Observe that \Cref{step:Whilefeasible} along with the submodularity of $g$ ensures that the algorithm returns a feasible (fractional) solution $x$ to~\eqref{eq:LP2}. Specifically, 
If $x$ is not feasible at time $t$, the algorithm is guaranteed to find a minimal set $Q$ in \Cref{step:fQ} such that the primal constraint for $t$ and $Q$ is not satisfied. Then, the algorithm increases the variables of pages in $(\cP-p_t) \setminus Q$ until they satisfy the constraint of $Q,t$ by  \Cref{step:Whilefeasible} or until $Q$ is no longer minimal; as we only need to prove that $x$ satisfies all minimal constraints by \Cref{claim:z}, either the constraint corresponding to $t,Q$ is satisfied, or this constraint is not minimal anymore. %(i.e., $Q$ is not minimal). 
Thus, $x$ satisfies all minimal constraints, yielding the proof. \qed

\subsubsection*{Proof of \Cref{lem:zFeasible}}
 	By \Cref{claim:z} we only need to show that $z$ satisfies all minimal constraints of \eqref{eq:LP2}. Let $t \in T$ and let $S \subseteq \cP$ be such that $t$ and $S$ form a minimal constraint for $z$ in  \eqref{eq:LP2}. If $g(S) = N$ then the constraint of $t$ and $S$ is trivially satisfied by $z$; thus, assume that $N-g(S) \geq 1$. As the constraint of $t$ and $S$ is minimal for $z$, for all $p \in \cP\setminus S$ it holds that $z_p(r(p,t)) < 1$; thus, by \Cref{step:1/2} it follows that $z_p(r(p,t)) \leq \frac{1}{2}$. Hence,  by \Cref{step:z} for all $p \in \cP\setminus S$ it holds that 
 $z_p(r(p,t)) \geq 2 \cdot x_p(r(p,t))-\frac{1}{4 \cdot N \cdot \mu}$.
 Therefore,
\begin{equation}
	\label{eq:z2}
	\begin{aligned}
		 \sum_{p \in \cP-p_t} g_S(p) \cdot z_p(r(p,t))
		\geq{} &   \sum_{p \in \cP-p_t} g_S(p) \cdot \left( 2 \cdot x_p(r(p,t))-\frac{1}{4 \cdot N \cdot \mu} \right)\\
		={} & 2 \cdot \sum_{p \in \cP-p_t} g_S(p) \cdot x_p(r(p,t))-\sum_{p \in \cP-p_t} g_S(p) \cdot \frac{1}{4 \cdot N \cdot \mu} \\
		\geq{} & 2 \cdot \sum_{p \in \cP-p_t} g_S(p) \cdot x_p(r(p,t))-N \cdot \mu \cdot \frac{1}{4 \cdot N \cdot \mu} \\
			\geq{} & 2 \cdot \left(N-g(S)\right)-\frac{1}{4} 
				\geq N-g(S). 
	\end{aligned}
\end{equation}
%The first inequality holds by \eqref{eq:A1}. 
The second inequality holds since the marginal contribution of any page $p$ to $S$ is bounded by the total cover demand $N$, i.e., $g_S(p) \leq N-g(S) \leq N$. The third inequality holds since $x$ is a feasible solution for the LP by \Cref{lem:primalX}.   The last inequality follows from the assumption $N-g(S) \geq 1$. By \eqref{eq:z2} the proof follows. \qed

\subsubsection*{Proof of \Cref{lem:FFeasible}}
To prove that the dual $y$ is also feasible, consider some $p \in \cP$ and $j \in [n_p]$. By the update rate of $x_p(j)$ in \Cref{step:fx} it follows that 
		\begin{equation}
		\label{eq:FromSAP}
		\frac{1}{\mu} \cdot \left(     \exp \left(  \frac{\ln (\mu+1)}{c(p)} \cdot Y_p(j)\right)-1\right) = x_p(j) \leq 1.
	\end{equation} 
	The inequality holds since $Y_p(j)$ does not increase in time $t$ if $p$ is already fully evicted at this time by \Cref{step:InWhile}; that is, once $x_p(r(p,t)) = 1$ it holds that $Q$ is no longer minimal and the algorithm chooses a new minimal set $Q'$ that includes $p$. Thus, by simplifying the expression in \eqref{eq:FromSAP} it holds that 
	$$\exp \left( \frac{\ln (\mu+1)}{c(p)} \cdot Y_p(j) \right) \leq \mu+1.$$
	by taking logarithms from both sides $Y_p(j) \leq c(p)$. Hence, $y$ is a feasible dual solution. \qed

\subsubsection*{Proof of \Cref{lem:ALGf}}
The feasibility of $z$ follows from \Cref{lem:zFeasible}. Moreover, the first property follows from \Cref{step:1/2,step:z} of the algorithm. For the second property, since $x$ and $y$ are feasible primal and dual solutions by \Cref{lem:primalX} and \Cref{lem:FFeasible}, it follows that the cost of $x$ is bounded by $O(\log (\mu))$ times the cost of $\OPT$. Thus, the proof follows from Observation~\ref{obs:z}. \qed

\subsubsection*{Proof of \Cref{lem:ALG3}}
The feasibility of the algorithm follows from \Cref{step:whileFe}. To bound the expected cost of the algorithm, consider some time step $t \in T$. We bound the expected cost paid at time $t$ by the algorithm on evictions, w.r.t. the cost paid by $z'$ on fetching. Let $z'_p(t-1)$ be the value of the variable $z'_{p_t}(r(p_t,t))$ at time $t-1$, i.e., indicating the portion of $p$ missing from cache at time $t-1$. If $z'_p(t-1) = 0$, i.e., $p_t$ is in the cache when it is requested at time $t$, then our integral solution obtained in \Cref{alg:randomizedRounding} also does not evict $p_t$. It follows that we do not incur a cost at time $t$ if $z'_p(t-1)) = 0$. Otherwise, by \Cref{lem:ALGf} it holds that $z'_p(t-1) \geq \frac{1}{4 \cdot N \cdot \mu}$. 
	
	Let $c_{\min}, c_{\max}$ be the minimum and maximum costs of pages, thus $C = \frac{c_{\max}}{c_{\min}}$.  As we assume that $z'_p(t-1) \geq \frac{1}{4 \cdot N \cdot \mu}$, it follows that the cost of the solution $z'$ increases by at least $\frac{c_{\min}}{4 \cdot N \cdot \mu}$ at time $t$ by fetching the fraction of $p_t$ missing from cache. At the same time, observe that each time that the algorithm enters \Cref{step:reach}, the total cost it pays is bounded by $c_{\max} \cdot \mu$ as it can evict at most $\mu$ pages from cache as the cache has been feasible until this time.  Moreover, by \Cref{lem:Gupta}, the probability that the algorithm enters \Cref{step:reach} in time $t$ is bounded by $\frac{1}{2 \cdot C \cdot N \cdot \mu^2}$. Thus, the expected cost paid by the algorithm at time $t$ is bounded by $\frac{c_{\max} \cdot n}{2 \cdot C \cdot N \cdot \mu^2} \leq \frac{c_{\min}}{2 \cdot N \cdot \mu}$. Therefore, at each time step the expected cost of the algorithm is bounded by a constant factor of the expected cost of $z'$. In addition, the cost of $z'$ is bounded by $\alpha$ times the cost of $z$. Therefore, the proof follows from the competitive ratio of $z$ described in \Cref{lem:ALGf}. \qed

	\subsubsection*{Proof of \Cref{thm:approximation}}
	
	using the ``round-or-separate''
	approach of \cite{gupta2020online} we can solve our LP \eqref{eq:LP2} offline approximately in polynomial time with $O(\log f(\cP))$-approximation ratio. Using \Cref{lem:Gupta}  and the analysis in \Cref{lem:ALG3}, this implies the statement of \Cref{thm:approximation}.  \qed

\end{document}